\documentclass[aps,prx,letterpaper,reprint,onecolumn,floatfix,superscriptaddress,notitlepage,usenames,dvipsnames,svgnames,x11names,table,nofootinbib,longbibliography]{revtex4-2}
\pdfoutput=1
\usepackage[utf8]{inputenc}
\usepackage[T1]{fontenc}
\usepackage{graphicx}
\usepackage{grffile}
\usepackage{wrapfig}
\usepackage{rotating}
\usepackage[normalem]{ulem}
\usepackage{amsmath}
\usepackage{textcomp}
\usepackage{amssymb}
\usepackage{pifont}
\newcommand{\cmark}{\ding{51}}
\newcommand{\xmark}{\ding{55}}

\usepackage{capt-of}
\usepackage{verbatim}

\usepackage{parskip}
\usepackage{mathrsfs}
\usepackage[margin= 0.75in]{geometry}
\usepackage[braket, qm]{qcircuit}

\usepackage[english]{babel}
\usepackage{letltxmacro}
\LetLtxMacro{\ORIGselectlanguage}{\selectlanguage}
\makeatletter
\DeclareRobustCommand{\selectlanguage}[1]{%
  \@ifundefined{alias@\string#1}
    {\ORIGselectlanguage{#1}}
    {\begingroup\edef\x{\endgroup
      \noexpand\ORIGselectlanguage{\@nameuse{alias@#1}}}\x}%
}
\newcommand{\definelanguagealias}[2]{%
  \@namedef{alias@#1}{#2}%
}
\makeatother
\definelanguagealias{en}{english}

\usepackage[autostyle,english=american]{csquotes}
\MakeOuterQuote{"}

\usepackage{times}
\usepackage{bbm}
\usepackage{etoolbox}

\usepackage{tcolorbox} 
\tcolorboxenvironment{definition}{colback=RoyalBlue!5!white,colframe=RoyalBlue!75!black}
\tcolorboxenvironment{theorem}{colback=pink!25!white,colframe=pink!100!black}
\tcolorboxenvironment{lemma}{colback=yellow!5!white,colframe=yellow!75!black}
\tcolorboxenvironment{corollary}{colback=Dandelion!5!white,colframe=Dandelion!75!black}
\tcolorboxenvironment{proposition}{colback=Emerald!5!white,colframe=Emerald!75!black}
\tcolorboxenvironment{claim}{colback=RoyalBlue!5!white,colframe=RoyalBlue!75!black}
\tcolorboxenvironment{conjecture}{colback=red!5!white,colframe=red!75!black}
\tcolorboxenvironment{example}{colback=white,colframe=lightgray}
\tcolorboxenvironment{remark}{colback=white,colframe=lightgray}
\tcolorboxenvironment{question}{colback=lightgray!5!white,colframe=lightgray!75!black}

\usepackage{fleqn} 
\usepackage{tikz}
\usepackage{amsthm}
\usepackage{amssymb}
\usetikzlibrary{arrows.meta}
\usepackage{mathtools}

\newcommand{\Tr}{\textnormal{Tr}}

\newcommand{\prlsection}[1]{{\em {#1}.---~}}

\newtheorem{theorem}{Theorem}
\newtheorem{proposition}{Proposition}
\newtheorem{definition}{Definition}
\newtheorem{lemma}{Lemma}
\newtheorem{corollary}{Corollary}

\newcommand{\red}{\textcolor{BrickRed}}
\newcommand{\blue}{\textcolor{RoyalBlue}}
\newcommand{\green}{\textcolor{ForestGreen}}

\usepackage{bm}
\usepackage{dsfont}
\usepackage{ragged2e}
\usepackage[font=scriptsize,labelfont=bf,justification=justified,format=plain]{subcaption}
\usepackage[font=scriptsize,labelfont=bf,justification=justified,format=plain]{caption}
\DeclareCaptionJustification{reallyjustified}{\justifying}
\usepackage{thmtools}
\usepackage{thm-restate}

\definecolor{blueviolet}{rgb}{0.54,0.17,0.89}
\definecolor{mycitecolor}{rgb}{0,0.7,0.1}
\usepackage[pdfusetitle]{hyperref}
\hypersetup{pdflang={English},colorlinks=true,linkcolor=RoyalBlue,citecolor=RoyalBlue,urlcolor=RoyalBlue,breaklinks=true}
\usepackage[nameinlink,capitalize,capitalise]{cleveref}
\usepackage{orcidlink}
\graphicspath{{figures/}}

\begin{document}
\title{Qudit Designs and Where to Find Them}

\author{Namit Anand\orcidlink{0000-0003-4116-4581}}
\thanks{Current address: HPE Quantum, Emergent Machine Intelligence, HPE Labs. Email: namitana@usc.edu (Namit Anand)}
\affiliation{Quantum Artificial Intelligence Laboratory, NASA Ames Research Center, Moffett Field, CA 94035, USA}
\affiliation{KBR, Inc., 601 Jefferson St., Houston, TX 77002, USA}

\author{Jeffrey Marshall}
\affiliation{Quantum Artificial Intelligence Laboratory, NASA Ames Research Center, Moffett Field, CA 94035, USA}
\affiliation{USRA Research Institute for Advanced Computer Science, Mountain View, CA 94043, USA}

\author{Jason Saied}
\affiliation{Quantum Artificial Intelligence Laboratory, NASA Ames Research Center, Moffett Field, CA 94035, USA}

\author{Eleanor Rieffel}
\affiliation{Quantum Artificial Intelligence Laboratory, NASA Ames Research Center, Moffett Field, CA 94035, USA}

\author{Andrea Morello}
\affiliation{School of Electrical Engineering and Telecommunications, UNSW Sydney, Sydney, NSW 2052, Australia}
\affiliation{ARC Centre of Excellence for Quantum Computation and Communication Technology}

\date{\today}

\begin{abstract}
Unitary t-designs are some of the most versatile tools in quantum information theory. Their applications range from randomized benchmarking and shadow tomography, to more fundamental ones such as emulating quantum chaos and establishing exponential separations between classical and quantum query complexity. While unitary designs originating from a group structure, such as the Clifford group, have proven to be incredibly useful for qubit systems, unfortunately, this is no longer true for qudits. In fact, the classification of finite-group representations rules out the existence of unitary 2-designs for arbitrary qudit dimensions. This severely limits the applicability of standard quantum information primitives when it comes to qudit systems.

We overcome these limitations with a three-fold contribution. First, we introduce a general technique to construct families of weighted state t-designs in arbitrary qudit dimensions. These weighted state-designs generalize classical shadow tomography protocol from qubits to qudits. Second, we introduce a Clifford character RB that allows us to benchmark the qudit Clifford group in any dimension, including non-prime-power dimensions. And third, we establish bounds on the quantum circuit complexity of generating approximate unitary-designs from native gates in existing quantum hardware such as high-spin and cavity-QED qudits. Our work further highlights the analogy between spin and optical coherent states by proving that spin-GKP codewords form a state 2-design while spin coherent states do not; in direct analogy with the optical case. This work is structured as a pedagogical and self-contained introduction to unitary designs and their applications to qudit systems.

\end{abstract}

\maketitle
\tableofcontents

\section{Introduction}
\label{sec:intro}

It is difficult to overemphasize the utility of randomness in computation \cite{Hayes_randomness_2001}. In classical computation, (pseudo)randomness shows up in a variety of tasks, the most common perhaps being cryptographic methods, where the security of several protocols fundamentally relies on randomness. In quantum computation, the implications are no less severe. Quantum systems host two different types of randomness, fundamental and statistical. Fundamental being those arising from the structure of quantum theory, such as the noncommutativity of operators, measurement-disturbance relations etc. Statistical (or mixing) being the classical uncertainty on top of the fundamental randomness. This distinction is important when trying to model certain types of randomness, from the well-known random quantum circuit sampling ideas to quantum pseudorandomness \cite{kretschmer2021quantum}. There are various approaches to quantifying how well one can approximate true randomness. For the case of "truly" random quantum processes, often called Haar-random unitaries, this has a well-accepted definition, that of unitary t-designs \cite{gross_evenly_2007}. Unitary t-designs have a simple interpretation: if you're only given access to $t$ copies of a unitary ensemble, they are indistinguishable from Haar-random unitaries. While their cryptographic interpretation is elegant, they also have a beautiful mathematical structure underneath, that of designs. Designs have a rich history in mathematics, often called "cubature rules" \cite{Kuperberg2006}. In particular, spherical designs and their connection to platonic solids have long been appreciated as geometric structures. In quantum information theory, unitary designs have found a multitude of applications, from randomized benchmarking \cite{Emerson2005,Emerson2007,Knill2008,Dankert2009,Magesan2011,Helsen2022} and optimal tomography schemes  \cite{scott_optimizing_2008,roy_unitary_2009} to classical shadow tomography \cite{huang_predicting_2020}, quantum information scrambling \cite{roberts_chaos_2017}, quantum chaos \cite{leone_isospectral_2020}, decoupling methods \cite{szehr2013decoupling}, and even exponential speedups in query complexity \cite{brandao_exponential_2013}. See Ref. \cite{Mele2024introductiontohaar} for an excellent introduction to Haar measure, unitary designs and their applications. An intriguing connection that has emerged in recent years is the relationship between unitary designs and so-called "pseudorandom unitaries" \cite{metger_simple_2024}.

\subsection{Summary of contributions}
This work is written semi-pedagogically, with the intent of bringing together useful results about unitary designs and qudits. One of the challenges here is that quantum information primitives for qudits often come with their own limitations on the dimensionality. For example, certain techniques work for prime-dimensional qudits, while certain techniques work for prime dimensions except two, while others work for prime-power dimensions. Very few work for arbitrary qudit dimensions, which is the focus of this work. Many of these subtleties about the dimensions are summarized in \cref{table:examples-dimensions}.

\cref{sec:intro} introduces unitary designs, complex projective designs or state designs, and their connections with geometric objects such as SIC-POVMs and MUBs. We briefly discuss the notion of approximate designs which is not necessarily tied to a natural group or geometric structure. We focus on exact group designs and representation theory as well as proving some new results about weighted group designs.

\cref{sec:qudits} reviews qudits and various notions of the qudit Clifford group, including the Galois-Clifford group. We discuss the projection lemma that takes a uniform state t-design in dimension \(d\) and projects it onto a weighted t-design in any smaller dimension. We use this to construct families of weighted state 2- and 3-designs in all dimensions. 

\cref{sec:Application to spin and cavity-QED qudits} focuses on experimental platforms for qudits, in particular spin and cavity-QED qudits. We discuss the native gates in both hardware and how to generate unitary designs from these for high-dimensional qudits. We discuss two results that highlight the analogy between spins and bosons, namely that just like optical coherent states, spin-coherent states also cannot form a state 2-design. However, spin-GKP states, like bosonic GKP states, do form a state 2-design. 

\cref{sec:qudit character RB} focuses on the fact that standard Clifford RB which is ubiquitous in qubit systems does not work for qudit systems with a dimension that is not a prime-power. To fix this, we introduce a Clifford character RB that treats all qudit dimensions in essentially the same way. This utilizes the structure of the irreps of the qudit Clifford group.

\cref{sec:fractional designs section} introduces the notion of fractional t-designs which can be used to quantify the closeness to an exact t-design. The cyclic group is used as an example of a "half-design".

\cref{sec:closing} focuses on various applications of these techniques as well as open questions for the future.

\subsection{Unitary designs}
\label{subsec:unitary-designs-intro}
We start by \textit{defining} as ``maximally uniform'' the distribution of unitaries/states that are distributed according to the Haar measure (on the unitary group/state space, respectively). Many of the technical results here are commonplace in the theory of Haar integrals and unitary designs, see e.g., \cite{webb_clifford_2016} for a wonderful introduction. Let \(\mathcal{H} \cong \mathbb{C}^{d}\) be the Hilbert space of our qudit system. Recall that the Haar measure \(\mu\) is the unique, group-invariant, normalized measure over the unitary group, \(U(d)\). That is,
\begin{align}
\quad \mu(U(d)) &= 1; \quad \text{and} \quad \mu (VU) = \mu(UV) = \mu(U) ~~\forall V \in U(d), U \subset U(d).
\end{align}

Suppose we are given an ensemble of unitaries, \(\mathcal{E} = \{ p_{j}, U_{j} \}_{j}\), where \(\sum\limits_{j}^{} p_{j} = 1\) is a probability distribution. Then the \(t\)-fold \textit{twirl} of the ensemble \(\mathcal{E}\) is simply the quantum channel,
\begin{align}
\Phi_{\mathcal{E}}^{(t)} (\cdot) := \sum\limits_{j}^{} p_{j} U_{j}^{\dagger \otimes t} (\cdot) U^{\otimes t}.\label{def:twirl}
\end{align}
The ensemble \(\mathcal{E}\) is a \(t\)-design if and only if the action of the twirled channel is the same as twirling over the entire unitary group with the Haar measure; namely \(\Phi_{\mathcal{E}}^{(t)}(X) = \Phi_{\mathrm{Haar}}^{(t)}(X) ~~\forall X \in \mathcal{L}(\mathcal{H}^{\otimes t})\). Here,
\begin{align}
\Phi^{(t)}_{\mathrm{Haar}} (\cdot) := \int_{\mathrm{Haar}} dU U^{\dagger \otimes t} (\cdot) U^{\otimes t},
\end{align}
where the integral is over the Haar measure on the unitary group. \cref{def:t-designs-equivalent-def} below summarizes various equivalent notions of unitary designs.

\begin{definition}[Equivalent ways of defining unitary \(t\)-designs]\label{def:t-designs-equivalent-def} An ensemble of unitaries \(\mathcal{E} = \{U_{j}\}\) form a unitary \(t\)-design if averaging over them is equivalent to averaging over the full unitary group. This can be expressed in the following equivalent ways:
\begin{enumerate}
\item Averages over \(t\) copies of the unitary and its adjoint are equal: \(\frac{1}{\left| \mathcal{E} \right|} \sum\limits_{j=1}^{\left| \mathcal{E} \right|} U^{\otimes t}_{j} \otimes U^{\dagger \otimes t}_{j} = \int\limits_{\mathrm{Haar}}^{} U^{\otimes t} \otimes U^{\dagger \otimes t} dU\).

\item Averages over \(t\) copies of the unitary channel are equal: \(\frac{1}{\left| \mathcal{E} \right|} \sum\limits_{j=1}^{\left| \mathcal{E} \right|} U^{\otimes t}_{j} X U^{\dagger \otimes t}_{j} = \int\limits_{\mathrm{Haar}}^{} U^{\otimes t} X U^{\dagger \otimes t} dU\), where \(X \in \mathcal{L}(\mathcal{H}^{\otimes t})\).

\item Correlation functions of order \(t\) are equal: \(\frac{1}{\left| \mathcal{E} \right|} \sum\limits_{j=1}^{\left| \mathcal{E} \right|} \operatorname{Tr} \left( \rho U^{\otimes t}_{j} X U^{\dagger \otimes t}_{j} \right) = \int\limits_{\mathrm{Haar}}^{} \operatorname{Tr} \left( \rho U^{\otimes t} X U^{\dagger \otimes t} \right) dU\), for all \(\rho, X \in \mathcal{L}(\mathcal{H}^{\otimes t})\).

\item Polynomials of order \(t\) in the matrix entries \(u_{i,j}\) and \(\overline{u}_{i,j}\) are equal: \(\frac{1}{\left| \mathcal{E} \right|} \sum\limits_{j=1}^{\left| \mathcal{E} \right|} f_{t}(U) = \int\limits_{\mathrm{Haar}}^{} f_{t}(U) dU\) for any such function \(f_{t}\).

\end{enumerate}
\end{definition}

With these definitions in hand, we are ready to introduce our metric. Given an ensemble \(\mathcal{E}\), the \(t\)th \textit{frame potential} is defined as \cite{roberts_chaos_2017},
\begin{align}
F^{(t)}_{\mathcal{E}} := \frac{1}{\left| \mathcal{E} \right|^{2}} \sum\limits_{U,V \in \mathcal{E}}^{} \left| \operatorname{Tr}\left[ U^{\dagger}V \right] \right|^{2t}.
\end{align}
Along with the quantum channel, \(\Phi_{\mathcal{E}}^{(t)}\), let us
also introduce the closely related \textit{moment operator},
\begin{align}
  \Delta_{\mathcal{E},t}  := \sum\limits_{j}^{} p_{j} U_{j}^{\dagger \otimes t} \otimes U_j^{\otimes t}.
\end{align}
The frame potential of \(\mathcal{E}\) is then simply the squared
Hilbert-Schmidt norm of this operator, that is,
\begin{align}
  F_{\mathcal{E}}^{(t)} = \left\Vert \Delta_{\mathcal{E},t} \right\Vert_{2}^{2}.
\end{align}

\begin{lemma}[Bounds on frame potential]\label{lemma:frame potential}
The frame potential for any ensemble of unitaries is lower bounded by the Haar ensemble, \(F^{(t)}_{\mathcal{E}} \geq F^{(t)}_{\mathrm{Haar}}\). Equality is achieved if and only if the ensemble \(\mathcal{E}\) forms a unitary \(t\)-design. Moreover, \(F^{(t)}_{\mathrm{Haar}} = t! \text{ for } t \leq d\).
\end{lemma}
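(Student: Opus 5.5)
I will prove the three claims by comparing the moment operator \(\Delta_{\mathcal{E},t}\) with its Haar counterpart \(\Delta_{\mathrm{Haar},t} := \int dU\, U^{\dagger\otimes t}\otimes U^{\otimes t}\). The key object is the difference \(D := \Delta_{\mathcal{E},t} - \Delta_{\mathrm{Haar},t}\). Since the frame potential is \(F^{(t)}_{\mathcal{E}} = \|\Delta_{\mathcal{E},t}\|_2^2\), I expand
\begin{align}
\|\Delta_{\mathcal{E},t}\|_2^2 = \|\Delta_{\mathrm{Haar},t}\|_2^2 + \|D\|_2^2 + 2\,\mathrm{Re}\,\Tr\!\left[\Delta_{\mathrm{Haar},t}^{\dagger} D\right].
\end{align}
I will then show that the cross term vanishes. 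By left-invariance of the Haar measure, \(\Delta_{\mathrm{Haar},t}\) absorbs any \(V^{\dagger\otimes t}\otimes V^{\otimes t}\). Hence
\begin{align}
\Tr[\Delta_{\mathrm{Haar},t}^\dagger \Delta_{\mathcal{E},t}] = \sum_j p_j \Tr[\Delta_{\mathrm{Haar},t}^\dagger (U_j^{\dagger\otimes t}\otimes U_j^{\otimes t})] = \Tr[\Delta_{\mathrm{Haar},t}^\dagger \Delta_{\mathrm{Haar},t}].
\end{align}
Concretely, \(\Tr[\Delta_{\mathrm{Haar},t}^\dagger W] = \int dU\, |\Tr(U^\dagger V)|^{2t}\) for \(W = V^{\dagger\otimes t}\otimes V^{\otimes t}\), and this is independent of \(V\) by invariance, using \(\Tr[(A\otimes B)^\dagger(C\otimes D)] = \Tr(A^\dagger C)\Tr(B^\dagger D)\). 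The same computation gives \(\|\Delta_{\mathrm{Haar},t}\|_2^2 = F^{(t)}_{\mathrm{Haar}} = \int dU\, |\Tr U|^{2t}\). Thus \(F^{(t)}_{\mathcal{E}} = F^{(t)}_{\mathrm{Haar}} + \|D\|_2^2 \ge F^{(t)}_{\mathrm{Haar}}\).

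Equality holds iff \(D = 0\), i.e. iff \(\Delta_{\mathcal{E},t} = \Delta_{\mathrm{Haar},t}\). Up to reordering tensor factors and swapping \(U\leftrightarrow U^\dagger\) (relabel \(U\to U^\dagger\), which preserves Haar measure), this is item 1 of \cref{def:t-designs-equivalent-def}. So equality holds exactly when \(\mathcal{E}\) is a unitary \(t\)-design.

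For the value \(F^{(t)}_{\mathrm{Haar}} = \int dU\, |\Tr U|^{2t}\), I use Schur--Weyl duality. Write \(|\Tr U|^{2t} = |\Tr U^{\otimes t}|^2\) and decompose \((\mathbb{C}^d)^{\otimes t} \cong \bigoplus_{\lambda\vdash t,\,\ell(\lambda)\le d} V_\lambda\otimes S_\lambda\). Then \(\Tr U^{\otimes t} = \sum_\lambda \dim(S_\lambda)\,\chi_\lambda(U)\), where \(\chi_\lambda\) is the character of the irreducible \(U(d)\)-representation \(V_\lambda\). By Schur orthogonality, \(\int \chi_\lambda\overline{\chi_\mu}\,dU = \delta_{\lambda\mu}\). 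Hence \(F^{(t)}_{\mathrm{Haar}} = \sum_{\lambda\vdash t,\,\ell(\lambda)\le d} (\dim S_\lambda)^2\). When \(t\le d\), every partition of \(t\) has at most \(d\) rows, so the sum runs over all irreps of \(S_t\) and equals \(|S_t| = t!\). Equivalently, this is the dimension of the commutant of \(U^{\otimes t}\), which is spanned by the \(t!\) linearly independent permutation operators when \(t\le d\).

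The only nontrivial input is this last step, specifically invoking Schur--Weyl duality and the linear independence of the permutations for \(t\le d\). I will cite this as standard rather than reprove it. The first two parts are elementary consequences of Haar invariance.
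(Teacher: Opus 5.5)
Your proof is correct. The paper does not prove this lemma: it is quoted as standard background. The only justification it gives for the value \(t!\) is a later remark, citing Zhu, that \(F^{(t)}_{\mathrm{Haar}}\) counts the permutations of \(\{1,\dots,t\}\) with no increasing subsequence longer than \(d\). When \(t\le d\) that count is all of \(S_t\).

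\begin{itemize}
\item Your identity \(F^{(t)}_{\mathcal{E}} = F^{(t)}_{\mathrm{Haar}} + \|\Delta_{\mathcal{E},t}-\Delta_{\mathrm{Haar},t}\|_2^2\) is the standard self-contained argument. It gives both the inequality and the equality case.
\item Your formula \(F^{(t)}_{\mathrm{Haar}}=\sum_{\lambda\vdash t,\,\ell(\lambda)\le d}(\dim S_\lambda)^2\) agrees with the paper's permutation count. The link is RSK: by Schensted's theorem the longest increasing subsequence has length \(\lambda_1\), and \(f^\lambda=f^{\lambda'}\).
\item Your version has the added benefit of explaining why the value falls below \(t!\) once \(t>d\): partitions with more than \(d\) rows drop out of the sum.
\item Your formula is a sum of \emph{squared} multiplicities, i.e.\ the dimension of the commutant, and that is the correct statement. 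The paper's remark in its representation-theory subsection that \(F^{(t)}_{\mathrm{Haar}}=\sum_i n_i\) agrees with yours only when every \(n_i=1\), i.e.\ \(t\le 2\). For \(t=3\le d\) it would give \(1+2+1=4\) rather than \(1+4+1=6\), contradicting the lemma itself.
\end{itemize}

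One wording slip, which does not affect the proof. With the paper's ordering \(U^{\dagger\otimes t}\otimes U^{\otimes t}\), the map \(U\mapsto U^{\dagger\otimes t}\otimes U^{\otimes t}\) is not multiplicative. So \(\Delta_{\mathrm{Haar},t}\) does not literally absorb \(V^{\dagger\otimes t}\otimes V^{\otimes t}\) under matrix multiplication. For example, \(\Delta_{\mathrm{Haar},1}=\mathbb{S}/d\), which is not idempotent and satisfies \(\Delta_{\mathrm{Haar},1}(V^\dagger\otimes V)\neq\Delta_{\mathrm{Haar},1}\) for non-scalar \(V\). What your argument actually uses, and what you verify correctly, is weaker: \(\Tr[\Delta_{\mathrm{Haar},t}^{\dagger}(V^{\dagger\otimes t}\otimes V^{\otimes t})]=\int dU\,|\Tr U|^{2t}\) is independent of \(V\).
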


Since \(F^{(t)}_{\mathcal{E}} \geq F^{(t)}_{\mathrm{Haar}}\), their
ratio serves as a natural quantifier for the how well an ensemble \(\mathcal{E}\)
approximates the Haar uniform ensemble. A key property of unitary designs is that a unitary/state \(t\)-design is automatically a \(s\)-design for \(s \leq t, s,t \in \mathbb{N}\). This allows us to talk about the "maximal" order of a \(t\)-design with it being implicit that lower orders exist.

There are various closely related notions of the approximability of
the Haar distribution. We introduce two important definitions
here. The first is based on the diamond norm distance between quantum
channels, \(\Phi_{\mathcal{E}}^{(t)}\) and
\(\Phi_{\mathrm{Haar}}^{(t)}\). As is well-known, the diamond norm
distance quantifies the operational distinguishability of two quantum
channels \cite{watrous_theory_2018}. Therefore, an
\(\epsilon\)-approximate unitary design is defined as,
\begin{align}
  \left\Vert \Phi_{\mathcal{E}}^{(t)} - \Phi_{\mathrm{Haar}}^{(t)} \right\Vert_{\Diamond}^{} \leq \epsilon.
\end{align}

A similar definition which is arguably less operational, but equally
natural mathematically is to
consider the operator norm distance between the two moment
operators. This allows us to introduce a quantum \((\eta,t)\)-tensor product
expander (TPE) as,
\begin{align}
  \left\Vert \Delta_{\mathcal{E},t} - \Delta_{\mathrm{Haar},t} \right\Vert_{\infty}^{} \leq \eta.
\end{align}

Given the fact that all norms are equivalent in finite dimensions, one
can easily relate the two definitions above. A more nontrivial result
shows how to iteratively use a \((\eta,t)\)-TPE to generate an
\(\epsilon\)-approximate \(t\)-design.

\begin{theorem}[Iterating tensor product expanders generates unitary designs \cite{nakata_efficient_2017}]
A \((\eta,t)\) quantum tensor product expander can be iterated \(\ell
\geq\{1 /[\log (1 / \eta)]\} \log \left(d^{t} / \epsilon\right)\)
times to generate an \(\epsilon\)-approximate unitary \(t\)-design.
\end{theorem}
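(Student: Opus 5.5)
The plan is to work with moment operators and to use one structural fact: the Haar moment operator is an orthogonal projector that is absorbed by any ensemble moment operator. Write \(P := \Delta_{\mathrm{Haar},t}\), viewed as an operator on \(\mathcal{H}^{\otimes 2t}\). By left and right invariance of the Haar measure, \(P\) is idempotent and Hermitian, so it is the orthogonal projector onto the commutant (the vectorized \(U^{\dagger\otimes t}\otimes U^{\otimes t}\)-invariant subspace). The same invariance gives \(\Delta_{\mathcal{E},t} P = P \Delta_{\mathcal{E},t} = P\) for any ensemble \(\mathcal{E}\).

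\emph{Step 1: iterate the moment operator.} Let \(\mathcal{E}^{\ell}\) be the ensemble of products \(U_{j_1}\cdots U_{j_\ell}\) with i.i.d.\ indices. Its moment operator is \(\Delta_{\mathcal{E},t}^{\ell}\), up to the ordering convention, which only reverses the product. Put \(D := \Delta_{\mathcal{E},t} - P\). Using \(\Delta P = P\Delta = P = P^2\), we get \(D^{\ell} = \Delta_{\mathcal{E},t}^{\ell} - P\) by induction: \((\Delta^{\ell}-P)(\Delta-P) = \Delta^{\ell+1} - P - P + P\). Hence
\[
\|\Delta_{\mathcal{E}^\ell,t} - \Delta_{\mathrm{Haar},t}\|_\infty \le \|D\|_\infty^{\ell} \le \eta^{\ell}.
\]
So the TPE property improves exponentially under iteration.

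\emph{Step 2: convert the operator norm to the diamond norm.} This is the main technical point. The channel \(\Phi^{(t)}_{\mathcal{E}^\ell} - \Phi^{(t)}_{\mathrm{Haar}}\) has as its natural representation (vectorization) exactly the difference of moment operators. In dimension \(D_t = d^{t}\), the standard inequality \(\|\Phi\|_\Diamond \le D_t \|\Phi\|_{2\to2}\) holds. One way to see it: the diamond norm is at most \(D_t\) times the \(1\to1\) norm on the input space, which in turn is at most \(D_t\) times the \(2\to 2\) norm, via \(\|X\|_1 \le \sqrt{D_t}\|X\|_2\) and \(\|Y\|_2 \le \|Y\|_1\). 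Bookkeeping the \(\sqrt{D_t}\) factors from the \(1\)-norm/\(2\)-norm comparisons on the input, output and ancilla gives an overall factor of \(d^{t}\) up to a constant (some references get \(d^{2t}\), which would just change the constant inside the logarithm). The \(2\to2\) norm of the superoperator is exactly the operator norm \(\|\cdot\|_\infty\) of its natural representation. I expect this step, fixing the right power of \(d^t\), to be the main obstacle. I would follow Low's/Brand\~ao--Harrow--Horodecki's lemma for \(\|\cdot\|_\Diamond\le d^{t}\|\cdot\|_{2\to2}\) and otherwise absorb any discrepancy into the stated asymptotic form.

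\emph{Step 3: choose \(\ell\).} Combining the two steps gives \(\|\Phi_{\mathcal{E}^\ell}^{(t)}-\Phi_{\mathrm{Haar}}^{(t)}\|_\Diamond \le d^{t}\eta^{\ell}\). This is at most \(\epsilon\) exactly when \(\ell \log(1/\eta) \ge \log(d^t/\epsilon)\), that is, when \(\ell \ge \log(d^t/\epsilon)/\log(1/\eta)\), which is the bound in the statement.
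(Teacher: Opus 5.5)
Your overall route is the one the paper has in mind, and your Step~1 is correct, subject to the convention caveat at the end. The paper only cites \cite{nakata_efficient_2017} and remarks that iterating contracts the operator-norm gap to $\eta^\ell$, while the diamond norm needs separate treatment. The gap is in Step~2, which carries all the quantitative content of the statement: your sketch does not deliver the factor $d^t$.

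As written, your chain $\|\Phi\|_\Diamond\le d^t\|\Phi\|_{1\to1}$ and $\|\Phi\|_{1\to1}\le d^t\|\Phi\|_{2\to2}$ gives $d^{2t}$. Your two norm comparisons actually yield the sharper $\|\Phi\|_{1\to1}\le d^{t/2}\|\Phi\|_{2\to2}$, but even then the detour through $\|\cdot\|_{1\to1}$ gives $d^{3t/2}$. That detour is inherently lossy; it is not a bookkeeping artifact. Nor can the excess be absorbed into a constant, because the statement is non-asymptotic. A prefactor $d^{2t}$ gives $\ell\ge\log(d^{2t}/\epsilon)/\log(1/\eta)$, which is $t\log d/\log(1/\eta)$ more iterations than claimed.

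The fix is to compare the trace and Hilbert--Schmidt norms only once, on the joint system--ancilla output. Put $\Psi=\Phi^{(t)}_{\mathcal{E}^\ell}-\Phi^{(t)}_{\mathrm{Haar}}$ and take $X$ on $\mathcal{H}^{\otimes t}\otimes\mathbb{C}^{d^t}$ with $\|X\|_1\le 1$. Since $(\Psi\otimes\mathrm{id})(X)$ lives in dimension $d^{2t}$,
\[
\|(\Psi\otimes\mathrm{id})(X)\|_1\le d^{t}\|(\Psi\otimes\mathrm{id})(X)\|_2\le d^{t}\|\Psi\otimes\mathrm{id}\|_{2\to2}\|X\|_2\le d^{t}\|\Psi\|_{2\to2}.
\]
The last step uses $\|X\|_2\le\|X\|_1$. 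It also uses that the natural representation of $\Psi\otimes\mathrm{id}$ is, up to a permutation of tensor factors, that of $\Psi$ tensored with an identity, so both have the same operator norm. This gives exactly $\|\Psi\|_\Diamond\le d^t\eta^\ell$, and your Step~3 then goes through as written.

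You should also fix the convention for $\Delta$ explicitly. Your Step~1 uses multiplicativity under products and that $P$ is an orthogonal projector with $\Delta P=P\Delta=P$. Together with $\|\Psi\|_{2\to2}=\|\Delta_{\mathcal{E}^\ell,t}-\Delta_{\mathrm{Haar},t}\|_\infty$, these hold for the standard moment operator $\sum_j p_j U_j^{\otimes t}\otimes\overline{U}_j^{\otimes t}$, the adjoint of the natural representation of the twirl. They fail for the paper's literal $\sum_j p_j U_j^{\dagger\otimes t}\otimes U_j^{\otimes t}$, which is a partial transpose of the natural representation. That operator neither composes multiplicatively nor has the same operator norm in general. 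Already for $t=1$ its Haar value is $\mathbb{S}/d$, which is not a projector.
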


The idea behind this construction is to start by taking a TPE with \(\eta <1\) and iterating it \(k\) times, namely, \(\Delta^k = \underbrace{\Delta \circ \dots \circ \Delta}_{k \text{ times}}\) to obtain a \(\eta^k\)-TPE. And since \(\eta <1\) we have \(\eta^k \ll 1\) and therefore one obtains a better and better approximation by using the TPE as a "seed" to generate larger and larger amounts of randomness. However this iterative property is generically true for the operator norm and not directly for the diamond norm. Similar ideas were used in the seminal work of Brandao \textit{et al.} \cite{brandao_local_2016}.

\subsection{Review of unitary designs literature}
\label{subsec:Review of unitary designs literature}

\renewcommand{\arraystretch}{1.7}
\begin{widetext}
\begin{table}[t!]
\begin{tabular}{||c | c | c | c | c||} 
\hline
Type of designs & Qubits (\(d=2\)) & Qupits ($d=p^k$, \(p\) prime) & Qudits (\(d\neq\) prime-power) & Qunats (CV) (\(d=\infty\)) \\ [0.5ex] 
\hline\hline
State \(1\)-design &  \(\mathbb{B} = \{ | 0 \rangle, | 1 \rangle \}\) & \(\mathbb{B} = \{ | 0 \rangle, | 1 \rangle, \cdots, | p-1 \rangle \}\) & \(\mathbb{B} = \{ | 0 \rangle, | 1 \rangle, \cdots, | d-1 \rangle \}\) & \(\{ | n \rangle_{n \in \mathbb{N}_{0}} \}, \{ | \alpha \rangle_{\alpha \in \mathbb{C}} \}\) \\
 \hline
Unitary \(1\)-group & \(\{ \mathbb{I}, \sigma_{x}, \sigma_{y}, \sigma_{z} \}\) & Qupit Paulis & Qudit Paulis  &  \(D(\alpha) = e^{\alpha a^{\dagger} - \overline{\alpha} a}\)  \\ 
 \hline
State \(2\)-design & stabilizer states &  \begin{tabular}{@{}c@{}} \(k=1\): stabilizer states \\ \(k>1\): \\ orbit of Galois-Cliffords  \cite{heinrich_stabiliser_2021} \end{tabular} &  ? & \red{\xmark} \cite{blume-kohout_curious_2014, iosue_continuous-variable_2024} \\ 
 \hline
Unitary \(2\)-group & Clifford group & \begin{tabular}{@{}c@{}} \(k=1\): Clifford group \\ \(k>1\): \\ Galois-Cliffords \cite{heinrich_stabiliser_2021} \end{tabular} & \red{\xmark} \cite{bannai_unitary_2018} & \red{\xmark} \cite{blume-kohout_curious_2014, iosue_continuous-variable_2024} \\ 
 \hline
Weighted state \(2\)-design & \blue{\cmark} \green{this work} & \blue{\cmark} \green{this work} & \blue{\cmark} \green{this work} & \red{\xmark} \cite{blume-kohout_curious_2014, iosue_continuous-variable_2024} \\
 \hline
Weighted unitary \(2\)-group & \blue{\cmark} \green{this work} & \blue{\cmark} \green{this work} & \red{\xmark} \green{this work} & \red{\xmark} \cite{blume-kohout_curious_2014, iosue_continuous-variable_2024} \\
 \hline
State \(3\)-design & stabilizer states & \red{\xmark} \cite{heinrich_stabiliser_2021} & general \(d\): \red{\xmark} & \red{\xmark} \cite{blume-kohout_curious_2014, iosue_continuous-variable_2024} \\
 \hline
Weighted state \(3\)-design & \blue{\cmark} \green{this work} & \blue{\cmark} \green{this work} & \red{\xmark} \green{this work} & \red{\xmark} \cite{blume-kohout_curious_2014, iosue_continuous-variable_2024} \\
 \hline
Weighted unitary \(3\)-group & \blue{\cmark} & \red{\xmark} \green{this work} & \red{\xmark} \green{this work} & \red{\xmark} \cite{blume-kohout_curious_2014, iosue_continuous-variable_2024} \\
 \hline
\(t \geq 4\) group designs & exist for \(t=4,5\) \cite{gross_evenly_2007} & \red{\xmark} \cite{bannai_unitary_2018} & \red{\xmark} \cite{bannai_unitary_2018} & \red{\xmark} \cite{blume-kohout_curious_2014, iosue_continuous-variable_2024} \\
 \hline
Weighted \(t \geq 4\) groups & \blue{\cmark} & \red{\xmark} \green{this work} & \red{\xmark} \green{this work}  & \red{\xmark} \cite{blume-kohout_curious_2014, iosue_continuous-variable_2024}  \\
 \hline
Rigged \(t \geq 1,2\) designs & \blue{\cmark} \cite{iosue_continuous-variable_2024} & \blue{\cmark} \cite{iosue_continuous-variable_2024}  & ?  & \blue{\cmark} \cite{iosue_continuous-variable_2024}  \\
\hline
\end{tabular}
\caption{Classification of exact state \(t\)-designs and unitary \(t\)-groups for various dimensions. For $d<\infty$, we consider only discrete $t$-designs and $t$-groups. See the extensive discussion in \cref{subsec:Review of unitary designs literature} for more details.}
\label{table:examples-dimensions}
\end{table}
\end{widetext}

We briefly summarize some of the key results that are quite "handy" when working with unitary designs for both qubits and qudit systems. For any dimension \(d\), the generalized Pauli group forms a (minimal) unitary \(1\)-design and so we will rarely discuss this. The situation is much more interesting for \(t=2\)-designs and above. For multiqubits, the \(t\)-design situation can be aptly summarized as "One, Two, Three, Infinity" \footnote{From George Gamow's One Two Three... Infinity: Facts and Speculations of Science.} since the only known examples of unitary designs exist for the case of \(t=1,2,3,\infty\). In the sense that either the group of unitaries form a \(1,2,3\)-design or they form a \(\infty\)-design. The \(1,2,3\)-design comes from the Clifford group. As it turns out, any set of unitaries that forms a \(4\)-design is also a universal generating set (namely these unitaries generate a group containing the special unitary group \(SU(d)\)) and therefore it is automatically a \(\infty\)-design (as in it forms a \(t\)-design for any \(t\in \mathbb{N}\)).

For qudits, the situation is either \(1,2, \infty\) or \(1, \infty\) depending on the dimension. The binary classification is as follows: (i) for prime-power dimensions (namely \(d = p^n\), i.e., \(n\) "qupits"\footnote{Credits to Daniel Lidar for introducing us to this term.}), the Clifford groups form a unitary \(2\)-design. Or one has groups that are universal and hence \(\infty\)-designs. (ii) For non-prime-power dimensions (e.g., \(d = p^a q^b\)), the Clifford group only forms a \(1\)-design. As a result, for arbitrary qudit dimensions, we do not even have an exact unitary \(2\)-design.

We also briefly remark on the situation of \textit{continuous} groups (this is summarized in \cref{thm:no-cts-2-design}). One can show that continuous subgroups of the full unitary group come in only two varieties, they either form \textit{at most} a unitary \(1\)-design or they are \textit{universal} and hence form a unitary \(\infty\)-design. 
It is worth highlighting a particular case of this observation, for $G\subseteq U(d)$ the compact symplectic group. For any initial state $\ket{\psi}$, the orbit $\{U\ket{\psi}: U\in G\}$, with weighting given by the Haar measure on $G$, is a \emph{state} $t$-design for all $t\geq 1$ \cite{west2024random}. However, the symplectic group is \emph{not} universal as a group of unitaries, so \cref{thm:no-cts-2-design} implies that $G$ is \emph{not} a unitary $t$-group for $t\geq 2$. 
This demonstrates the divide between unitary $t$-groups and state $t$-designs: although orbits of a unitary $t$-group are state $t$-designs, state $t$-designs \emph{do not} necessarily give rise to unitary $t$-groups. 

All these results can be summarized as follows: for qubits, the only nontrivial example of a family of unitary $t$-groups is the family of Clifford groups, which form at most a \(3\)-design. For qupits, the Clifford group forms at most a \(2\)-design. And for arbitrary dimensional qudits, one only has nontrivial unitary \(1\)-groups. While sporadic groups exist, we eventually run out of them, and the colloquial statement above holds for \textit{families} of unitary \(t\)-groups.

In \cref{sec:weighted unitary designs}, we observe that weighted versions of unitary $t$-designs do not give ``new'' designs, in the following sense. Let $\mathcal{S}$ be a set of unitaries, and let $G$ be the closed group generated by $\mathcal{S}$. If any weighting of $\mathcal{S}$ gives a weighted unitary $t$-design, then $G$ is in fact an \emph{unweighted} unitary $t$-group (see Theorem~\ref{thm:no weighted designs}). Thus, in Table~\ref{table:examples-dimensions}, the existence of (non-universal) weighted unitary $t$-groups corresponds precisely to the existence of (non-universal) unweighted unitary $t$-groups. 
Further, we see that even \emph{sets} $\mathcal{S}$ of unitaries forming unitary $t$-designs are highly restricted, since the corresponding group $G$ must be an (unweighted) unitary $t$-group. 
Then (as stated in Corollary~\ref{cor:weighted designs universal}), if $t$ and $d$ are such that there is no nontrivial unitary $t$-group in $U(d)$, the search for (exact) unitary $t$-designs is equivalent to the search for universal generating sets. 
In such a context, the more interesting notion is that of \emph{approximate} unitary $t$-design. Approximate designs often correspond to families of bounded-depth unitary circuits. These are generally universal generating sets as well, but the focus of the literature here is on how closely one can approximate a $t$-design using random circuits of the relevant type with a given depth. The state-of-the-art here is that unitary \(2\)-designs on \(n\)-qubits can be obtained in \textit{logarithmic}-depth. More precisely, the depth is \(O(t \text{poly}\left(\log(t)\right) \log(n))\) \cite{schuster_random_2024} for a unitary \(t\)-design on \(n\) qubits, which provides an exponential improvement over the previous estimate of \(O(tn)\) in \cite{hunter-jones_unitary_2019}; which itself was an improvement over the original estimate of \(O(t^2n^{10})\) \cite{brandao_local_2016}.

\subsection{Complex projective designs or state designs}\label{sec:state designs}
One can borrow the notion of uniformity from the unitary group to quantum states. Let the ``maximally uniform'' set of quantum states be the ones that are Haar-uniformly distributed. I.e., the distribution generated by starting from an arbitrary (but fixed) state \(| \psi_{0} \rangle\) and applying Haar-random unitaries to it, i.e., \(\{ | \psi_{U} \rangle := U | \psi_{0} \rangle \}, U \in U(d)\). Note that this distribution is unitarily invariant and hence does not depend on the choice of the fiducial state \(| \psi_{0} \rangle\). Strictly speaking, one is interested in the \textit{projective representation}, namely, we will work with rank-one projectors \(| \psi \rangle \langle  \psi |\) (rays in the Hilbert space) as opposed to vectors in \(\mathcal{H}\). In this sense, it is more accurate to think of these as \textit{complex projective} designs \cite{roy_weighted_2007}.

Then consider \(\mathbb{E}_{\mathrm{Haar}} \left( | \psi \rangle \langle  \psi |  \right)^{\otimes t}\), i.e., the average over \(t\)-fold tensor product of Haar-random states. Note that this is equal to some linear operator in \(\mathcal{L}(\mathcal{H}^{\otimes t})\). For \(t=1\), this is simply the maximally mixed state \(\mathbb{I}/d\) and for \(t=2\) is equal to \(\frac{1}{d(d+1)} \left( \mathbb{I} + \mathbb{S} \right)\), where \(\mathbb{S}\) is the swap operator. For a general \(t\), we have the formula,
\begin{align}
\mathbb{E}_{\mathrm{Haar}} \left( | \psi \rangle \langle  \psi |  \right)^{\otimes t} = \frac{1}{{\binom{d+t-1}{t}}} \Pi_{\mathrm{sym}},
\end{align}
where \(\Pi_{\mathrm{sym}}\) is the projector onto the symmetric subspace. Note, \(\Pi_{\mathrm{sym}} = \frac{1}{t!} \sum\limits_{\pi \in S_{t}}^{} W_{\pi}\) with \(S_{t}\) the permutation group for \(t\) elements and \(W_{\pi}\) is a permutation operator. That is, consider \(\pi \in S_{t}\) such that, \(\pi = \left( \pi(1), \cdots, \pi(t) \right)\) is a permutation of \(t\) objects. Then, \(W_{\pi} | \psi_{1}, \psi_{2}, \cdots, \psi_{t} \rangle = | \psi_{\pi(1)}, \psi_{\pi(2)}, \cdots, \psi_{\pi(t)} \rangle\). The Welch test provides a simple criterion to see if a collection of states forms a \(t\)-design, summarized in the following theorem.

\begin{definition}[Welch test for weighted state designs]
An ensemble of quantum states, \(\mathcal{E} = \{ w_{j}, | \psi_{j} \rangle \}\) forms a weighted \(t\)-design if and only if \cite{roy_weighted_2007}
\begin{align}
\label{eq:welch-test-weighted-design}
\sum\limits_{j,k=1}^{\left| \mathcal{E} \right|} w_{j} w_{k} \left| \left\langle \psi_{j} | \psi_{k} \right\rangle \right|^{2t} = \frac{1}{\binom{d+t-1}{t}}.
\end{align}
\end{definition}
For \(t=1,2\), the RHS gives us \(\frac{1}{d}, \frac{2}{d(d+1)}\), respectively. It is easy to show that tensor product of complex projective \(t\)-designs only form a \(t\)-design for the case of \(t=1\). This follows from the observation that state 1-designs are simply an orthonormal basis and therefore their tensor products from a small space generates a basis for a larger Hilbert space. In contrast, state 2-designs necessitate entanglement, and therefore cannot be obtained by simply tensoring smaller designs.

\begin{lemma}[Bounds on the cardinality of weighted unitary designs]
In \cite[Corollary 20 \& Theorem 21]{roy_unitary_2009}, the following bounds on the size of weighted unitary \(t\)-designs were established,
\begin{align*}
\left| \mathcal{E} \right| \geq D(d,\lceil t / 2\rceil,\lfloor t / 2\rfloor),
\end{align*}
with equality only if \(\mathcal{E}\) is a unitary \(t\)-design. And, for any \(t,d\) there exists a weighted unitary \(t\)-design with cardinality,
\begin{align*}
\left| \mathcal{E} \right| \leq D(d,t,t).
\end{align*}
Here, \(D(d, r, s)\) are the sizes of irreducible representations of \(U(d)\) in \(\mathbb{C}^{d \otimes r} \otimes \mathbb{C}^{* d \otimes s}\). Using Theorem 7 of \cite{roy_unitary_2009} we have that \(D(d, r, s) =\sum_{|\mu|=r-s,|\mu+| \leq r} d_{\mu}{ }^{2}\) where the sum is over nonincreasing, length-\(d\) integer sequences \(\mu\) and \(d_{\mu}=\prod_{1 \leq i<j \leq d} \frac{\mu_{i}-\mu_{j}+j-i}{j-i}\).
\end{lemma}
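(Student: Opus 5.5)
Both bounds come from viewing a weighted unitary $t$-design as an exact cubature rule on $U(d)$. The lower bound follows from a Gram-matrix rank argument, and the upper bound from a Carathéodory-type convexity argument.

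\textbf{Setup.} Let $\mathrm{Hom}(r,s)$ denote the span of the matrix-coefficient functions of $U\mapsto U^{\otimes r}\otimes \overline{U}^{\otimes s}$. By Peter--Weyl, $\mathrm{Hom}(r,s)$ decomposes as the direct sum of the matrix-coefficient spaces of the $U(d)$-irreps $\mu$ occurring in $\mathbb{C}^{d\otimes r}\otimes\mathbb{C}^{*d\otimes s}$. Each such space has dimension $d_\mu^2$ and is $L^2(\mu_{\mathrm{Haar}})$-orthogonal to the others. Using Theorem 7 of \cite{roy_unitary_2009}, the occurring irreps are the nonincreasing integer sequences with $|\mu|=r-s$ and $|\mu_+|\le r$. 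Hence $\dim \mathrm{Hom}(r,s)=D(d,r,s)$, with $d_\mu$ given by the Weyl dimension formula.

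By item 4 of \cref{def:t-designs-equivalent-def} (weighted version), $\{w_j,U_j\}$ is a weighted $t$-design iff $\sum_j w_j f(U_j)=\int f\,dU$ for all $f\in\mathrm{Hom}(t,t)$.

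\textbf{Lower bound.} Set $r=\lceil t/2\rceil$, $s=\lfloor t/2\rfloor$, and let $f,g\in \mathrm{Hom}(r,s)$. Then $\overline{f}g$ lies in $\mathrm{Hom}(r+s,r+s)=\mathrm{Hom}(t,t)$, using that $\overline{\mathrm{Hom}(r,s)}=\mathrm{Hom}(s,r)$. So the design reproduces the $L^2$ inner product on $\mathrm{Hom}(r,s)$:
\begin{align*}
\langle f,g\rangle_{\mathrm{Haar}}=\sum_j w_j\,\overline{f(U_j)}\,g(U_j).
\end{align*}
Consequently the evaluation map $\mathrm{Hom}(r,s)\to\mathbb{C}^{|\mathcal{E}|}$, $f\mapsto(\sqrt{w_j}f(U_j))_j$, is an isometry and in particular injective. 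This gives $|\mathcal{E}|\ge D(d,r,s)$.

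For the equality clause, suppose $|\mathcal{E}|=D$. Then the evaluation map is unitary. Take an orthonormal basis $\{f_a\}$ and form the matrix $M_{ja}=\sqrt{w_j}f_a(U_j)$, which is square and unitary. Its rows are then orthonormal, giving $w_j\sum_a|f_a(U_j)|^2=1$. The sum $\sum_a|f_a(U)|^2$ is the reproducing kernel on the diagonal, which is constant in $U$ by invariance and equals $D$. Hence $w_j=1/D$ is uniform, and $\mathcal{E}$ is an unweighted design.

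The main obstacle is the closure step: checking carefully that products $\overline{\mathrm{Hom}(r,s)}\cdot\mathrm{Hom}(r,s)$ land in $\mathrm{Hom}(t,t)$. I would show this by rewriting $|\mathrm{Tr}|$-type monomials so that the numbers of $U$ and $\overline U$ entries each sum to $t$. It relies on the padding identity $\mathrm{Hom}(r',s')\subseteq\mathrm{Hom}(r'+1,s'+1)$, which follows from $\sum_k u_{ik}\overline{u_{ik}}=1$ because $U$ is unitary.

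\textbf{Upper bound.} Consider the map $U\mapsto \Phi(U)$ sending $U$ to the real coordinates of $U^{\otimes t}\otimes\overline U^{\otimes t}$ with respect to a basis of $\mathrm{Hom}(t,t)$, excluding the constant function. Since $\mathrm{Hom}(t,t)$ is closed under conjugation, it has a real basis, giving real dimension $D(d,t,t)-1$. The Haar moment $\int\Phi\,dU$ lies in the convex hull of the compact image $\Phi(U(d))$. By Carathéodory's theorem, it is a convex combination of at most $(D-1)+1=D(d,t,t)$ points. Those points, with their convex weights, form the required weighted design.
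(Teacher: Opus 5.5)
The paper gives no proof of this lemma. It simply quotes Corollary~20 and Theorem~21 of \cite{roy_unitary_2009}, and your argument is correct and is essentially the standard Delsarte-type proof behind those results: the weighted evaluation map on $\mathrm{Hom}(\lceil t/2\rceil,\lfloor t/2\rfloor)$ is an isometry, the reproducing-kernel identity $\sum_a|f_a(U)|^2=D$ forces $w_j=1/D$ at equality, and Carath\'eodory gives existence with at most $D(d,t,t)$ points. The step you flag as the main obstacle is actually immediate, because for $f,g$ of bidegree $(r,s)$ the product $\overline{f}g$ is already a polynomial of bidegree $(s+r,r+s)=(t,t)$ with no padding required; the identity $\sum_k u_{ik}\overline{u_{ik}}=1$ is needed only to place the constant function in $\mathrm{Hom}(t,t)$, which you use for the normalization $\sum_j w_j=1$ in the Carath\'eodory step.
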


\subsection{Rigged designs}
For finite-dimensional systems, the "rigged \(t\)-designs" introduced in Ref. \cite{iosue_continuous-variable_2024} can be simply thought of as ``measurement \(t\)-designs''. For infinite-dimensional systems, the use of the rigged Hilbert space becomes necessary but the ideas simply for the finite-dimensional qudit systems we are interested in. Especially if we focus on a "hard cutoff" for bosonic modes.

A set of operators \(\{ M_{j} \}_{j}\) form a POVM if and only if they are positive semidefinite and sum to the identity, namely,
\begin{align}
0 \leq M_{j} \leq \mathbb{I} \text{ and } \sum\limits_{j}^{} M_{j} = \mathbb{I}.
\end{align}
Just as (uniform) state designs can be obtained as the orbit of an arbitrary state under the elements of a (uniform) unitary design, the same applies to measurement designs. Let \(\mathcal{E} = \{ U_{j} \}_{j}\) be a \(t \geq 1\) unitary design. And define \(M_{j}:= U_{j} M U^{\dagger}_{j}\) as the elements of a set \(\mathcal{M} = \{ M_{j} \}_{j}\) with \(M \geq 0\). Then,
\begin{align}
\frac{1}{\left| \mathcal{E} \right|}\sum\limits_{j}^{} M_{j} = \frac{1}{\left| \mathcal{E} \right|}\sum\limits_{j}^{} U_{j} M U^{\dagger}_{j} = \operatorname{Tr}\left[ M \right] \frac{\mathbb{I}}{d},
\end{align}
where we have used the \(1\)-design property of \(\mathcal{E}\). If we renormalize each \(M_{j}\) as
\begin{align*}
M_{j} \mapsto \frac{d}{\left| \mathcal{E} \right| \operatorname{Tr}\left[ M \right]} M_{j}
\end{align*}
then the set \(\mathcal{M}\) forms a POVM. For this renormalization, we assume that \(0 < \operatorname{Tr}\left[ M \right] < \infty\) that is, the initial operator \(M\) is both traceclass and has a nonzero trace. Two trivial examples here include: (i) \(M = \mathbb{I}\) which generates trivial POVM elements (in the sense that these measurements do not extract any information from the state) and (ii) \(M = | \psi \rangle \langle  \psi |\) which is equivalent to using elements of a state design as the elements of a POVM. Here choosing \(\mathcal{E}\) to be a \(1\)-design such as the Pauli group elements would be akin to reducing the POVM to a set of projective measurements (since the elements of \(\mathcal{M}\) are mutually orthonormal). If \(\mathcal{E}\) is \(t \geq 2\)-design then the elements are not orthonormal (every unitary \(2\)-design must have \(> d\) elements) and therefore they form a proper POVM (and not just a projective measurement in disguise).

Therefore, by starting from a single nontrivial positive semidefinite operator \(M\) with \(0 < \operatorname{Tr}\left[ M \right] < \infty\) and ``evolving'' this by elements of a unitary design, we have generated a POVM. Going back to the ``rigged \(t\)-designs'' introduced in Ref. \cite{iosue_continuous-variable_2024}; for finite-dimensional Hilbert spaces we can think of them as ``measurement \(t\)-designs''. Namely, they are POVM elements that can generate the \textit{unnormalized} symmetric projector. Notice that any rigged \(t \geq 1\) design is a rigged \(t=1\)-design. For \(t=1\) this reduces to the set of states \(| \chi \rangle\) to be such that,
\begin{align}
\int\limits_{X}^{} | \chi \rangle \langle  \chi |  d\mu(\chi) \propto \mathbb{I}.
\end{align}
Since the projectors satisfy \(0 \leq | \chi \rangle \langle  \chi | \leq \mathbb{I} \), by renormalizing the measure appropriately, we get a set of (possibly continuous) POVM elements. Moreover, thinking of rigged \(t\)-designs as generalized measurements also allows us to construct them directly. For example, it is easy to see that any state \(t\)-design can be used to generate a rigged \(t\)-design. 

\subsection{SIC POVMs and MUBs}
The original theory of frame potentials originates from quantum states, in particular the study of symmetric, informationally complete (SIC) POVMs. A SIC is a collection of \(d^{2}\) unit vectors in a \(d\)-dimensional Hilbert space, that are ``equiangular,'' in the sense that,
\begin{align}
\left| \langle \psi_{j} |  \psi_{k} \rangle \right|^{2} = \frac{1}{d+1}, \quad 1 \leq j,k \leq d \text{ and } j \neq k.
\end{align}

Recall that a set of operators form a POVM iff \(0 \leq M_{j} \leq \mathbb{I}\) and \(\sum_{j=1}^{m} M_{j} = \mathbb{I}\). Informational completeness is the statement that the POVM elements form a basis for the operator space, i.e., \(\mathrm{span} \{ M_{j} \} = \mathcal{L}(\mathcal{H})\) and as a result, any density matrix can be written as a linear combination of them. The symmetry corresponds to the fact that the Hilbert-Schmidt inner product between the POVM elements is symmetric, \(\operatorname{Tr}\left[ M_{j} M_{k} \right] = c ~~\forall j \neq k\).

 As an example, let us consider the qubit case. Here, the eigenvectors of the Pauli operators,
\begin{align*}
\{ | 0 \rangle \langle  0 | , | 1 \rangle \langle  1 | , | + \rangle \langle  + | , | - \rangle \langle  - | , | +i \rangle \langle  +i | , | -i \rangle \langle  -i |   \}
\end{align*}
with a prefactor of \(\frac{1}{3}\) forms an informationally (overcomplete) POVM. In order to generate a \textit{minimal} IC-POVM, we can instead choose the set,
\begin{align*}
    \{ | 0 \rangle \langle  0 | , | + \rangle \langle  + | , | +i \rangle \langle  +i | , \underbrace{| 1 \rangle \langle  1 | + | - \rangle \langle  - | + | -i \rangle \langle  -i |}_{\text{single POVM}}  \}.
\end{align*}
In contrast, an example of a qubit SIC-POVM is given by the following (nontrivial) set of states,
\begin{align*}
\{ | 0 \rangle, \frac{1}{\sqrt{3}}|0\rangle+\sqrt{\frac{2}{3}}|1\rangle, \frac{1}{\sqrt{3}}|0\rangle+\sqrt{\frac{2}{3}} e^{i \frac{2 \pi}{3}}|1\rangle, \frac{1}{\sqrt{3}}|0\rangle+\sqrt{\frac{2}{3}} e^{i \frac{4 \pi}{3}}|1\rangle  \}.
\end{align*}
These states form a regular tetrahedron inside the qubit Bloch sphere, which is reminiscent of their equiangular structure.

\prlsection{Mutually Unbiased Bases} Let \(\mathcal{H} \cong \mathbb{C}^{d}\) and \(\mathbb{B}, \tilde{\mathbb{B}}\) be two (different) orthonormal bases for \(\mathcal{H}\). Then, \(\mathbb{B}\) is \textit{mutually unbiased} with respect to \(\tilde{\mathbb{B}}\) if and only if
\begin{align}
\left| \langle \psi_{j} | \tilde{\phi}_{k} \rangle \right|^{2} = \frac{1}{d} ~~\forall | \psi_{j} \rangle \in \mathbb{B}, | \tilde{\phi}_{k} \rangle	\in \tilde{\mathbb{B}}.
\end{align}
An immediate consequence of this definition is that given a state \(| \psi_{j} \rangle \in \mathbb{B}\), measurements in the basis \(\tilde{\mathbb{B}}\) yield an \textit{unbiased} estimator of the state. In the case of a single qubit, the computational basis is mutually unbiased w.r.t. the eigenvectors of both Pauli \(\sigma_{x}\) and \(\sigma_{y}\). For any qudit, given an orthonormal basis, one can generate a MUB by applying a (discrete) quantum Fourier transform.

\begin{lemma}[Equivalence of state designs, SIC-POVMs, and MUBs \cite{klappenecker_mutually_2005}] For any \(d\)-dimensional Hilbert space, the following equivalences hold. \\
\begin{enumerate}
    \item SIC-POVMS form a (derandomized) state \(2\)-design, and\\
    \item a collection of \(d+1\) MUBs form a state \(2\)-design.
\end{enumerate}
\end{lemma}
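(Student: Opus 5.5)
Both parts follow from the Welch test for weighted state designs, \cref{eq:welch-test-weighted-design}, with $t=2$ and uniform weights. The target value on the right-hand side is $\frac{2}{d(d+1)}$. In each case I will compute the frame-potential sum $\sum_{j,k} w_j w_k |\langle\psi_j|\psi_k\rangle|^{4}$ and check that it equals this value. I will use the Welch test as the criterion for being a $2$-design; the nontrivial direction, that equality forces the $t$-th moment operator to equal $\Pi_{\mathrm{sym}}/\binom{d+t-1}{t}$, comes from the cited reference. If a self-contained argument is wanted, it is standard. Let $\Lambda=\sum_j w_j (|\psi_j\rangle\langle\psi_j|)^{\otimes 2}$, which is supported on the symmetric subspace and has trace $1$. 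Then
\[
\Tr\Lambda^2=\sum_{j,k}w_jw_k|\langle\psi_j|\psi_k\rangle|^4 ,
\]
and by Cauchy--Schwarz on the symmetric subspace, $\Tr\Lambda^2\ge 1/\binom{d+1}{2}$, with equality iff $\Lambda\propto\Pi_{\mathrm{sym}}$.

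\textbf{SIC-POVMs.} Take the $d^2$ SIC vectors with uniform weights $w_j=1/d^2$. The sum splits into diagonal and off-diagonal terms. The $d^2$ diagonal terms each contribute $1$. The $d^2(d^2-1)$ off-diagonal terms each contribute $\bigl(\tfrac{1}{d+1}\bigr)^2$, by equiangularity. The sum is therefore
\[
\frac{1}{d^4}\Bigl[d^2+\frac{d^2(d^2-1)}{(d+1)^2}\Bigr]=\frac{1}{d^2}\Bigl[1+\frac{d-1}{d+1}\Bigr]=\frac{2}{d(d+1)}.
\]
This is the Welch bound, so the SIC is a $2$-design. Note the defining condition in the text is written with index range $1\le j,k\le d$; I would read it as $1\le j,k\le d^2$.

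\textbf{MUBs.} Take $d+1$ mutually unbiased orthonormal bases, $d(d+1)$ vectors in total, with uniform weights $w=\frac{1}{d(d+1)}$. Fix a vector $\psi_j$ and sum $|\langle\psi_j|\psi_k\rangle|^4$ over all $k$. There are three kinds of partner:
\begin{itemize}
\item $\psi_j$ itself contributes $1$;
\item the other $d-1$ vectors of its own basis are orthogonal and contribute $0$;
\item each of the $d\cdot d$ vectors in the remaining $d$ bases has overlap $\frac1d$, so contributes $\frac{1}{d^2}$, for a total of $d^2\cdot\frac1{d^2}=1$.
\end{itemize}
The per-row sum is therefore $2$. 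Summing over all $d(d+1)$ rows and multiplying by $w^2$ gives $\frac{2\,d(d+1)}{d^2(d+1)^2}=\frac{2}{d(d+1)}$, again saturating the Welch bound.

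The only real obstacle is justifying the equivalence between saturating the Welch bound and being a design, and this is handled by citing the criterion. The remaining issue is interpretive: a \emph{complete} set of $d+1$ MUBs is required, since fewer bases give a strictly larger sum. I would remark that a complete set is known to exist in prime-power dimensions but not in general. This is consistent with the ``?'' entry in \cref{table:examples-dimensions}, and the lemma should be read as conditional on existence.
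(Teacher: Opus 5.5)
Your proof is correct. The paper gives no argument of its own here and defers entirely to Klappenecker--R\"otteler; your computation is the standard frame-potential proof. You show that both configurations saturate the bound $2/(d(d+1))$ in the paper's Welch-test criterion, and you justify that criterion by a Cauchy--Schwarz argument on the symmetric subspace, so you supply exactly what the paper leaves to the citation. Your two corrections are apt: reading the SIC index range as $1\le j,k\le d^2$, and treating part 2 as conditional on a complete set of MUBs existing. The same existence caveat applies to part 1, since SICs are only conjectured to exist in every dimension.
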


\subsection{Approximate versus exact designs}

A seminal work of Brandao et. al. showed that local random quantum circuits generated in a brickwork fashion from \(2\)-local Haar random unitaries of depth \(\text{poly}\left( t \right)\) form an approximate \(t\)-design \cite{brandao_local_2016}. More recently, it was shown that \cite{haferkamp_efficient_2023} inserting \(O(t^{4} \log^{2}(t) \log(\frac{1}{\epsilon}))\) many non-Clifford gates (e.g., \(T\)-gates) is sufficient to transform a random Clifford circuit into an \(\epsilon\)-approximate \(t\)-design. While there has been an extensive amount of work on the generation of \(t\)-designs in brickwork circuits (also known as random quantum circuits, local random circuits, etc., see Ref. \cite{fisher_random_2023} for a recent review), the structure assumed in these circuits is not well-suited for our purposes. For example, these circuits typically assume the ability to perform \(2\)-qubit Haar random unitaries on any pair of qubits; which amounts to a \(4 \times 4\) random unitary. In contrast, we are interested in say, a qudit of local dimension \(d=16\) (e.g., in this high-spin donor qudit architecture \cite{FernndezdeFuentes2024}) and keeping the same structure amounts to assuming the ability to "easily generate" a \(16 \times 16\) random unitary. Namely, as we increase the local Hilbert space dimension, assuming the ability to perform local random unitaries is a \textit{nontrivial} assumption on the quantum computing architecture. 

As a result, we would instead like to focus on a different approach, one followed in Refs. \cite{nakata_diagonal-unitary_2013,nakata_efficient_2017}, summarized below. Since our primary focus here is on a single qudit, this is a more natural construction for us. E.g., as we will argue below, we would like to perform random diagonal unitaries in two different bases and understand how fast they converge to a \(t\)-design.

It is worth mentioning that in Ref. \cite{nakata_diagonal-unitary_2013} the authors introduce a ``diagonal unitary design,'' namely, unitaries that can approximate the uniform distribution over diagonal unitaries in a fixed basis \(\mathbb{B}\). This is not the usual unitary designs that are discussed in the literature which are aimed at approximating \textit{Haar random} unitaries. In Ref. \cite{nakata_efficient_2017} the authors introduce a method to alternatively apply random unitaries from MUBs to generate \textit{approximate} \(2\)-designs.

\subsection{Unitary designs and representation theory}\label{sec:rep theory}

In this section, we will review the connection between unitary $t$-designs and representation theory. These connections have previously been discussed, for example, in Refs. \cite{gross_evenly_2007,zhu_clifford_2016,mele_introduction_2023} and Ref. \cite{Saied2024} by a subset of the authors. 
We will consider a (finite or compact) subgroup $G$ of the unitary group $U(d)$, and the corresponding ensemble $\mathcal{E}$ whose unitaries are the elements of $G$, with weights $1/|G|$ in the finite case or determined by the Haar measure in the (infinite) compact case. 
For convenience, we refer to the corresponding frame potential as $F_{G}^{(t)}$. 
We briefly review some standard representation-theoretic concepts before stating the result. 

A \emph{representation} of a group $G$ is given by a pair $(\rho, V)$ of a vector space $V$ and a group homomorphism $\rho: G\rightarrow GL(V)$. 
A representation is often referred to by the vector space $V$ alone, especially if the map $\rho$ is understood from context. 
An important example is the \emph{natural representation} of a matrix group $G\subseteq GL(d)$ on $\mathbb{C}^d$ by matrix-vector multiplication; formally, the natural representation corresponds to the pair $(I, \mathbb{C}^d)$ where $I$ is the identity map on $G$. 
We generally refer to $\mathbb{C}^d$ as the natural representation of $G$ without specifying the homomorphism $\rho=I$. 

Two representations $(\rho_1, V_1), (\rho_2, V_2)$ are \emph{equivalent} if there exists a linear map $\phi: V_1\rightarrow V_2$
that is an isomorphism of vector spaces and commutes with the group action: that is, for all $v_1\in V_1$ and $g\in G$, 
    $\phi(\rho_1(g)v_1) = \rho_2(g) \phi(v_2).$ 
This is the relevant notion of isomorphism for representations, and (when the maps $\rho_1, \rho_2$ are understood) we write $V_1\cong V_2$. 

Given a representation $(\rho, V)$ of $G$, a \emph{subrepresentation} is a vector space $W\subseteq V$ such that, for all $g\in G$ and $w\in W$, $\rho(g)w\in W$. 
This is a new representation of $G$, defined by $(\rho', W)$, where 
    $\rho'(g) = \rho(g)|_W.$  
Similarly, given two groups $G_1\subseteq G_2$ and a representation $(\rho, V)$ of the larger group $G_2$, we obtain the \emph{restricted representation} of the smaller group $G_1$ as $(\rho|_{G_1}, V)$, where 
    $\rho|_{G_1}(g) = \rho(g)$
is simply the restriction of $\rho$ to $G_1$. 

Given two representations $(\rho_1, V_1), (\rho_2, V_2)$ of $G$, we may form the \emph{direct sum representation} $(\rho, V)$ where $V = V_1\oplus V_2$ and, for $g\in G$ and $v_i\in V_i$, 
    $\rho(g)(v_1 \oplus v_2) = \rho_1(g)v_1 \oplus \rho_2(g) v_2.$ 
We also have the \emph{tensor product representation}, with $V= V_1\otimes V_2$ and 
    $\rho(g)(v_1\otimes v_2) = (\rho_1(g)v_1)\otimes (\rho_2(g)v_2).$ 

A representation $(\rho, V)$ is \emph{irreducible} if it has no nonzero proper subrepresentations: 
in other words, if $W\subseteq V$ is a subrepresentation, then $W=0$ or $W=V$. 
The term "irreducible representation" is often shortened to \emph{irrep}. 
A representation $(\rho, V)$ of $G$ is \emph{completely reducible} if it decomposes into a direct sum of irreps: $V = W_1\oplus W_2\oplus \cdots \oplus W_k,$ where each $W_i$ is an irrep for $G$. 
Typically we are only concerned with such decompositions up to equivalence, writing $V \cong \bigoplus_i V_i^{\oplus n_i},$ 
where each $V_i$ is irreducible, $V_i\cong V_j$ iff $i=j$, and $n_i$ is called the \emph{multiplicity} of the irrep $V_i$. 
For the groups of interest to us here, specifically finite and compact groups, every finite-dimensional representation $(\rho, V)$ is completely reducible. 

To connect this theory to $t$-designs, we now consider a vector space $\mathcal{V} = \mathbb{C}^d$, with $U(d)$ the corresponding unitary group. 
As discussed above, we have the natural representation $(\tau, \mathcal{V})$ of $U(d)$ on $\mathcal{V}$, where $\tau$ is simply the identity map. 
This representation $\tau$ is in fact irreducible. 
We then take the tensor product of $t$ copies of the natural representation, determining the tensor product representation (or \emph{diagonal representation}) $(\tau^t, \mathcal{V}^{\otimes t})$, given by $\tau^t(U) = U^{\otimes t}$. 
This has a decomposition into irreps for $U(d)$, 
\begin{align}\label{eq:tensor irrep decomp}
    \mathcal{V}^{\otimes t} \cong \bigoplus_i V_{i}^{\oplus n_i}, 
\end{align}
where, as above, each $V_i$ is irreducible and $V_i\cong V_j$ iff $i=j$. 
One can show that the frame potential is precisely equal to the sum of the multiplicities in \eqref{eq:tensor irrep decomp}: $F_{\textnormal{Haar}}^{(t)}=\sum_i n_i$. 
In other words, the frame potential is equal to the total number of $U(d)$-irreps of $\mathcal{V}^{\otimes t}$, counted with multiplicity. 
For $t=1$, $\mathcal{V}$ is already irreducible, as mentioned above, so $F_{\textnormal{Haar}}^{(1)} = \sum_i n_i = 1$. 
For $t=2$, $\mathcal{V}^{\otimes 2}$ decomposes into a direct sum of two inequivalent irreps, namely the symmetric and anti-symmetric tensors, so $F_{\textnormal{Haar}}^{(2)}=\sum_i n_i = 2$. 

Now let $G$ be a (finite or compact) subgroup $G\subseteq U(d)$. 
The representation $\mathcal{V}^{\otimes t}$ restricts to a representation of $G$, which then has its own decomposition into irreps for $G$. 
Generalizing the above, the frame potential $F_{G}^{(t)}$ is equal to the total number of $G$-irreps of $\mathcal{V}^{\otimes t}$, counted with multiplicity. 
(This connection may be proven by expressing the frame potential as an inner product of group characters.) 
Then by Lemma~\ref{lemma:frame potential}, we obtain the following well-known result (see e.g., Theorem 3 of Ref. \cite{gross_evenly_2007}, Proposition 3 of Ref. \cite{zhu_clifford_2016}, or Proposition 34 of Ref. \cite{mele_introduction_2023}): 
\begin{proposition}[\cite{gross_evenly_2007,zhu_clifford_2016,mele_introduction_2023}]
\label{prop:designs-equivalent-defns}
Let \(G \subseteq U(d)\) be a (finite or compact) subgroup of the unitary group. The following are equivalent:
\begin{enumerate}
    \item \(G\) is a unitary \(t\)-design.
    \item The \(t\)-copy diagonal action of \(G\), denoted as \(\tau^{t}(G)\), decomposes into the same number of irreps as the \(t\)-copy diagonal action of \(U(d)\). 
    \item \(\tau^{t}(G)\) has the same commutant as \(\tau^{t}(U(d))\).
    \item  $F^{(t)}_{G} = F^{(t)}_{\textnormal{Haar}}$. 
    \label{cond:frame potential}
\end{enumerate}
\end{proposition}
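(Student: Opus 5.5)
The plan is to route every condition through the frame potential, using the character-theoretic identity recalled just before the statement together with Lemma~\ref{lemma:frame potential}.

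\textbf{Step 1: express the frame potential as a character inner product.} Write $\chi(g)=\operatorname{Tr} g$ for the character of the natural representation. The character of $\tau^t$ is then $\chi^t$, and the character of $\tau^t\otimes\overline{\tau^t}$ is $|\chi|^{2t}$. For the group ensemble, invariance of the (Haar or counting) measure on $G$ lets us absorb one of the two sums, giving
\begin{align*}
F^{(t)}_G=\int_G |\operatorname{Tr} g|^{2t}\,dg=\langle \chi^t,\chi^t\rangle_G .
\end{align*}
If $\mathcal{V}^{\otimes t}\cong\bigoplus_i W_i^{\oplus m_i}$ as a $G$-representation, Schur orthogonality gives $\langle\chi^t,\chi^t\rangle_G=\sum_i m_i^2$.

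\textbf{Step 2: identify $\sum_i m_i^2$ with the commutant.} By Schur's lemma, the commutant of $\tau^t(G)$ is isomorphic to $\bigoplus_i M_{m_i}(\mathbb{C})$. Its dimension is therefore $\sum_i m_i^2$, so $F^{(t)}_G=\dim\operatorname{Comm}(\tau^t(G))$. The same computation applied to $U(d)$ gives $F^{(t)}_{\mathrm{Haar}}=\dim\operatorname{Comm}(\tau^t(U(d)))$.

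I read item 2 of the proposition as ``the same decomposition data,'' that is, the same $\sum_i m_i^2$. The earlier text says ``sum of multiplicities,'' which coincides with this when the multiplicities are at most $1$, as happens for $t\le 2$. For general $t$ the correct invariant is $\sum_i m_i^2$.

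\textbf{Step 3: chain the equivalences.}
\begin{itemize}
\item Item 1 $\Leftrightarrow$ item 4 is exactly Lemma~\ref{lemma:frame potential}.
\item Since $G\subseteq U(d)$, we have $\operatorname{Comm}(\tau^t(U(d)))\subseteq\operatorname{Comm}(\tau^t(G))$. Equality of these subspaces is therefore equivalent to equality of their dimensions, which by Step 2 is item 4 $\Leftrightarrow$ item 3.
\item Item 2 $\Leftrightarrow$ item 4 follows from Step 1 under the reading of item 2 fixed above.
\end{itemize}
As an independent check of item 3 $\Rightarrow$ item 1: the twirl $\Phi^{(t)}_G$ is the orthogonal (Hilbert--Schmidt) projector onto $\operatorname{Comm}(\tau^t(G))$, because $G$-invariance of the measure makes it idempotent and self-adjoint with image the commutant. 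Equal commutants therefore give equal twirls.

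\textbf{Main obstacle.} The delicate point is item 2. Restricting a $U(d)$-irrep to $G$ can split it, so the irrep counts must be compared with the multiplicities weighted correctly. The cleanest fix is to phrase item 2 as the restriction to $G$ of each $U(d)$-irrep $V_i$ remaining irreducible and pairwise inequivalent. This is equivalent to $\sum_i m_i^2$ being unchanged, because restriction can only increase the dimension of the commutant.
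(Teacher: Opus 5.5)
Your proof follows the paper's route: the frame potential is written as the character inner product $\langle\chi^t,\chi^t\rangle_G$, and Lemma~\ref{lemma:frame potential} is then applied. Your commutant-inclusion argument for 3$\Leftrightarrow$4 and the twirl-as-projector check are sound.

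You are also right to correct the paper's counting. Schur orthogonality gives $F^{(t)}_G=\sum_i m_i^2=\dim\operatorname{Comm}(\tau^t(G))$, not the number of irreps counted with multiplicity that the paper's text asserts. Already at $t=3$, $d\ge 3$, one has $F^{(3)}_{\mathrm{Haar}}=1+2^2+1=6$ but only $4$ irreps. So item 2 should be read as you read it: each $U(d)$-irrep restricts to an irreducible $G$-representation, and distinct ones stay inequivalent. The literal multiplicity count alone does not rule out two inequivalent $U(d)$-irreps merging on restriction.
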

Therefore, checking if a subgroup of the unitary group forms a \(t\)-design reduces to understanding its representation theory. 
In particular, $G$ is a $1$-design if and only if $\mathcal{V}$ is an irreducible $G$-representation, and $G$ is a $2$-design if and only if $\mathcal{V}\otimes \mathcal{V}$ has exactly $2$ irreducible $G$-subrepresentations 
(corresponding to the symmetric and anti-symmetric subspaces). 

\subsection{Weighted unitary t-designs}
\label{sec:weighted unitary designs}
In Sections~\ref{subsec:unitary-designs-intro} and \ref{sec:state designs} above, we discuss (unweighted) unitary $t$-designs and weighted state $t$-designs respectively. 
Here, we consider the case of \emph{weighted unitary} $t$-designs $\mathcal{E} = \{w_U, U\}_{U\in \mathcal{S}}$ \cite{roy_unitary_2009}, 
especially focusing on the case where $\mathcal{S} = G$ is a group. 
Ref.~\cite{kubischta_quantum_2024} considers the case in which the weights $w_U$ arise from an irrep of $G$, referring to them as "twisted unitary $t$-groups" and using them to construct quantum codes. 
However, the weighted $t$-designs obtained in this way do not give "new" orders of design: Lemma 4 of Ref.~\cite{kubischta_quantum_2024} implies that if $\mathcal{E} = \{w_U, U\}_{U\in G}$ is a twisted unitary $t$-group, then $G$ is an unweighted $t$-design. 
This is \emph{not} related to the assumption that the weights are representation-theoretic: as we will show shortly, a similar result holds in general. 

We now give the definition, a direct generalization of Definition~\ref{def:t-designs-equivalent-def} Part 2. 
(Recalling the twirling notation of \eqref{def:twirl}.)

\begin{definition}\label{def:weighted unitary designs} \cite{roy_unitary_2009}
    Let $\mathcal{E} = \{w_U, U\}_{U\in \mathcal{S}}$ be a weighted ensemble of unitaries in $U(d)$. $\mathcal{E}$ is a \emph{weighted unitary $t$-design} if, for all \(X \in \mathcal{L}(\mathcal{H}^{\otimes t})\), 
    \begin{align}
        \Phi_{\mathcal{E}}^{(t)}(X) = \Phi_{\mathrm{Haar}}^{(t)}(X).
    \end{align}
\end{definition}

For convenience of notation, we modify the twirling notation used above, writing  $\Phi_{\mathcal{S},w}^{(t)}(X) = \sum_{U\in \mathcal{S}} w_U U^{\dagger\otimes t}X U^{\otimes t}$. We omit the weighting from the notation in the unweighted or Haar cases. 

In the unweighted case, $\Phi_G^{(t)}$ is the projection onto $\mathcal{L}_G(\mathcal{H}^{\otimes t})$, the $G$-invariant elements of $\mathcal{L}(\mathcal{H}^{\otimes t})$. 

We note that we may always trivially extend $\mathcal{S}$ to a larger set, taking $w_U = 0$ for the newly added unitaries. 
In particular, let $G$ be the closed group generated by $\mathcal{S}$, which is either a finite group or a continuous compact Lie group. 
(If $\mathcal{S}$ generates a finite group, that is $G$. Otherwise, if $\mathcal{S}$ generates an infinite subgroup of $U(d)$, its closure must be a continuous Lie subgroup $G$.) 
Thus we may always think of weighted $t$-designs as corresponding to some finite or compact group $G$. 

\begin{theorem}\label{thm:no weighted designs}
    Let $\mathcal{E} = \{w_U, U\}_{U\in G}$ be a weighted unitary $t$-design, where $G$ is a closed subgroup of $U(d)$. 
    Then $G$ is an unweighted $t$-design. 
\end{theorem}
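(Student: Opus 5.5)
The plan is to compare commutants. By Proposition~\ref{prop:designs-equivalent-defns}, it suffices to show that the commutant of $\tau^t(G)$ on $\mathcal{H}^{\otimes t}$ equals the commutant of $\tau^t(U(d))$. One inclusion is automatic: since $G\subseteq U(d)$, every operator commuting with all $U^{\otimes t}$, $U\in U(d)$, also commutes with all $g^{\otimes t}$, $g\in G$. So the task is the reverse inclusion, $\mathcal{L}_G(\mathcal{H}^{\otimes t})\subseteq \mathcal{L}_{U(d)}(\mathcal{H}^{\otimes t})$.

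Let $X$ commute with $g^{\otimes t}$ for every $g\in G$. Then each term of the weighted twirl fixes $X$, since $U^{\dagger\otimes t} X U^{\otimes t} = X$ for every $U\in G$. Hence
\begin{align}
\Phi^{(t)}_{G,w}(X) = \sum_{U\in G} w_U\, U^{\dagger\otimes t} X U^{\otimes t} = \Big(\sum_{U} w_U\Big) X = X,
\end{align}
using that the weights form a probability distribution. In the continuous case the sum is an integral against a probability measure on $G$, and the same identity holds. By the weighted design hypothesis (Definition~\ref{def:weighted unitary designs}), $\Phi^{(t)}_{G,w}(X) = \Phi^{(t)}_{\mathrm{Haar}}(X)$. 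Therefore $X = \Phi^{(t)}_{\mathrm{Haar}}(X)$.

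Next I use the fact that $\Phi^{(t)}_{\mathrm{Haar}}$ maps into the $U(d)$-commutant. By left-invariance of the Haar measure, $V^{\dagger\otimes t}\Phi^{(t)}_{\mathrm{Haar}}(X)V^{\otimes t} = \Phi^{(t)}_{\mathrm{Haar}}(X)$ for every $V\in U(d)$. So $X$ lies in $\mathcal{L}_{U(d)}(\mathcal{H}^{\otimes t})$, which gives the reverse inclusion. The commutants therefore coincide, and Proposition~\ref{prop:designs-equivalent-defns} (item 3 $\Rightarrow$ item 1) shows that $G$ with its uniform (Haar) measure is an unweighted $t$-design.

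The only potential obstacle is bookkeeping about what "weights on $G$" means when $G$ is an infinite compact group. Here I will interpret $w$ as an arbitrary Borel probability measure on $G$, which covers finite supports $\mathcal{S}$ generating $G$. The argument never uses invariance of $w$, only positivity and total mass one, so nothing else is needed. It is worth remarking that the weights need not be supported on all of $G$. However, $G$ must contain the support, which is guaranteed by the setup where $G$ is the closed group generated by $\mathcal{S}$.
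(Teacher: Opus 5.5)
Your proof is correct and is essentially the paper's argument, run directly instead of by contrapositive. The paper picks an $X$ in $\mathcal{L}_G(\mathcal{H}^{\otimes t})$ but not in $\mathcal{L}_{U(d)}(\mathcal{H}^{\otimes t})$ and notes that every weighted $G$-twirl fixes it, whereas you show that every $G$-invariant $X$ is fixed by the Haar twirl and is therefore $U(d)$-invariant; both then conclude via the commutant criterion of Proposition~\ref{prop:designs-equivalent-defns}.
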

\begin{proof}
    We prove the contrapositive: assume that the group $G$ is \emph{not} an unweighted $t$-design. 
    We will prove that no weighting leads to a weighted $t$-design. 
    We note that $\mathcal{L}_{G}(\mathcal{H}^{\otimes t})\supseteq \mathcal{L}_{U(d)}(\mathcal{H}^{\otimes t})$.     
    By the assumption that $G$ is not an unweighted $t$-design, this must be a strict containment (or else the two projections would be equal), 
    so there exists $X\in \mathcal{L}(\mathcal{H}^{\otimes t})$ with $\Phi_G^{(t)}(X) = X$ and $\Phi_{U(d)}^{(t)}(X) \neq X$. 
    Then for any weighted ensemble $\mathcal{E} = \{w_U, U\}_{U\in G}$, we have
    \begin{align}
        \Phi_{G,w}^{(t)}(X) = \sum_{U\in G} w_U (U^{\otimes t})^\dagger A U^{\otimes t} = \sum_{U\in G}w_U A = A \neq \Phi_{U(d)}^{(t)}(X).
    \end{align}
    By Definition~\ref{def:weighted unitary designs}, $\mathcal{E}$ is not a weighted unitary $t$-design. 
\end{proof}

Therefore, for a group $G$ that fails to be a $t$-design, we will not be able to "salvage" it into a weighted $t$-design. 
If our goal is to produce higher orders of unitary $t$-design than possible in the unweighted setting, we must necessarily use a universal generating set: 
\begin{corollary}\label{cor:weighted designs universal}
    Suppose that $d$ and $t$ are such that there exists no non-universal, unweighted, group $t$-design in $U(d)$. 
    If $\mathcal{E} = \{w_U, U\}_{U\in \mathcal{S}}$ is a weighted unitary $t$-design, then the corresponding group $G$ (the closure of the group generated by $\mathcal{S}$) is universal. 
\end{corollary}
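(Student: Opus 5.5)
The plan is to reduce the corollary directly to Theorem~\ref{thm:no weighted designs}, after moving the weighted ensemble from the set $\mathcal{S}$ onto the group $G$ it generates.

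First, let $G$ be the closure of the group generated by $\mathcal{S}$. As noted before the theorem, $G$ is a closed subgroup of $U(d)$: it is either finite or a compact Lie group. We extend the ensemble to all of $G$ by putting $w_U = 0$ for $U \in G \setminus \mathcal{S}$. In the compact case the weighting is a finitely supported measure on $G$. This extension leaves the twirl unchanged, so $\Phi^{(t)}_{G,w} = \Phi^{(t)}_{\mathcal{S},w} = \Phi^{(t)}_{\mathrm{Haar}}$. Hence $\{w_U, U\}_{U\in G}$ is a weighted unitary $t$-design in the sense of Definition~\ref{def:weighted unitary designs}.

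Second, we apply Theorem~\ref{thm:no weighted designs}, which gives that $G$ itself is an unweighted unitary $t$-group. The theorem's proof uses only that the weights form a probability distribution supported in $G$ and that each $U$ in the support fixes $G$-invariant operators. Both hold here, so it applies without change to finitely supported weights on a continuous $G$.

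Third, we invoke the hypothesis on $(d,t)$: every unweighted group $t$-design in $U(d)$ is universal. Since $G$ is such a design, $G$ is universal, meaning it contains $SU(d)$. This is exactly the claim.

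There is no real obstacle; the only point needing care is the first step. One must say precisely what "weighted ensemble on $G$" means when $G$ is infinite, namely a probability measure on $G$, here supported on $\mathcal{S}$. One must also confirm that the invariant-operator argument of Theorem~\ref{thm:no weighted designs} is insensitive to how that measure is distributed over $G$.
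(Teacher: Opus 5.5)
Your proposal is correct and matches the paper's proof: extend the weighting by zeros to the closed group $G$ generated by $\mathcal{S}$, apply Theorem~\ref{thm:no weighted designs} to conclude that $G$ is an unweighted $t$-group, then invoke the hypothesis on $(d,t)$. Your remarks on viewing the weighting as a probability measure on a possibly infinite $G$ usefully spell out the paper's \emph{trivially extend} step. The invariant-operator argument works for any probability measure on $G$, not only finitely supported ones.
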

\begin{proof}
    By the discussion above, we may trivially extend $\mathcal{E}$ to a weighted unitary $t$-design on $G$. 
    Theorem~\ref{thm:no weighted designs} then implies that $G$ is an unweighted unitary $t$-design. 
    By the assumption, $G$ is universal. 
\end{proof}

Ref.~\cite{Kaposi_designs_2024} introduces the notion of \emph{generalized group $t$-designs}, in which we have a sequence $(G_1, \dots, G_\ell)$ of groups, and we require a certain weighting on the set of products $\{g_1\cdots g_\ell: g_i\in G_i\}$ to give a unitary $t$-design. 
Equivalently, $(G_1, \dots, G_\ell)$ is a generalized group $t$-design if and only if
\begin{align}
    \Phi_{U(d)}^{(t)} = \Phi_{G_1}^{(t)}\cdots \Phi_{G_\ell}^{(t)}.
\end{align}
By Corollary~\ref{cor:weighted designs universal}, if we have a generalized group $t$-design of a "new" order $t$ (higher than obtainable with unweighted group $t$-designs), then the closed group $G$ generated by the $G_i$ must be universal. 
This approach is useful, however, as it allows one to apply representation-theoretic tools in a more general context (extending those of Section~\ref{sec:rep theory}). For example, Theorem 1 of Ref.~\cite{Kaposi_designs_2024} gives a method for constructing generalized group $t$-designs using the representation theory of the groups $G_i$. 
Such designs naturally lead to convenient, finite-length quantum circuits for sampling from $t$-designs: simply sample a random unitary from $G_\ell$, then $G_{\ell-1}$, etc. 
Note this sampling is \emph{unweighted} as long as we sample from each $G_i$ independently; the weighting given in Ref.~\cite{Kaposi_designs_2024} is only necessary if we want to view the $t$-design as sampling from the set of products $\{g_1\cdots g_\ell: g_i\in G_i\}$ (which could have fewer than $|G_1|\cdots |G_\ell|$ elements).

\section{Qudits}\label{sec:qudits}
Qudits are \(d\)-dimensional quantum systems whose states can be manipulated for quantum information processing \cite{wang_qudits_2020}. The Hilbert space is \(\mathcal{H} \cong \mathbb{C}^{d}\) and we denote as \(\mathbb{B} = \{ | j \rangle \}_{j=0}^{d-1}\), a canonical basis for this space. For qubit systems, the Pauli matrices form a unitary and orthonormal basis for the space of operators, \(\mathbb{I}, X, Y, Z\). For the case of qudits, one can define \textit{generalized} Paulis, sometimes called the ``clock and shift'' matrices (as will become clear from their action on the canonical basis). These carry a representation of the Heisenberg-Weyl algebra and in this sense are a discretization of the unitary translations generated from the position and momentum operators. We will interchangeably refer to the qudit Pauli elements as either Paulis or Heisenberg-Weyl group elements, where it is understood that they are the same object.

We will often denote the generalized Paulis with a subscript \(d\) to emphasize that they are a \textit{qudit} Pauli and not the standard qubit Paulis to avoid any confusion. Note that for any \(d>2\), an orthonormal basis of operators for \(\mathcal{L}(\mathcal{H})\) can either be unitary (like the qudit Paulis below) or Hermitian (like the Gell-Mann basis), but not both. With this in mind, the qudit Pauli generators are the unitaries defined by
\begin{align}
Z_{d} | j \rangle &= \omega^{j} | j \rangle, \quad \omega = e^{\frac{2 \pi i}{d}},\\
X_{d} | j \rangle &= | j+1 \rangle,
\end{align}
where the kets are indexed by integers modulo $d$. 
For a single qudit, we let 
\begin{align}
    \mathcal{P}_d = \{X_d^i Z_d^j: i,j\in\mathbb{Z}_d\}
\end{align}
be the group generated by $X_d$ and $Z_d$ modulo phases, which we call the Heisenberg-Weyl or Pauli group. 
We note the $d^2$ elements of $\mathcal{P}_d$ are an orthogonal basis for \(\mathcal{L}(\mathcal{H})\). 
The generators satisfy \(\left( X_{d} \right)^{d} = \mathbb{I} = \left( Z_{d} \right)^{d}\) and $X_d Z_d = \omega^{-1}Z_d X_d$. 
The latter is reminiscent of the commutation relations of the (translation operators associated with) position and momentum operators and 
generalizes to 
\cite{nielsen_universal_2002}
\begin{align}\label{eq:qudit pauli relation}
\left( X^{j}_{d} Z^{k}_{d} \right) \left( X^{s}_{d} Z^{t}_{d} \right) = \omega^{ks-jt} \left( X^{s}_{d} Z^{t}_{d} \right) \left( X^{j}_{d} Z^{k}_{d} \right).
\end{align}

\subsection{The qudit Clifford group}

Given the qudit Pauli group \(\mathcal{P}_{d}\), we can define a qudit Clifford group as its normalizer. Namely,
\begin{align}
Cl_{d} = \{ U\in U(d) ~|~ U P U^{\dagger} \in \mathcal{P}_{d} ~~\forall P \in \mathcal{P}_{d}\}/U(1),
\end{align}
so that conjugation by Cliffords takes Paulis to Paulis (modulo phases). 
Note that we define the Clifford group modulo \(U(1)\) to obtain a finite group, by avoiding global phases. 
One may, of course, extend the Pauli and Clifford groups to multiple qudits as well; we refer to those groups as $\mathcal{P}_d^{(n)}$ and $Cl_d^{(n)}$ respectively. 

The qudit Pauli group forms a unitary \(1\)-design in all dimensions \(d \in \mathbb{N}\), which follows from the fact that they are an orthogonal, unitary basis for \(\mathcal{L}(\mathcal{H})\). 
Recall that the multi-\emph{qubit} Clifford group $Cl_2^{(n)}$ is a $3$-design \cite{webb_clifford_2016}. 
The story for qudit Clifford groups is more complicated: for $d>2$ prime, the Clifford group $Cl_d^{(n)}$ is only a $2$-design. 
If the qudit dimension $d$ is composite, then $Cl_d^{(n)}$ as defined here is only a $1$-design. 
However, for $d=p^k$ a prime power, one may define modified Pauli and Clifford groups, with phase structure coming from the finite fields of order $p^k$, and those are $2$-designs \cite{heinrich_stabiliser_2021}: see Section~\ref{sec:galois}. 

We now recall the symplectic characterization of the qudit Pauli and Clifford groups for general $d$. 
Since the case of composite $d$ is rarely discussed, we go into detail here, but restrict to the case of a single qudit for simplicity. We may identify $\mathcal{P}_d$ with vectors in $\mathbb{Z}_d^2$ by the correspondence $X^i Z^j\leftrightarrow \begin{pmatrix}i & j\end{pmatrix}^T$. 
The exponent in the relation \eqref{eq:qudit pauli relation} above then corresponds to a symplectic form $f$ on $\mathbb{Z}_d^2$:  
\begin{align}\label{eq:symplectic form}
    f\left(\begin{pmatrix}j \\ k\end{pmatrix},\begin{pmatrix}s \\ t\end{pmatrix}\right) := \begin{pmatrix}j & k\end{pmatrix}\begin{pmatrix}0 & -1\\1 & 0\end{pmatrix}\begin{pmatrix}s \\ t\end{pmatrix} = ks-jt.
\end{align}
We now consider how the Clifford action translates to the symplectic setting. By definition, the Clifford group $Cl_d$ permutes the Paulis in $\mathcal{P}_d$ by conjugation: this action is fully (and linearly) determined by the images of $X$ and $Z$, so we may map the Clifford action on $\mathcal{P}_d$ to $2\times 2$ matrices over $\mathbb{Z}_d$. 
The Clifford group may not \emph{arbitrarily} permute $X$ and $Z$, however, as it must preserve the relations \eqref{eq:qudit pauli relation}. In particular, the appropriate matrices $M$ must respect the symplectic structure, meaning we must have $f(Mv, Mw) = f(v,w)$ for all $v,w\in\mathbb{Z}_d^2$. Then $M$ must satisfy
\begin{align}
    M^T \begin{pmatrix}0 & -1\\1 & 0\end{pmatrix} M = \begin{pmatrix}0 & -1\\1 & 0\end{pmatrix},
\end{align}
or equivalently
\begin{align}
    \det(M) = wz-xy = 1\textnormal{ for }M = \begin{pmatrix}w & x \\y&z\end{pmatrix}.
\end{align}
Then we may think of the image of the Clifford group $Cl_d$ as either $Sp_2(\mathbb{Z}_d)$ or, in this single-qudit case, $SL_2(\mathbb{Z}_d)$. 
(Note there are no issues with invertibility, despite $\mathbb{Z}_d$ not being a field in general, since we only consider matrices with $\det(M)=1$.)
In summary, we have
\begin{lemma}
    The conjugation action of $Cl_d$ on $\mathcal{P}_d$ is equivalently described by the multiplication action of $Sp_2(\mathbb{Z}_d)=SL_2(\mathbb{Z}_d)$ on $\mathbb{Z}_d^2$. We have the group isomorphism $Cl_d/\mathcal{P}_d\cong Sp_2(\mathbb{Z}_d)$.
\end{lemma}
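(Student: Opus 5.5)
We will prove the lemma by constructing an explicit group homomorphism $\Psi: Cl_d \to SL_2(\mathbb{Z}_d)$. We then show that $\Psi$ is surjective and that its kernel is exactly $\mathcal{P}_d$ (modulo phases). The first isomorphism theorem then gives $Cl_d/\mathcal{P}_d\cong SL_2(\mathbb{Z}_d)$.

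\textbf{Defining $\Psi$.} For $U\in Cl_d$ write $U X_d U^\dagger = \lambda_X X_d^{w}Z_d^{y}$ and $U Z_d U^\dagger = \lambda_Z X_d^{x}Z_d^{z}$ for phases $\lambda_X,\lambda_Z$. Set $\Psi(U)=M_U = \begin{pmatrix}w & x\\ y & z\end{pmatrix}$; its columns are the images of $(1,0)^T$ and $(0,1)^T$.
\begin{itemize}
\item Conjugation is multiplicative, so $U X_d^iZ_d^jU^\dagger \propto X_d^{iw+jx}Z_d^{iy+jz}$. Hence $M_U$ describes the whole action on $\mathcal{P}_d$ modulo phases.
\item Global phases do not change the conjugation action, so $\Psi$ is well defined on $Cl_d$ (which is defined modulo $U(1)$).
\item From the previous point, $M_{UV}=M_UM_V$, so $\Psi$ is a homomorphism.
\item Conjugating the relation \eqref{eq:qudit pauli relation} by $U$ shows $f(M_Uv,M_Uu)=f(v,u)$. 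Applied to $v=(1,0)^T,u=(0,1)^T$, this gives $\det M_U = 1$ in $\mathbb{Z}_d$.
\end{itemize}

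\textbf{Kernel.} Each Pauli $X_d^aZ_d^b$ acts on $\mathcal{P}_d$ by multiplying each element by a phase, so it lies in $\ker\Psi$. Conversely, suppose $M_U=\mathbb{I}$, so $UX_dU^\dagger=\lambda_X X_d$ and $UZ_dU^\dagger=\lambda_ZZ_d$.
\begin{itemize}
\item Raising to the $d$-th power gives $\lambda_X^d=\lambda_Z^d=1$, so $\lambda_X=\omega^{a}$ and $\lambda_Z=\omega^{b}$ for some $a,b$.
\item By \eqref{eq:qudit pauli relation}, there is a Pauli $P$ producing the same pair of phases under conjugation.
\item Then $P^\dagger U$ commutes with $X_d$ and $Z_d$. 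These generate the full matrix algebra $\mathcal{L}(\mathcal{H})$, since $\mathcal{P}_d$ is an operator basis. By Schur, $P^\dagger U$ is a scalar, so $U=P$ modulo $U(1)$.
\end{itemize}

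\textbf{Surjectivity.} This is the main obstacle, because for composite $d$ we cannot use field arguments directly. The plan is to use the standard fact that $SL_2(\mathbb{Z}_d)$ is generated by $S=\begin{pmatrix}1&0\\1&1\end{pmatrix}$ and $F=\begin{pmatrix}0&-1\\1&0\end{pmatrix}$. This holds because $SL_2(\mathbb{Z})\to SL_2(\mathbb{Z}_d)$ is surjective, which follows from the Chinese remainder theorem together with lifting of coprime pairs, and $SL_2(\mathbb{Z})$ is generated by these two matrices.

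It then suffices to realize $F$ and $S$ by unitaries.
\begin{itemize}
\item \textbf{$F$:} The discrete Fourier transform $|j\rangle\mapsto d^{-1/2}\sum_k\omega^{jk}|k\rangle$ maps $X_d\mapsto Z_d^{\pm1}$ and $Z_d\mapsto X_d^{\mp1}$. After fixing sign conventions (possibly using its inverse), it realizes $F$.
\item \textbf{$S$:} Take the phase gate $|j\rangle\mapsto \omega^{j(j-1)/2}|j\rangle$ for odd $d$. For even $d$, use the quadratic phase $e^{\pi i j^2/d}$ with $j$ taken in $\{0,\dots,d-1\}$, together with an extra Pauli correction so that the phase is well defined modulo $d$. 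This maps $X_d\mapsto \text{phase}\cdot X_dZ_d$ and fixes $Z_d$, which realizes $S$.
\end{itemize}

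Checking this even-$d$ phase bookkeeping is the delicate computation we expect to require the most care. Alternatively, one can restrict to the quotient by phases and simply verify that conjugation sends $X_d$ to a scalar multiple of $X_dZ_d$. That is all $\Psi$ detects, since phases are ignored.
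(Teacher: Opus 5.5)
Your proposal is correct, and it is more complete than the paper's argument. The paper identifies $X_d^iZ_d^j$ with $(i,j)^T\in\mathbb{Z}_d^2$ and notes that the Clifford action is linearly determined by the images of $X_d$ and $Z_d$. It then observes that preserving the commutation phases $\omega^{ks-jt}$ forces $M^TJM=J$, equivalently $\det M=1$, because $M^TJM=(\det M)J$ for $2\times 2$ matrices over any commutative ring; this identity is also why $Sp_2(\mathbb{Z}_d)=SL_2(\mathbb{Z}_d)$. Your construction of $\Psi$ and the determinant check follow the same route.

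The paper stops there and asserts $Cl_d/\mathcal{P}_d\cong Sp_2(\mathbb{Z}_d)$ without further argument. That only yields a homomorphism $Cl_d\to SL_2(\mathbb{Z}_d)$ whose kernel contains $\mathcal{P}_d$. Two things are never shown: that nothing beyond the Paulis acts trivially, and that every element of $SL_2(\mathbb{Z}_d)$ is realized. Your two extra steps supply exactly these.
\begin{itemize}
\item Kernel: you show the commutant of $\{X_d,Z_d\}$ is scalar after a Pauli correction.
\item Surjectivity: you lift to $SL_2(\mathbb{Z})$ and implement its two generators explicitly. This matters precisely for composite $d$, where field-based arguments are unavailable.
\end{itemize}

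Two simplifications to the surjectivity step. With your stated Fourier transform $Q$, one gets $QX_dQ^\dagger=Z_d$ and $QZ_dQ^\dagger=X_d^{-1}$, which is exactly $F$, so no sign adjustment is needed. For even $d$, no Pauli correction is needed either. The phase $\zeta^{j^2}$ with $\zeta=e^{\pi i/d}$ is already well defined on $\mathbb{Z}_d$, since $\zeta^{(j+d)^2-j^2}=e^{2\pi i j}e^{\pi i d}=1$ for even $d$. A one-line check, including the wrap-around $j=d-1\to 0$, gives $LX_dL^\dagger=\zeta X_dZ_d$. This is the same gate the paper later uses in its character-RB theorem, so the "delicate bookkeeping" you anticipated is not actually an obstacle.
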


\subsection{The qudit Galois-Clifford group forms a unitary 2-design}
\label{sec:galois}
In the present work, we are mostly concerned with the "standard" qudit Clifford group $Cl_d$ defined above, namely which arises as the normalizer of the group generated by the ``clock and shift'' matrices $X_d$ and $Z_d$. However, there also exists a different definition of the Clifford group, sometimes called the \emph{Galois-Clifford group}, see Ref. \cite{heinrich_stabiliser_2021} for an excellent exposition. It is important to note that the Galois-Clifford group forms a unitary \(2\)-design for all prime-power dimensions while the "standard" qudit Clifford group does \textit{not} \cite{graydon_clifford_2021,heinrich_stabiliser_2021}.

For the rest of this discussion, we let $p$ be a prime and take $d=p^k$. We focus on the projective Clifford group (i.e.., modulo phases). There exists a field $\mathbb{F}_d$ of order $d$, called a \emph{Galois field}. As an additive group, it is isomorphic to $\mathbb{F}_p^{\oplus k}$. Thus we will view $\mathbb{F}_d$ as a copy of $\mathbb{F}_p^{\oplus k}$ with a nontrivial multiplication operation. 
We will index the basis for our Hilbert space $\mathcal{H}\cong \mathbb{C}^d$ by elements of $\mathbb{F}_d$ rather than $\mathbb{Z}_d$. 
We let $\chi$ be a nontrivial \emph{additive character}, a group homomorphism from $\mathbb{F}_d$ (viewed as an additive group) to $\mathbb{C}$. 
We then define the ``Galois-Pauli'' operators as follows (the analogue of the Heisenberg-Weyl Paulis): for $u,x,z\in\mathbb{F}_d$, 
\begin{align*}
    X(x)\ket{u} &= \ket{u+x},
    \\ Z(z)\ket{u} &= \chi(zu)\ket{u}.
\end{align*}
This naturally induces a variant of the Pauli and Clifford groups: let $GP_d$ be the group of (projective) unitaries generated by the $X(x)$ and $Z(z)$ modulo phases, and let $\widetilde{GC}_d$ be the group of (projective) unitaries normalizing $GP_d$. (Alternatively, one can work non-projectively by choosing the phases very carefully, as in Ref. \cite{heinrich_stabiliser_2021}.) 
 
The Galois-Clifford group $GC_d$ is a subgroup of $\widetilde{GC}_d$, chosen so that $GC_d/GP_d\cong Sp_2(\mathbb{F}_d)$, through a homomorphism similar to the one discussed for $Cl_d$ above. 
This is the essential distinction: on the relevant Pauli groups, $GC_d$ acts by $Sp_2(\mathbb{F}_d)$, whereas $Cl_d$ acts by $Sp_2(\mathbb{Z}_d)$. 
In particular, when $d=p$ is prime, these constructions are all equivalent and we have $GC_p = \widetilde{GC}_p = Cl_p$. 

One may carry out similar constructions for multi-qupit systems. Letting $GC_{d,m}$ be the Clifford group for $m$ $d$-dimensional qudits, there is a natural embedding 
\begin{align*}
    GC_{p^k,m} \hookrightarrow GC_{p,km},
\end{align*}
and in fact $\widetilde{GC}_{p^k,m}\cong GC_{p,km} = \widetilde{GC}_{p,km}$ \cite{heinrich_stabiliser_2021}. The distinction between "which Clifford group" to choose in the qudit case has been the source of confusion about the (non)existence of unitary \(2\)-designs, e.g., in Refs. \cite{Jafarzadeh2020,Goswami2021}.

\subsection{Exact qudit designs: the Wootters and Fields construction}
While (unweighted) unitary designs exist for all \(t,d \in \mathbb{N}\) \cite{roy_unitary_2009}, explicit examples of these are only known in special cases. It is, a priori, entirely unclear how one would go about searching for them. This is where the connection to group-theory has been quite successful. By utilizing the representation theory of certain groups and leveraging character tables, a systematic classification of unitary designs in various dimensions was obtained in \cite{bannai_unitary_2018}. For quantum information, the main takeaway from this classification is that for all prime-power dimensions, the Clifford group forms a unitary \(2\)-design. In fact, the only known family (as a function of the dimension) of unitary designs for \(t>1\) is the Clifford group.

The respite offered by group-theoretic methods is both a boon and a bane. For e.g., the results of \cite{bannai_unitary_2018} also tells us that so-called "unitary \(t\)-groups" do not exist for non-prime-power dimensions such as \(d=6,10\). Of course one can always find \textit{sporadic} groups for these dimensions, but we run out of them as well. As a result, it is not obvious how to perform quantum information primitives that rely on the unitary \(2\)- or \(3\)-design property such as RB and classical shadow tomography. Our weighted unitary \(2\)- and \(3\)-designs allow one to circumvent these issues. We summarize in \cref{table:examples-dimensions}, known families of qudit unitary and state designs.

In Ref. \cite{wootters_optimal_1989}, Wootters and Fields introduced the following set of states (following the notation in Ref. \cite{iosue_continuous-variable_2024}),
\begin{align}
| q_{1}, q_{2} \rangle \equiv \frac{1}{\sqrt{d}} \sum_{n=0}^{d-1} \exp \left[\frac{2 \pi \mathrm{i}}{d}\left(q_{1} n+q_{2} n^{2}\right)\right]|n\rangle
\end{align}
For each \(q_{2} \in\{0, \ldots, d-1\}\) it is easy to show that \(\left\{\left|q_{1},q_{2}\right\rangle \mid q_{1} \in\{0, \ldots, d-1\}\right\}\) forms an orthonormal basis. Moreover, if \(d\) is prime, then the \(d^2\) states above, along with the computational basis form an exact \(2\)-design; and also a set of \(d+1\) MUBs.

 In Ref. \cite{iosue_continuous-variable_2024}, the authors generalized the discrete set of states here to a \textit{continuous} design for an infinite-dimensional (separable) Hilbert space. They introduce the \textit{non-normalizable} states,
\begin{align}
|\theta,\phi \rangle := \frac{1}{\sqrt{2 \pi}} \sum_{n \in \mathbb{N}_{0}} \exp \left[\mathrm{i}\left(\theta n+\varphi n^{2}\right)\right]|n\rangle.
\end{align}
They go on to show that the Fock basis for \(\mathcal{H}\) along with the states \(\{ | \theta,\phi \rangle \}\) which are now continuously parametrized, when integrated over \(\theta,\phi\) form a ``rigged \(2\)-design''. More formally, let \(\Pi_{2} := \frac{1}{2} \left( \mathbb{I} + \mathbb{S} \right)\) be the symmetric projector on \(\mathcal{H}^{\otimes 2}\), then,
\begin{align}
\label{eq:infinite-dimensional-2-design}
\Pi_{2} = \frac{1}{2} \left( \sum\limits_{n \in \mathbb{N}_{0}}^{} | n \rangle \langle  n | ^{\otimes 2} \right) + \frac{1}{2} \int\limits_{-\pi}^{\pi} d\phi \int\limits_{-\pi}^{\pi} d \theta \left( | \theta,\phi \rangle \langle  \theta,\phi |\right) ^{\otimes 2} . 
\end{align}

One of the key results of our work stems from the observation that, the (continuous set of) states above, when appropriately redefined for a \textit{finite} \(d\)-dimensional Hilbert space, form a weighted \(2\)-design for \textit{any} \(d \in \mathbb{N}\). Perhaps even more importantly, this finite-dimensional construction does not suffer from any of the issues that arise in the infinite-dimensional case, which is what led the authors of Ref. \cite{iosue_continuous-variable_2024} to introduce rigged \(t\)-designs and so on.

 More formally, let \(\mathcal{H} \cong \mathbb{C}^{d}\) be an arbitrary finite-dimensional Hilbert space with \(\mathbb{B} = \{ | n \rangle \}_{n=0}^{d-1}\), an orthonormal basis (which we refer to as the `Fock' basis). Define the `phase' states (w.r.t. the Fock basis) as,
\begin{align}
\label{eq:phase-states-qudits}
|\theta, \phi\rangle:=\frac{1}{\sqrt{d}} \sum_{n=0}^{d-1} \exp \left[i\left(\theta n+\phi n^{2}\right)\right]|n\rangle,
\end{align}
where \(\theta, \phi \in [-\pi,\pi)\). Then the Fock states along with the phase states form a \textit{weighted} \(2\)-design for \(\mathcal{H}\). We summarize this in the following observation.
\begin{proposition}
The following ensemble of states forms a continuous weighted state 2-design for any finite-dimensional qudit.
\begin{align}
\Pi_{2} = \frac{1}{2} \left( \sum_{n=0}^{d-1} | n \rangle \langle  n | ^{\otimes 2} \right) + \frac{d^2}{2} \int\limits_{-\pi}^{\pi} d\phi \int\limits_{-\pi}^{\pi} d \theta \left( | \theta,\phi \rangle \langle  \theta,\phi |\right) ^{\otimes 2}. 
\end{align}
\end{proposition}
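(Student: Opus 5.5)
The plan is to compute the right-hand side directly in the Fock basis and check that it reproduces $\Pi_2=\frac12(\mathbb{I}+\mathbb{S})$ entry by entry. No representation theory is needed: the identity should come down to an elementary statement about integer solutions of two Diophantine equations. That is the finite-dimensional counterpart of the argument behind \eqref{eq:infinite-dimensional-2-design} in Ref.~\cite{iosue_continuous-variable_2024}. In the finite case there are no convergence issues, since every sum is finite.

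\textbf{Step 1: expand the phase-state term.} Using \eqref{eq:phase-states-qudits},
\begin{align*}
\left( | \theta,\phi \rangle \langle \theta,\phi | \right)^{\otimes 2} = \frac{1}{d^{2}} \sum_{n_1,n_2,m_1,m_2=0}^{d-1} e^{i\theta (n_1+n_2-m_1-m_2)}\, e^{i\phi (n_1^2+n_2^2-m_1^2-m_2^2)}\, | n_1 n_2 \rangle \langle m_1 m_2 | .
\end{align*}
The exponents are genuine integers, not integers modulo $d$. Integrating over $\theta,\phi\in[-\pi,\pi)$ therefore yields $(2\pi)^2$ times the indicator of the two constraints $n_1+n_2=m_1+m_2$ and $n_1^2+n_2^2=m_1^2+m_2^2$.

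\textbf{Step 2: solve the constraints.} Fixing the sum $s$ and the sum of squares fixes $n_1 n_2=(s^2-q)/2$. Hence $\{n_1,n_2\}$ and $\{m_1,m_2\}$ are the two roots of the same quadratic, so they coincide as multisets. The surviving terms are therefore $(m_1,m_2)=(n_1,n_2)$ or $(m_1,m_2)=(n_2,n_1)$, and these two cases coincide exactly when $n_1=n_2$. Summing gives
\begin{align*}
\int_{-\pi}^{\pi}\! d\phi \int_{-\pi}^{\pi}\! d\theta \left( | \theta,\phi \rangle \langle \theta,\phi | \right)^{\otimes 2} = \frac{4\pi^2}{d^2}\Big( \mathbb{I} + \mathbb{S} - \sum_{n=0}^{d-1} | n \rangle \langle n |^{\otimes 2} \Big),
\end{align*}
where the subtraction removes the double counting of the diagonal terms $|nn\rangle\langle nn|$.

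\textbf{Step 3: combine with the Fock term.} Adding $\frac12 \sum_n |n\rangle\langle n|^{\otimes 2}$ cancels the diagonal correction exactly and leaves $\frac12(\mathbb{I}+\mathbb{S})=\Pi_2$. This requires the phase-state term to carry the weight $\frac{d^2}{2}$ together with the normalized measure $\frac{d\theta\, d\phi}{4\pi^2}$. I would state this normalization explicitly, since it absorbs the $(2\pi)^2$ from Step 1.

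\textbf{Step 4: express it as a weighted design.} Dividing by $\operatorname{Tr}\Pi_2=\binom{d+1}{2}$ gives a probability measure on states whose second moment equals $\mathbb{E}_{\mathrm{Haar}}(|\psi\rangle\langle\psi|)^{\otimes 2}$. Tracing both sides shows the weights are consistent: the Fock part has total mass $d/2$, and the phase part has total mass $d^2/2$. These add to $d(d+1)/2$, as required. Equivalently, the Welch criterion \eqref{eq:welch-test-weighted-design} holds for $t=2$.

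The main obstacle is only bookkeeping: getting the double counting on the diagonal $n_1=n_2$ right, and fixing the measure normalization so that the prefactor $\frac{d^2}{2}$ matches. The uniqueness argument in Step 2 is the one substantive point. It is exactly where working over $\mathbb{Z}$ rather than $\mathbb{Z}_d$ matters, and it is why the construction works for every $d$, unlike the Wootters--Fields states, which need $d$ prime.
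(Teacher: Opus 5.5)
Your proof is correct and is essentially the argument the paper relies on. The paper gives no separate derivation: it presents the proposition as the finite-dimensional transcription of the identity \eqref{eq:infinite-dimensional-2-design} of Ref.~\cite{iosue_continuous-variable_2024}. That identity rests on the same Fourier integration over $\theta,\phi$, which reduces to the integer constraints $n_1+n_2=m_1+m_2$ and $n_1^2+n_2^2=m_1^2+m_2^2$.

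Your normalization remark is a genuine correction to the displayed formula, not a cosmetic one. With the plain Lebesgue measure $d\theta\,d\phi$ on $[-\pi,\pi)^2$, the right-hand side has trace $d/2+2\pi^2 d^2\neq d(d+1)/2$. So the phase-state prefactor must be $d^2/(8\pi^2)$, or equivalently $d^2/2$ with the normalized measure $d\theta\,d\phi/(4\pi^2)$. This corrected form also agrees with the paper's later discrete weights $w_{\mathrm{Fock}}=1/(d(d+1))$ and $w_{\mathrm{Phase}}=d/(p^2(d+1))$.
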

Notice that, unlike the projector in the infinite-dimensional case, \cref{eq:infinite-dimensional-2-design}, we have a prefactor of \(d^2/2\) in front of the phase states. This is the reason why the states above form a \textit{weighted} \(2\)-design and not a \textit{uniform} \(2\)-design.

\subsection{Generating weighted state designs from uniform designs}
In Ref. \cite{zhu_clifford_2016}, the following informal theorem was implied: Given a (uniform) \(t\)-design, \(\mathcal{E} = \{ | \psi_{j} \rangle \}\) in \(d\)-dimensions, it induces a \textit{weighted} \(t\)-design for all \(\tilde{d} \leq d\). Define the ensemble \(\tilde{\mathcal{E}}\) as,
\begin{align}
\tilde{\mathcal{E}} = \{ \tilde{w_{j}}, | \tilde{\psi_{j}} \rangle \}, \quad | \tilde{\psi_{j}} \rangle := \frac{1}{\left\Vert P | \psi_{j} \rangle \right\Vert_{}^{}} | \psi_{j} \rangle \quad \text{and} \quad  \tilde{w_{j}} = \left\Vert P | \psi_{j} \rangle \right\Vert_{}^{}.
\end{align}
Here, \(P\) is a projector onto \(\mathbb{V} \subset \mathcal{H}\), a \(\tilde{d}\)-dimensional subspace of \(\mathcal{H}\). Then, \(\tilde{\mathcal{E}}\) is a weighted \(t\)-design. Unfortunately, the above claim is false. We provide a simple modification to fix this issue\footnote{To the best of our knowledge, the first such construction was observed in Ref. \cite{zhu_clifford_2016}. However, the result as stated there has typos/mistakes. The authors independently realized how to (simply) fix this but would like to point to the original (yet unpublished) paper. In particular, the authors would like to thank Huangjun Zhu for private communication on this topic.}. \\

\begin{theorem}[Weighted state designs via projections \cite{zhu_clifford_2016}]
\label{thm:weighted-design-from-uniform}
Given a uniform state \(t\)-design, \(\mathcal{E} = \{ | \psi_{j} \rangle \}\) in a \(q\)-dimensional space, it induces a \textit{weighted} \(t\)-design for all \(d \leq q\). The weighted state \(t\)-design can be obtained as \(\tilde{\mathcal{E}} = \{ \tilde{w_{j}}, | \tilde{\psi_{j}} \rangle \}\), where,
\begin{align}
| \tilde{\psi_{j}} \rangle := \frac{1}{\left\Vert P | \psi_{j} \rangle \right\Vert_{}^{}} P | \psi_{j} \rangle, \tilde{w_{j}} = \nu \left\Vert P | \psi_{j} \rangle \right\Vert_{}^{2t}, \text{ and } \nu = \frac{1}{\sum\limits_{j=1}^{\left| \mathcal{E} \right|} \left\Vert P | \psi_{j} \rangle \right\Vert_{}^{2t}}.
\end{align}
\end{theorem}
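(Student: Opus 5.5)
The plan is to work with the defining moment-operator identity rather than the Welch test. A uniform state $t$-design $\mathcal{E}=\{|\psi_j\rangle\}_{j=1}^{N}$ in $\mathbb{C}^q$ satisfies
\begin{align}
\frac{1}{N}\sum_{j=1}^{N} \left(|\psi_j\rangle\langle\psi_j|\right)^{\otimes t} = \frac{1}{\binom{q+t-1}{t}} \Pi^{(q)}_{\mathrm{sym}},
\end{align}
where $\Pi^{(q)}_{\mathrm{sym}}$ is the projector onto the symmetric subspace of $(\mathbb{C}^q)^{\otimes t}$. I will conjugate both sides by $P^{\otimes t}$, where $P$ is the projector onto the $d$-dimensional subspace $\mathbb{V}$, and show that the result is proportional to the symmetric projector of $\mathbb{V}^{\otimes t}$.

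\textbf{Left-hand side.} Conjugating gives $\frac{1}{N}\sum_j (P|\psi_j\rangle\langle\psi_j|P)^{\otimes t}$. Writing $P|\psi_j\rangle=\|P|\psi_j\rangle\|\,|\tilde\psi_j\rangle$, this equals
\begin{align}
\frac{1}{N}\sum_j \|P|\psi_j\rangle\|^{2t}\left(|\tilde\psi_j\rangle\langle\tilde\psi_j|\right)^{\otimes t}.
\end{align}
Indices with $P|\psi_j\rangle=0$ contribute nothing, so they are simply discarded, which also keeps $|\tilde\psi_j\rangle$ well defined. This computation shows that the weight must carry the power $2t$, not $1$: the original claim used $\|P\psi_j\|$, and the power $2t$ is exactly the fix the theorem records.

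\textbf{Right-hand side.} $P^{\otimes t}$ commutes with every permutation operator $W_\pi$, because it is a tensor power of a single operator. Hence
\begin{align}
P^{\otimes t}\,\Pi^{(q)}_{\mathrm{sym}}\,P^{\otimes t}=\frac{1}{t!}\sum_{\pi} W_\pi P^{\otimes t},
\end{align}
and this is precisely the symmetric projector $\Pi^{(d)}_{\mathrm{sym}}$ on $\mathbb{V}^{\otimes t}$. Therefore
\begin{align}
\sum_j \|P|\psi_j\rangle\|^{2t}\left(|\tilde\psi_j\rangle\langle\tilde\psi_j|\right)^{\otimes t} = \frac{N}{\binom{q+t-1}{t}}\,\Pi^{(d)}_{\mathrm{sym}}.
\end{align}

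\textbf{Normalization.} Take the trace of both sides. The left side gives $\sum_j \|P|\psi_j\rangle\|^{2t}=1/\nu$, and the right side gives $N\binom{d+t-1}{t}/\binom{q+t-1}{t}$. Multiplying through by $\nu$ then yields
\begin{align}
\sum_j \tilde w_j \left(|\tilde\psi_j\rangle\langle\tilde\psi_j|\right)^{\otimes t}=\frac{\Pi^{(d)}_{\mathrm{sym}}}{\binom{d+t-1}{t}},
\end{align}
with $\sum_j \tilde w_j=1$. This is the defining property of a weighted $t$-design on $\mathbb{V}\cong\mathbb{C}^d$. If desired, it can be cross-checked against the Welch test \eqref{eq:welch-test-weighted-design}: take the trace of the square of this moment operator.

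\textbf{Main subtlety.} The only nonroutine point is the identity $P^{\otimes t}\Pi^{(q)}_{\mathrm{sym}}P^{\otimes t}=\Pi^{(d)}_{\mathrm{sym}}$. It follows because both $P^{\otimes t}$ and $\Pi^{(q)}_{\mathrm{sym}}$ are built from operators that commute with the permutations, and $W_\pi$ restricted to $\mathbb{V}^{\otimes t}$ is the permutation operator of $\mathbb{V}^{\otimes t}$. Everything else is bookkeeping of the normalization.
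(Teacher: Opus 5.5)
Your proposal is correct and follows the same route as the paper. The weights $\nu\|P|\psi_j\rangle\|^{2t}$ cancel the normalization of the projected states, leaving $\nu\sum_j (P|\psi_j\rangle\langle\psi_j|P)^{\otimes t}$, which is then identified with the normalized symmetric projector on $\mathbb{V}^{\otimes t}$. The paper only carries this out explicitly for $t=1$ and appeals to multilinearity for general $t$; you supply that general step directly via $[P^{\otimes t},W_\pi]=0$ and fix $\nu$ with a trace computation.
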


A simple calculation reveals that the chosen weights cancel out the normalization of the new states \(| \tilde{\psi_{j}} \rangle \in \mathbb V\), i.e.,
\begin{align}
\label{eq:simplify-weighted-design-to-uniform-design}
\sum\limits_{j=1}^{\left| \mathcal{E} \right|} \tilde{w_{j}} | \tilde{\psi_{j}} \rangle \langle  \tilde{\psi_{j}} |^{\otimes t} =  \frac{1}{\sum\limits_{j=1}^{\left| \mathcal{E} \right|} \left\Vert P | \psi_{j} \rangle \right\Vert_{}^{2t}} \sum\limits_{j=1}^{\left| \mathcal{E} \right|} \left( P | \psi_{j} \rangle \langle  \psi_{j} | P \right)^{\otimes t}. 
\end{align}
Now it remains to be shown that this is equal to the symmetric projector on \(\mathbb{V}^{\otimes t}\). Let us see how this works for the case of \(t=1\), the generalization to larger \(t\) is straightforward due to multilinearity. For \(t=1\), let us consider a uniform \(1\)-design for \(\mathcal{H}\). Of course we can choose a \(t>1\) design but this is sufficient for illustrative purposes. Since any orthonormal basis of \(\mathcal{H}\) suffices to form a \(1\)-design, we have \(\left| \mathcal{E} \right| = d\) and  \(\sum_{j=1}^{\left| \mathcal{E} \right|}  | \psi_{j} \rangle \langle  \psi_{j} | = \mathbb{I} \). As a result, \cref{eq:simplify-weighted-design-to-uniform-design} simplifies to, \(\sum_{j=1}^{\left| \mathcal{E} \right|} P | \psi_{j} \rangle \langle  \psi_{j} | P = P \mathbb{I} P = P\). But notice that \(P |_{\mathbb{V}} = \mathbb{I}_{\mathbb{V}}\) that is, the projector restricted to the subspace \(\mathbb{V}\) is simply the identity operator on the subspace. The normalization can be simplified as
\begin{align}
\sum_{j=1}^{\left| \mathcal{E} \right|} \left\Vert P | \psi_{j} \rangle \right\Vert_{}^{2} = \sum_{j=1}^{\left| \mathcal{E} \right|} \operatorname{Tr}\left[ P | \psi_{j} \rangle \langle  \psi_{j} | P \right] = \operatorname{Tr}\left[ P \right] = \tilde{d}.
\end{align}
Therefore, the RHS of \cref{eq:simplify-weighted-design-to-uniform-design} is equal to \(P/\tilde{d}\) as expected.

An immediate consequence of this projection formula is that it immediately allows for qudit \(3\)-designs in arbitrary dimensions. This follows from the fact that multiqubit Clifford group forms a \(3\)-design and choosing a projector onto the \(d < 2^n\)-dimensional subspace.

\subsection{Exact weighted \texorpdfstring{\(2\)}{2}-design in arbitrary dimensions}
Notice that this is one of the few cases where a \(2\)-design is generated by states that do \textit{not} come from the (orbit of the) Clifford group. Let us try to construct an example. Let \(p > 2\) be an odd prime. Then, the Wootters-Fields construction gives us a \(2\)-design in every \(p\)-dimension. The elements of this ensemble are,
\begin{align}
\mathcal{E} = \{ | 0 \rangle, | 1 \rangle, \cdots, | p-1 \rangle \} \cup \{ | q_{1}, q_{2} \rangle \}_{q_{1}, q_{2} = 0}^{p-1},
\end{align}
with \(\left| \mathcal{E} \right| = p(p+1)\). Since the projector in the \cref{thm:weighted-design-from-uniform} above is completely general, we can simply choose it to be of the form, \(P = \sum\limits_{n=0}^{d-1} | n \rangle \langle n |\), where \(d = \mathrm{dim}(\mathbb{V}), d \leq p\) is \textit{not} necessarily prime. This allows us to formulate the following result.\\

\begin{theorem}[Exact weighted state \(2\)-design in any qudit dimension]
Let \(p, d \in \mathbb{N}\) and \(p \geq d\) be an odd prime. Then the following ensemble forms a weighted \(2\)-design for every qudit dimension \(d\). 
\begin{align}
&\mathcal{E} = \{ | 0 \rangle, | 1 \rangle, \cdots, | d-1 \rangle \} \cup \{| \widetilde{\theta,\phi} \rangle\}_{\theta,\phi}, \text{ where } | \widetilde{\theta,\phi} \rangle := \frac{1}{\sqrt{d}} \sum_{n=0}^{d-1} \exp \left[ \frac{2 \pi i}{p} \left( \theta n + \phi n^{2} \right) \right] | n \rangle,\\ \nonumber
&\text{and } \theta,\phi \in \{ 0,1, \cdots, p-1 \}.
\end{align}
\end{theorem}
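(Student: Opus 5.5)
The plan is to obtain the ensemble as a direct specialization of \cref{thm:weighted-design-from-uniform}. The uniform design is the Wootters--Fields $2$-design in prime dimension $p$,
\begin{align*}
\mathcal{E}_p = \{ |0\rangle, \dots, |p-1\rangle \} \cup \{ |q_1,q_2\rangle \}_{q_1,q_2=0}^{p-1},
\end{align*}
with $|\mathcal{E}_p| = p(p+1)$. I take as given (as stated above) that this is an exact uniform state $2$-design, equivalently $d+1$ MUBs in prime dimension. The projector is $P = \sum_{n=0}^{d-1}|n\rangle\langle n|$ onto the span $\mathbb{V}$ of the first $d$ Fock states, with $d\le p$.

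First I would make sure the projection theorem really holds for $t=2$, since the paper only sketches $t=1$. Uniformity gives
\begin{align*}
\frac{1}{p(p+1)}\sum_j (|\psi_j\rangle\langle\psi_j|)^{\otimes 2} = \frac{\Pi_{\mathrm{sym}}^{(p)}}{\binom{p+1}{2}}.
\end{align*}
Conjugating by $P^{\otimes 2}$, and using that $P^{\otimes 2}$ commutes with the swap $\mathbb{S}$, gives
\begin{align*}
P^{\otimes 2}\,\Pi_{\mathrm{sym}}^{(p)}\,P^{\otimes 2} = \tfrac12\left(P^{\otimes 2}+\mathbb{S}P^{\otimes 2}\right) = \Pi_{\mathrm{sym}}^{(\mathbb{V})}.
\end{align*}
Hence $\sum_j (P|\psi_j\rangle\langle\psi_j|P)^{\otimes 2}$ is proportional to the symmetric projector on $\mathbb{V}^{\otimes 2}$. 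Taking traces fixes the constant, and dividing by $\sum_j\Vert P|\psi_j\rangle\Vert^4$ gives exactly $\Pi^{(\mathbb{V})}_{\mathrm{sym}}/\binom{d+1}{2}$ via \cref{eq:simplify-weighted-design-to-uniform-design}. This is the step I regard as the main (though mild) obstacle. Everything else is bookkeeping, but this is where the correction to the original claim enters: the weights must be $\Vert P\psi_j\Vert^{2t}$, not $\Vert P\psi_j\Vert$.

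Next I compute the projected ensemble explicitly.
\begin{itemize}
\item For $n<d$, $P|n\rangle=|n\rangle$ with norm $1$. For $n\ge d$, $P|n\rangle=0$, so these receive weight $0$ and drop out.
\item For the phase states, $P|q_1,q_2\rangle = \frac{1}{\sqrt p}\sum_{n=0}^{d-1} e^{\frac{2\pi i}{p}(q_1 n+q_2 n^2)}|n\rangle$. Its squared norm is $d/p$ for every $(q_1,q_2)$, and normalizing gives exactly $|\widetilde{\theta,\phi}\rangle$ with $(\theta,\phi)=(q_1,q_2)$.
\end{itemize}
The normalization is then
\begin{align*}
\nu^{-1} = d\cdot 1 + p^2 (d/p)^2 = d(d+1).
\end{align*}
So the weights are $1/(d(d+1))$ on each Fock state $|n\rangle$, $n<d$, and $\nu\,(d/p)^2 = d/(p^2(d+1))$ on each of the $p^2$ projected phase states. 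As a sanity check these sum to $\frac{1}{d+1}+\frac{d}{d+1}=1$.

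Finally I would state that the ensemble in the theorem, with these weights, is a weighted $2$-design by \cref{thm:weighted-design-from-uniform}. As an independent confirmation I could verify the Welch identity \cref{eq:welch-test-weighted-design} directly, with target value $2/(d(d+1))$. The Fock--Fock overlaps give $d$ terms of weight $\nu^2$. The Fock--phase overlaps $|\langle n|\widetilde{\theta,\phi}\rangle|^4=1/d^2$ are trivially computed. The phase--phase terms involve $\sum_{\theta,\phi}|\langle\widetilde{\theta,\phi}|\widetilde{\theta',\phi'}\rangle|^4$, which I would evaluate by expanding the fourth power and summing the character sums over $\theta,\phi\in\mathbb{Z}_p$. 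That summation forces $n_1-n_2+n_3-n_4\equiv 0$ and $n_1^2-n_2^2+n_3^2-n_4^2\equiv 0 \pmod p$. Since all $n_i<d\le p$ and $p$ is odd, these congruences admit only the solutions $\{n_1,n_3\}=\{n_2,n_4\}$, which reproduces the count $2d^2-d$ and hence the correct total.
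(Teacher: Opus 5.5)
Your proposal is correct and follows the paper's route: you apply the projection theorem (Theorem~\ref{thm:weighted-design-from-uniform}) to the uniform Wootters--Fields $2$-design in prime dimension $p$, with $P=\sum_{n<d}|n\rangle\langle n|$, and obtain exactly the paper's weights $1/(d(d+1))$ and $d/(p^2(d+1))$. Your explicit $t=2$ check of the projection step and your independent Welch-sum verification (the count $2d^2-d$, which uses $p$ odd and $n_i<d\le p$) supply details the paper leaves implicit; one small slip is that the Wootters--Fields ensemble consists of $p+1$ MUBs, not $d+1$.
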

Note here that \(\theta,\phi \in \{ 0,1, \cdots, p-1 \}\) so we still have \(p^{2}\) such elements. That is, the size of this ensemble is \(p^{2} + d \geq d (d+1)\) and in general is \textit{not} the optimal ensemble. The number of extra elements (on top of the optimal bound) is \(p^{2} - d^{2}\).

The weights associated with this ensemble are,
\begin{align}
w_{\mathrm{Fock}} = \frac{1}{d (d+1)} \quad \text{and} \quad w_{\mathrm{Phase}} =  \frac{d}{p^{2} (d+1)}.
\end{align}

As an aside note that this means that in \textit{any} truncated Fock state simulation, we always have access to a \(t=3\) design and do not need to use the rigged design construction of \cite{iosue_continuous-variable_2024}.

\subsubsection{Example of weighted state \(2\)- and \(3\)-designs for \(d=6\).}
The smallest dimension where the unitary \(t\)-group structure restricts the existence of a design is \(d=6\), although, this is one of those cases where there exist "sporadic groups," and therefore \(d=6\) does have a unitary group \(2\)-design but not a unitary group \(3\)-design \cite{bannai_unitary_2018}. Nonetheless, we can construct explicit examples of weighted \(2\)- and \(3\)-designs. For the \(2\)-design case we simply list the states from the rigged design scenarios. For the \(3\)-design case, we construct them from the Clifford group for \(d=2^3\). Notice that the Clifford group (and hence stabilizer states) form an exact multiqubit \(3\)-design in dimensions \(2^n\) \cite{zhu_clifford_2016,webb_clifford_2016,kueng_qubit_2015}. Constructing weighted \(3\)-designs can become nontrivial since the number of stabilizer states for \(n\) qubits grows exponentially as \(2^{(n+1)(n+2)/2}\) \cite{gross_hudsons_2006} and one needs to compute inner products between these states and the projector to use the projection lemma. However, since one can efficiently estimate the inner products between stabilizer states \cite{https://doi.org/10.48550/arxiv.1210.6646}, we can easily construct weighted state \(3\)-designs in any dimension \(d\), albeit not in an optimal way.

\section{Practical applications}
\label{sec:Application to spin and cavity-QED qudits}

\subsection{Spin-coherent states and the manipulation of high-spin nuclei}
\label{subsec:spin-coherent}
Here we briefly introduce the framework of spin-coherent states (SCSs) following the excellent reference (modulo some typos) \cite{goldberg_extremal_2020}. A variant of the discussion here can also be found in the supplemental material of Ref. \cite{yu2024schrodingercat} by a subset of the authors. We will focus on a  \((2S+1)\)-dimensional Hilbert space, \(\mathcal{H}_{S}\) which can host a irreducible representation of \(SU(2)\)\footnote{In \cite{mourik_exploring_2018} they use \(I\) for the spin quantum number that we refer to as \(S\) instead}. That is, we need a irreducible representation of the Lie algebra \(su(2)\) defined by the generators, \(\left\{\hat{S}_{x}, \hat{S}_{y}, \hat{S}_{z}\right\}\) and the canonical commutation relation,
\begin{align}
\left[\hat{S}_{x}, \hat{S}_{y}\right]=i \hat{S}_{z},
\end{align}
and cyclic permutations thereof. The Casimir operator \(S^{2} = S^{2}_{x} + S^{2}_{y} + S^{2}_{z} = S(S+1) \mathbb{I}\) commutes with all elements of the Lie algebra. Recall that the Hilbert space is spanned by the (simultaneous) eigenbasis of \(S^{2}, S_{z}\), i.e.,
\begin{align}
&\mathcal{H}_{S} = \mathrm{span} \{ | S,m \rangle ~|~ m = -S, \cdots, S \},\\
& S^{2} | S,m \rangle = S(S+1) | S,m \rangle \\
& S_{z} | S,m \rangle = m | S,m \rangle.
\end{align}
The raising and lowering operators act as,
\begin{align}
\label{eq:raising-lowering-defn}
\hat{S}_{+}|S, m\rangle & =\sqrt{(S-m)(S+m+1)}|S, m+1\rangle \\ \hat{S}_{-}|S, m\rangle & =\sqrt{(S+m)(S-m+1)}\rangle|S, m-1\rangle
\end{align}

The phase space is \(SU(2)/U(1) \cong S^2\), i.e., the Bloch sphere. Let us define the displacement operator as
\begin{align}
\hat{D}(\theta, \phi)= \exp \left[ \frac{\theta}{2} \left( e^{i \phi} S_{-} - e^{-i \phi} S_{+} \right) \right] .
\end{align}
If we define \(\mathbf{n}=(\sin \theta \cos \phi, \sin \theta \sin \phi, \cos \theta)\) then, \(SU(2)\) coherent states can be equivalently defined as
\begin{align}
|\mathbf{n}\rangle=\hat{D}(\mathbf{n})|S, S\rangle,
\end{align}
where \(\theta,\phi\) represent the spherical coordinates of the unit vector \(\mathbf{n}\) on the Bloch sphere. Using a stereographic projection, we can define the coordinates
\begin{align}
\zeta=\tan (\theta / 2) e^{i \phi}, \quad \eta=\ln \left(1+|\zeta|^{2}\right).
\end{align}
This lets us define \(SU(2)\) coherent states in an analogous form
\begin{align}
|\mathbf{n}\rangle=\frac{1}{\left(1+|\zeta|^{2}\right)^{S}} \exp \left(\zeta \hat{S}_{-}\right)|S, S\rangle.
\end{align}

Some important properties of SCS include being eigenstate of the spin operator pointing in the direction \(\mathbf{n}\) on the Bloch sphere: \(\hat{S}_{\mathbf{n}}=\hat{\mathbf{S}} \cdot \mathbf{n}\) then \(\hat{S}_{\mathbf{n}}|\mathbf{n}\rangle=-S|\mathbf{n}\rangle\). Non-orthogonality between different SCS: \(\left|\left\langle\mathbf{n}_{1} \mid \mathbf{n}_{2}\right\rangle\right|^{2}=\left[\frac{1}{2}\left(1+\mathbf{n}_{1} \cdot \mathbf{n}_{2}\right)\right]^{2 S}\). And a continuous resolution of identity: 
\begin{align}
\frac{1}{4 \pi} \int_{S^{2}} d \mathbf{n}|\mathbf{n}\rangle\langle\mathbf{n}|=\frac{\mathbb{I}}{d},
\end{align}
where \(d = 2S+1\). In particular, the latter shows that SCS form a \textit{continuous} \(1\)-design, where $d\mathbf{n} = \sin(\theta)d\theta d\phi$.

\subsection{Experimental platforms for qudits}
Qudits may seem relatively rare in practical quantum information processing, but they are ubiquitous in the physical world \cite{wang_qudits_2020}. Many qubit platforms are, in fact, the truncation to the lowest energy doublet of more complex, high-dimensional systems. The steady progress in understanding the computational value of high-dimensional quantum information \cite{kiktenko2025colloquium} has motivated revisiting certain platforms to embrace, rather than avoid, the existence of more than two quantum levels. Some of the earliest qudit demonstrations originate from atomic and optical physics. Sophisticated qudit control has been demonstrated with cesium \cite{PhysRevX.11.021010} and dysprosium atoms \cite{Chalopin2018}, and more recently in trapped ions \cite{Hrmo2023}. Photonics offers a wide range of high-dimensional quantum states, such as high orbital angular momentum states \cite{malik2016multi}, or spatial spatial \cite{wang2018multidimensional}, temporal \cite{martin2017quantifying} and frequency \cite{kues2017chip} modes. Natural examples of physical qudits are spins with quantum number $S>1/2$, resulting in Hilbert space dimension $d=2S+1$. Such high-spin systems naturally occur in nuclear or electronic spins in the solid state \cite{asaad2020coherent,yu2024schrodingercat,fu2022experimental,soltamov2019excitation}, and magnetic molecules \cite{thiele2014electrically}. 

Qudit can also be engineered in artificial quantum systems. For instance, the transmon is the workhorse of superconducting quantum processors. In its common use as a qubit, a transmon is the truncation to the lowest two energy states of an anharmonic oscillator. However, the parameters of the transmon can be optimized to facilitate the addressing of more than just two states \cite{bianchetti2010control,blok2021quantum}, all the way to the simulation of an 8-dimensional spin \cite{champion2024multi}. A harmonic oscillator can also be viewed as a qudit with $d\rightarrow \infty$. This way of looking at a harmonic oscillator gives rise to what is known as continuous variables quantum computing \cite{braunstein_quantum_2005}. It is also possible to remain within the realm of discrete variables quantum information, by coupling a few of the Fock states to an ancilla transmon to control the transition matrix elements between the Fock states \cite{roy2024synthetic}. An interesting new direction is the creation of logical finite-dimensional qudits, on top of states defined by continuous variables such as the Gotteman-Kitaev-Preskill (GKP) ones \cite{brock_quantum_2025}. Curiously, the seminal GKP proposal itself \cite{gottesman_encoding_2001} began by considering the encoding of a qudit, with the qubit being a special case.    

A minimal Hamiltonian that allows universal qudit control may take the form:
\begin{align}
    \mathcal{H}_0 = \alpha S_z + \beta S_z^2 + \mathcal{H}_{\rm control}(t).
\end{align}
This form results in a natural basis $\mathbb{B}=\{\ket{m}\}_{m=-S}^S$, where $\ket{m}$ are eigenvectors of $S_z$ and, in the case of an actual spin system, represent the projections along the quantization axis. If $|\alpha| \gg |\beta|$, the energy spacing between each pair of adjacent eigenstates $\ket{m},\ket{m-1}$ is $\approx \alpha + 2\beta(m-1/2)$. For simplicity we assume that the operator norm of \(H_{\rm control}\) is small compared to the original Hamiltonian \(H_0\). For spin systems, the linear term $\alpha S_z$ is usually a Zeeman energy, i.e. the interaction of the spin with a static magnetic field $B_0$. The quadratic term $\beta S_z^2$ depends on the details of the physical system. For nuclear spins in the solid state, it may originate from the electric quadrupole interaction, that couples the nuclear electric quadrupole moment to an electric field gradient \cite{asaad2020coherent}; for electronic spins, it may arise from the combination of crystal fields and spin-orbit interaction \cite{thiele2014electrically}; for cold atoms, it may be the a.c. Stark shift from a detuned laser field \cite{PhysRevLett.114.240401}; for transmons, it is a function of the anharmonicity of the confining potential \cite{champion2024multi}, etc.  

The control Hamiltonian $\mathcal{H}_{\rm control}(t)$ is often the one that takes the most diverse forms. Taking the example of a spin $S>1/2$, $H_{\rm control}$ will usually be an oscillating magnetic field $B_1$ orthogonal to the quantization axis, so that:
\begin{align}
      \mathcal{H}_\mathrm{control}(t) = \gamma S_x \sum_{k=1}^{2S}\cos(2\pi f_k t + \phi_k)B_{1,k}(t),
  \label{eq:H_1}
\end{align}
where $\gamma$ is a gyromagnetic ratio, and we have explicitly accounted for the possibility that the control Hamiltonian may consist of multiple (possibly simultaneous) tones at the frequencies $f_k$ corresponding to transitions between eigenstates of $\mathcal{H}_0$. The existence of $2S=d-1$ spectrally resolved transition frequencies is due to the nonlinear term $\beta S_z^2$ in $\mathcal{H}_0$, and is the fundamental property that allows universal control and the creation of nonclassical states in qudit systems \cite{yu_schrodinger_2025,vaartjes_certifying_2025}. The remarkable progress in classical control electronics in the last decade has made it possible to efficiently produce Hamiltonians like Eq.~\ref{eq:H_1}, and has provided further impetus to the experimental development of qudit platforms.

\subsection{SNAP gates and qudit universality}
\label{sec:snap-universality}
In the following, we will refer to the Hilbert space dimension as \(d = 2S+1\). And since we work with a fixed total spin \(S\), we will refer to the \(| S,m \rangle\) states as \(| m \rangle\) instead. With this notation, the SNAP gate is simply a diagonal unitary with arbitrary phases on each diagonal (inspired from the analogous quantity in cavity-QED systems \cite{krastanov_universal_2015}),
\begin{align}
S(\vec{\theta})=\sum_{m=1}^{d} e^{i \theta_{m}}|m\rangle\langle m|,
\end{align}
where \(\vec{\theta} = \left( \theta_{1}, \theta_{2}, \cdots, \theta_{d} \right)\) is the vector of phases. Occasionally we will need to use the ``linear SNAP,'' \(S(\phi) = \sum\limits_{m=1}^{d} e^{i m \phi} | m \rangle \langle  m |\). We call this a linear SNAP because of the linear relationship between the phases on the diagonal, \(\phi_{m} = m \phi ~~\forall m \in \{ 1,2, \cdots, d \}\). We first note the following simple identity,
\begin{align}
D(\theta,\phi) = S(\phi) D(\theta,0) S^{\dagger}(\phi),
\end{align}
where \(S(\phi)\) is the linear SNAP gate defined above. Moreover, notice that since, \(D(\theta,0) = \exp \left[ \frac{\theta}{2} (S_{-} - S_{+})  \right]\) and \(S_{-} - S_{+} = -2i S_{y}\), we have, \(D(\theta,0) = e^{-i \theta S_{y}}\). That is, displacements by a \(\theta\) direction only correspond to a \(S_{y}\) rotation with the same angle (up to a minus sign). Moreover, we have,
\begin{align}
D(\theta,\phi) = S(\phi) \exp \left[ -i \theta S_{y} \right]  S^{\dagger}(\phi).
\end{align}
That is a displacement of \((\theta,\phi)\) can be generated from linear SNAP gates of amount \(\phi\) and a \(S_{y}\) rotation. 

Our proof strategy follows closely that of Ref. \cite{krastanov_universal_2015}. Using the fact that, \(D(\theta, \phi)=S(\phi) D(\theta, 0) S^{\dagger}(\phi)\) we will only focus on rotations generated by \(D(\theta, 0)=e^{-i \theta S_{y}}\) with \(\theta \in [0,\pi)\). Let \(Q_{n}=\sum_{m=0}^{n}|m\rangle\langle m|\) be the projector onto the first \(n\) levels. Then, it is easy to show that,
\begin{align}
J_{n}:=-i \left[S_{y}, Q_{n}\right] \propto(|n\rangle\langle n+1|+| n+1\rangle\langle n|).
\end{align}
That is, \(J_{n}\) is the generator of \(SO(2)\) rotations between levels \(|n\rangle\) and \(|n+1\rangle\) Therefore, using \(J_{n}\) and \(Q_{n}\) we can generate the Lie algebra \(\mathfrak{u}(d)\) and hence generate any Hamiltonian evolution for the qudit, proving the universality of SNAP and displacement operations.

Moreover, similar to \cite{krastanov_universal_2015}, let $R_n(\epsilon) = e^{i\epsilon}Q_n + (I - Q_n)$? That is, it acts as a `1' on states $|m>n\rangle$, but otherwise with the phase. Then this is a snap gate as required, of the form $S(\epsilon, \dots, \epsilon, 0, \dots, 0)$. Then the `group commutator,'
\begin{align}
\hat{D}(\epsilon) \hat{R}_{n}(\epsilon) \hat{D}(-\epsilon) \hat{R}_{n}(-\epsilon)=\exp \left(i J_{n} \epsilon^{2}+\mathcal{O}\left(\epsilon^{3}\right)\right).
\end{align}
That is, switching on and off the (infinitesimal) SNAP and displacement gates above generates (infinitesimal) rotations between levels \(| n \rangle\) and \(| n+1 \rangle\). Although not optimal, this allows for a protocol to implement nontrivial \(\theta \gg 0\) rotations by breaking them into infinitesimal rotations and then using the alternating sequence above.

This combined with the fact that we can compile any \(d \times d\) unitary into \(O(d^{2})\) SNAPs and \(SO(2)\) rotations (also known as Givens rotations) \cite{krastanov_universal_2015} allows for a protocol to implement any single qudit unitary gate.

\begin{figure}
    \centering
    \includegraphics[scale=0.25]{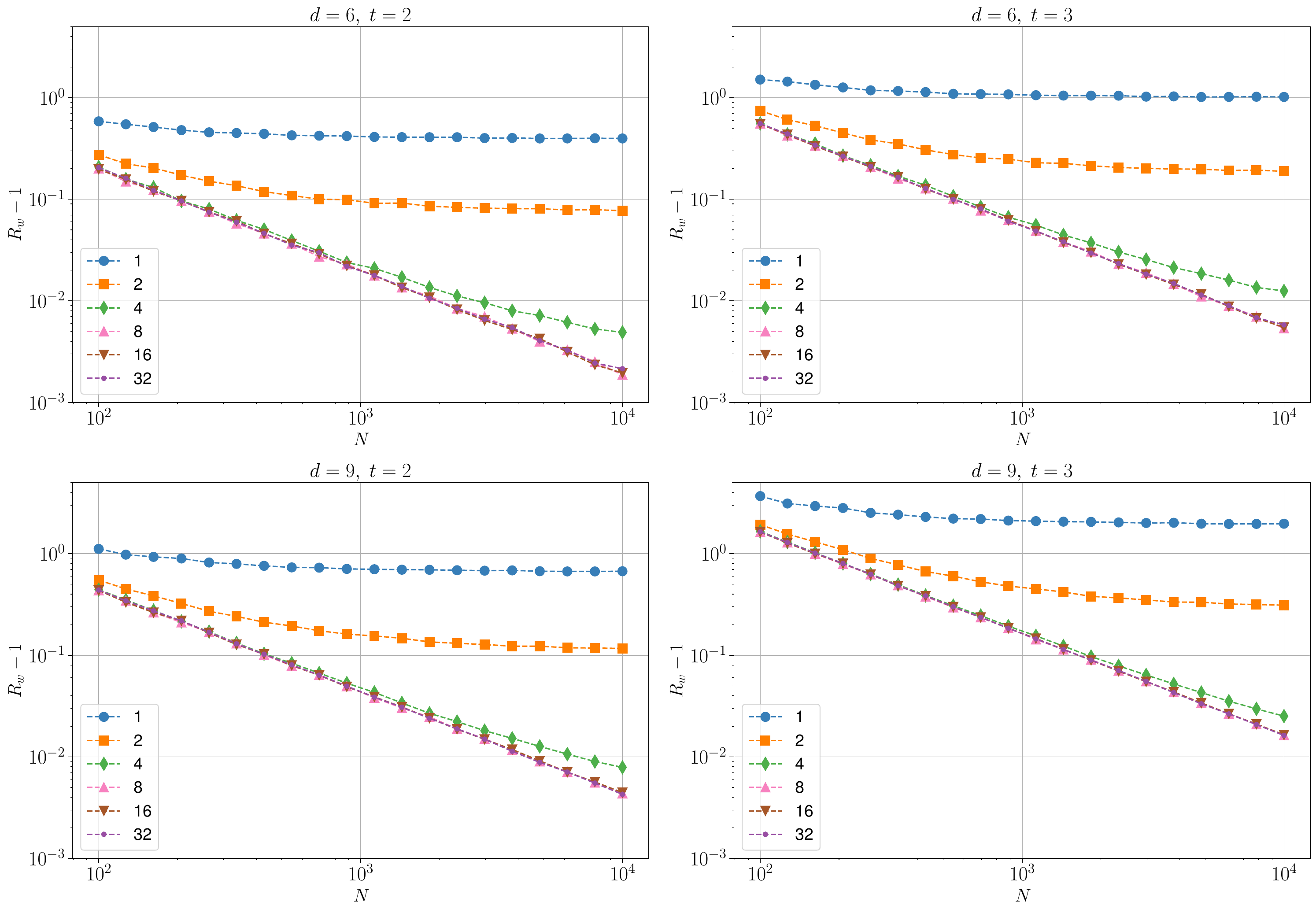}
    \caption{State Welch test ratio $R_w$ (ratio of left to right hand side in Eq.~\eqref{eq:welch-test-weighted-design}) minus one, where $R_w-1=0$ for a state $t$-design. The four different panels show different qudit dimension $d=6,9$, and the approach to a $t=2,3$ design, with the number of states $N$ used. The legend gives the circuit depth, where each layer is a random real displacement $D(\theta, \phi=0)$ ($\theta\in [0,\pi]$ selected according to density $\frac{1}{2}\sin \theta$), followed by a random SNAP gate (angles selected uniformly in $[0, 2\pi)$). The initial state is $|S,S\rangle$. For deep enough circuits, the curves fall like $R_w-1\sim 1/N$.}
    \label{fig:disp-snap-t}
\end{figure}

\subsection{\(SU(2)\) covariant rotations are not a unitary 2-design}
\label{sec:scs not design}
Recall the notation from~\cref{sec:Application to spin and cavity-QED qudits}. Let $S\geq 1$ be a half-integer and consider the representation of $SU(2)$ on the $d$-dimensional Hilbert space $\mathcal{H}_S$. Let $G\subseteq U(\mathcal{H}_S)\cong U(2S+1)$ be the image of $SU(2)$. We will show $G$ is not a $2$-design in $U(\mathcal{H}_S)$. We give 3 related proofs of this fact, each more explicit than the last. For the first, we recall the following theorem. 

\begin{theorem}[\cite{Saied2024,katz2004larsen}]
\label{thm:no-cts-2-design}
Let $\mathcal{V}\cong\mathbb{C}^d$, with $d\geq 2$. Let $G$ be an infinite closed subgroup of $U(d)=U(\mathcal{V})$. If $G$ is a unitary $2$-design, then $SU(d)\subseteq G$. In particular, any set of unitaries generating $G$ is a universal set.
\end{theorem}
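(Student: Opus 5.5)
The plan is to prove \cref{thm:no-cts-2-design} via the identity component of $G$ and its Lie algebra. Since $G$ is an infinite closed subgroup of $U(d)$, it is a compact Lie group of positive dimension. Its identity component $G_0$ is a normal subgroup of finite index with nonzero Lie algebra $\mathfrak{g}\subseteq\mathfrak{u}(d)$.

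The key observation is that the $2$-design property translates, by \cref{prop:designs-equivalent-defns} applied to $\mathcal{V}\otimes\mathcal{V}^*\cong\operatorname{End}(\mathcal{V})$, into the statement that $G$ acting by conjugation on $\mathfrak{sl}(d)$ (the traceless operators) is irreducible. Concretely, $F^{(2)}_G=2$ is equivalent to $\operatorname{End}(\mathcal{V})$ splitting into exactly two irreducible $G$-subrepresentations, the scalars and the traceless part $\mathfrak{sl}(d)$. Since $G$ preserves the Hermitian form, $\mathfrak{su}(d)$ is then irreducible as a real representation under $\operatorname{Ad}(G)$; in fact its complexification $\mathfrak{sl}(d)$ is irreducible.

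Next I use that $\mathfrak{g}$ is $\operatorname{Ad}(G)$-invariant, because $G_0$ is normal in $G$. Project onto the traceless part: $\mathfrak{h}:=\mathfrak{g}\cap\mathfrak{su}(d)$, together with the traceless projection of $\mathfrak{g}$, is an $\operatorname{Ad}(G)$-invariant real subspace of $\mathfrak{su}(d)$. By irreducibility, either it is all of $\mathfrak{su}(d)$ or it is $0$.

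\begin{enumerate}
\item If the traceless projection of $\mathfrak{g}$ is $\mathfrak{su}(d)$, then $[\mathfrak{g},\mathfrak{g}]=[\mathfrak{su}(d),\mathfrak{su}(d)]=\mathfrak{su}(d)$, since scalars are central. Hence $\mathfrak{su}(d)\subseteq\mathfrak{g}$, and therefore $SU(d)\subseteq G_0\subseteq G$ because $SU(d)$ is connected.
\item Otherwise $\mathfrak{g}$ consists only of scalars, so $G_0\subseteq U(1)\cdot\mathbb{I}$.
\end{enumerate}

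The main obstacle is ruling out the second case, where $G$ is a finite group times phases. Phases act trivially on $\operatorname{End}(\mathcal{V})$, so $G$ and the finite group $\bar G=G/(G\cap U(1))$ give the same twirl. The scalar case is then genuinely consistent with $G$ being a $2$-design. Examples are the Clifford group times $U(1)$, which is infinite and closed, or the case $d=2$, where $G$ is all unitaries but $SU(2)$-containment holds anyway.

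So the statement as written must be read modulo phases: "infinite" should mean that $G$ is infinite in $PU(d)$, which is the convention the paper uses when it defines groups modulo $U(1)$. With that reading, the second case forces $G/(G\cap U(1))$ to be finite, contradicting the hypothesis. I would add this clarification and cite \cite{katz2004larsen} for the Larsen-type alternative as a cross-check. The final clause follows directly: any generating set of $G$ generates a group whose closure contains $SU(d)$, so it is universal.
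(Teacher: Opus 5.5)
Your argument is correct. It is essentially the standard Lie-algebra proof of Larsen's alternative, which the paper invokes only by citation (it gives no proof of its own). The $2$-design property makes $\mathfrak{su}(d)$ an irreducible $\mathrm{Ad}(G)$-module, so the $\mathrm{Ad}(G)$-stable Lie algebra $\mathfrak{g}$ either contains $\mathfrak{su}(d)=[\mathfrak{g},\mathfrak{g}]$ or lies in $i\mathbb{R}\,\mathbb{I}$.

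Your caveat is also correct and should be kept. Take $U(1)\cdot\mathcal{C}$, with $\mathcal{C}$ a finite unitary Clifford group in prime dimension: it is an infinite closed unitary $2$-design that does not contain $SU(d)$. This already happens for $d=2$, so your \emph{all unitaries} aside about $d=2$ is confused and unnecessary. The hypothesis must therefore read \emph{infinite modulo phases} (or $G\subseteq SU(d)$), and the paper's later application to the image of $SU(2)$ does satisfy it.
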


The \textit{colloquial} way to interpret this theorem is as follows: continuous subgroups of the full unitary group only come in two flavors: (i) either they form a unitary \(1\)-design or (ii) they are universal and hence form a unitary \(t\)-design for any \(t \in \mathbb{N}\). The latter is not particularly useful since the primary goal of unitary designs is to replace integration over the full unitary group with a sum over a \textit{discrete} set. The former is also limited in its utility since a unitary \(1\)-design can simply be obtained by an orthonormal unitary basis of the operator space. In practice, the \(t\geq 2, t\in \mathbb{N}\) designs are most useful, e.g., for randomized benchmarking and shadow tomography. This limits the applicability of \textit{continuous} groups as unitary designs.

We now return to our consideration of the group $G$ given above, the image of $SU(2)$ in $U(2S+1)$. Since $G$ is not universal in $U(2S+1)$ for $S\geq 1$, the theorem implies it cannot be a $2$-design. 

For the second proof that $G$ is not a $2$-design, we apply Proposition~\ref{prop:designs-equivalent-defns}, showing that $\mathcal{H}_S\otimes \mathcal{H}_S$ has at least 3 $G$-irreps for $S\geq 1$. 
In Appendix~\ref{sec:character appendix}, we use character theory to show that $\mathcal{H}_S\otimes \mathcal{H}_S^*$ has $2S+1\geq 3$ distinct irreps, each of multiplicity $1$. As in \cite{Saied2024}, one can show that $\mathcal{H}_S\otimes \mathcal{H}_S^*$ and $\mathcal{H}_S\otimes \mathcal{H}_S$ have the same number of irreps (or apply the same character theory used in Appendix~\ref{sec:character appendix} to prove a similar decomposition for the latter), leading to the conclusion that $G$ is not a $2$-design. 

Finally, we give a more direct proof, still applying Proposition~\ref{prop:designs-equivalent-defns}. 
Consider the states
\begin{align}
    \ket{S,S-2}\otimes \ket{S, S},\,\, \ket{S, S-1}\otimes\ket{S, S-1}, \,\,\ket{S, S}\otimes\ket{S, S-2}\in \mathcal{H}_S\otimes \mathcal{H}_S.
\end{align}
Under the action of $S_z$ on $\mathcal{H}_S\otimes \mathcal{H}_S$, namely $S_z \mapsto S_z\otimes I + I\otimes S_z$, these states are linearly independent and all have eigenvalue $2S-2$. However, in any irrep of $SU(2)$, eigenspaces of $S_z$ are all $1$-dimensional. 
Therefore, $\mathcal{H}_S\otimes \mathcal{H}_S$ must decompose into a sum of at least $3$ irreps. 
We conclude that $G$ is not a $2$-design. 

It is worth emphasizing that such covariant \(SU(2)\) rotations are the native gates used to perform quantum computing via high-spin qudits \cite{yu2024schrodingercat}.

\subsection{Spin coherent states do not form a state 2-design}
\begin{lemma}[\cite{zhu_clifford_2016}]
If a group \(G \leq U(d)\) forms a unitary \(t\)-design then any orbit of a normalized vector under the action of \(G\) forms a complex projective \(t\)-design (aka a state \(t\)-design).
\end{lemma}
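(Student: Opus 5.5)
The plan is to show directly that the averaged $t$-fold tensor power of the orbit equals the Haar average $\binom{d+t-1}{t}^{-1}\Pi_{\mathrm{sym}}$, using the unitary $t$-design property of $G$ from Definition~\ref{def:t-designs-equivalent-def}, Part 2. Fix a normalized $\ket{\psi}$ and consider the orbit $\{U\ket{\psi}\}_{U\in G}$, weighted uniformly. If the orbit has repeated rays because of a nontrivial stabilizer, we count it with multiplicity over group elements. The map $U\mapsto U\ket{\psi}\bra{\psi}U^\dagger$ is constant on cosets of the stabilizer, so this multiset is the same as the uniform ensemble on distinct orbit points, each repeated equally often. The object to compute is
\begin{align*}
\frac{1}{|G|}\sum_{U\in G} \left(U\ket{\psi}\bra{\psi}U^{\dagger}\right)^{\otimes t} = \frac{1}{|G|}\sum_{U\in G} U^{\otimes t}\left(\ket{\psi}\bra{\psi}\right)^{\otimes t} U^{\dagger\otimes t}.
\end{align*}
For compact $G$ we replace the sum by the Haar integral on $G$.

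Applying the $t$-design property with $X=(\ket{\psi}\bra{\psi})^{\otimes t}$, the right-hand side equals $\int_{\mathrm{Haar}} U^{\otimes t}(\ket{\psi}\bra{\psi})^{\otimes t}U^{\dagger\otimes t}\,dU$. By the unitary invariance discussed in Section~\ref{sec:state designs}, this is exactly $\mathbb{E}_{\mathrm{Haar}}(\ket{\phi}\bra{\phi})^{\otimes t}=\binom{d+t-1}{t}^{-1}\Pi_{\mathrm{sym}}$, independent of the fiducial state. For completeness, I would recall why this holds. The operator commutes with every $V^{\otimes t}$, and it is supported on the symmetric subspace, since it is a mixture of $\ket{\phi}^{\otimes t}$ projectors. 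The symmetric subspace is an irrep of $U(d)$, so Schur's lemma makes the operator proportional to $\Pi_{\mathrm{sym}}$. Taking the trace fixes the constant, because the operator has trace $1$ and $\operatorname{Tr}\Pi_{\mathrm{sym}}=\binom{d+t-1}{t}$.

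Finally, I would confirm the conclusion against the paper's criterion. Equality of the $t$-th moment operators gives the Welch test \eqref{eq:welch-test-weighted-design} with equality: take the trace of the square of the moment operator and use $\operatorname{Tr}\Pi_{\mathrm{sym}}^2=\binom{d+t-1}{t}$. So the orbit is a state $t$-design.

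The only delicate step is the bookkeeping of the orbit as a multiset versus a set when the stabilizer (including phases) is nontrivial. The coset argument in the first paragraph handles this, so the rest is a one-line application of the design property.
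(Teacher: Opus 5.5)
Your proof is correct and complete. The paper gives no argument of its own for this lemma; it is quoted from \cite{zhu_clifford_2016}. Your route is the standard argument behind that citation: apply the design property to $X=(\ket{\psi}\bra{\psi})^{\otimes t}$, identify the resulting Haar average with $\binom{d+t-1}{t}^{-1}\Pi_{\mathrm{sym}}$ via Schur's lemma on the irreducible symmetric subspace, and handle repeated rays through the stabilizer cosets.

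Note that the argument only gives the forward implication. It therefore cannot by itself support the paper's later inference from $SU(2)$ failing to be a unitary $2$-design to spin-coherent states failing to be a state $2$-design. The compact symplectic example earlier in the paper shows that the converse is false.
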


Notice that this result, combined with the proof above that \(SU(2)\) covariant rotations do not form a group 2-design immediately tells us that averaging over SCS does not generate a state 2-design. Namely,
\begin{align}
\int\limits_{S^{2}}^{} | \theta,\phi \rangle \langle  \theta,\phi |^{\otimes t} d \Omega  \neq \gamma(t,d) \text{ for } t \geq 2.
\end{align}
As a consequence, SCS are not enough to perform tasks such as classical shadow tomography, which relies on 2- or 3-designs. Namely, to perform shadow tomography in these systems, one necessarily needs "spin squeezing," the equivalent of optical squeezing in bosonic systems.

An interesting contrast can be made here with respect to the Heisenberg-Weyl coherent states which also \textit{do not} form a 2-design. In the HW case, the lack of 2-design is simply a consequence of the \textit{infinite-dimensional} Hilbert space. The irrep structure breaks down into the same number of components as \(U(d)\) but it is the lack of convergence of the corresponding integral that prevents it from forming a \(2\)-design; see the discussion in this excellent reference \cite{blume-kohout_curious_2014}. In contrast, the failure of SCS to form a 2-design is not due to the infinite-dimensional convergence issues; rather the consequence is representation-theoretic. This example further highlights the fact that, while there are deep analogies between the spin and optical coherent states, via mappings such as Holstein-Primakoff, they are not identical in every way. The infinite-dimensional nature of bosonic systems manifests differently from the finite-dimensional nature of high-spin systems. In this case this resulted in two different ways to fail at being a unitary/state 2-design. There are also other ways in which spin and optical coherent states are different, the simplest being perhaps the fact that there exist two \textit{orthogonal} spin coherent states (corresponding to any pair of antipodal points on the Bloch sphere), while there are no two optical coherent states that are orthogonal (this only happens approximately as we send the Gaussian wavepackets asymptotically away from each other).

\subsection{Spin-GKP states and state 2-designs}
GKP codes are one of the leading proposals to perform quantum error correction with bosonic and high-dimensional qudit systems \cite{gottesman_encoding_2001}. Ref. \cite{conrad2024continuousvariabledesignsdesignbasedshadow} shows that the ensemble of all n-mode GKP stabilizer states over all symplectic lattices forms a "rigged 2-design". The ensemble is denoted as \(X_n := \{ | \Lambda; \vec{\alpha} \rangle = D(\vec{\alpha}) | \Lambda \rangle, \Lambda \in Y_n, \vec{\alpha} \in P(\Lambda) \}\). Here \(| \Lambda \rangle\) is an equal superposition of optical coherent states at all points in the symplectic self-dual lattice (the GKP code word that protects against displacement errors). This state is displaced as \(D(\vec{\alpha})\) via the set of all vectors in the lattice's fundamental domain, denoted as \(P(\Lambda)\). Recall that each lattice is defined by a symplectic matrix and the space \(Y_n = Sp(2n,\mathbb{Z}) \backslash Sp(2n,\mathbb{R})\). This compact space is endowed with a Haar measure that is then used to perform (uniform) averages over this set. 

The fact that spin-GKP codewords also form a state 2-design follows from combining the fact that the bosonic GKP codewords form a rigged 2-design and two other ideas. First, we use the \cref{thm:weighted-design-from-uniform} to project the bosonic lattice states onto a finite-dimensional Hilbert space, similar to the construction of the weighted state 2-design before. Then, we use the Holstein-Primakoff transformation to map these truncated "bosonic" degrees of freedom onto high-dimensional spins \cite{Holstein1940}.

The projection aspect is straightforward, by identifying \(| \widetilde{\Lambda; \vec{\alpha}} \rangle = \frac{P | \Lambda; \vec{\alpha} \rangle}{\|P| \Lambda; \vec{\alpha} \rangle\|}\) and similarly the weight function as in \cref{thm:weighted-design-from-uniform} for \(t=2\) (since we are interested in the 2-design case). It is then easy to show that averaging these projected states is equivalent to sandwiching the symmetric projector by the projector \(P\), which results into a symmetric projector onto a \(d\)-dimensional subspace.

The Holstein-Primakoff transformation is simply the map:
\begin{align}
\hat{S}_{+} \mapsto \sqrt{2 S} \hat{a}^{\dagger}, \quad \hat{S}_{-} \mapsto \sqrt{2 S} \hat{a}, \quad \hat{S}_{z} \mapsto \hat{N}-S
\end{align}
where \(N\) satisfies
\begin{align}
\left[\hat{N}, \hat{a}^{\dagger}\right]=\hat{a}^{\dagger}, \quad[\hat{N}, \hat{a}]=-\hat{a}, \quad\left[\hat{a}, \hat{a}^{\dagger}\right]=1-\hat{N} / S, \quad \hat{N} \rightarrow \hat{a}^{\dagger} \hat{a}.
\end{align}
Then one can see that in the limit \(S \gg 1\) this converges to the Heisenberg-Weyl algebra, i.e., the canonical commutation relations can be recovered \( \left[ a,a^{\dagger} \right] \approx 1\). Together, these suffice to show that spin-GKP codewords form a state 2-design, when averaging over all lattice states, and each lattice is displaced by all vectors in its fundamental domain.

\subsection{Approximate designs from native qudit gates}
The usual construction of unitary designs on multiqubit systems involves generating them from ``smaller'' amounts of randomness. This is captured in the brickwork architecture of random circuits following the seminal work of \cite{brandao_local_2016}. Here we are interested in a slightly different question: how do we generate unitary designs on a \textit{single} high-dimensional qudit, using \textit{native} gates such as SNAPs and displacements? The first thing to notice is that random circuits generated from a universal gate set such as the \(SU(4)\) gates or SNAP-displacements can generate an \textit{arbitrary} order of \(t\)-designs \cite{low_pseudo-randomness_2010}. The real question is, what is the \textit{minimal} depth needed for this? Or for the single qudit case, what is the minimal \textit{length} of such a circuit?

Fortunately, the work of Nakata \textit{et. al.} \cite{nakata_efficient_2017} already has a construction for this. The idea is to generate random unitaries in two different, \textit{mutually unbiased} bases. This randomizes the state of any system and can generate \(t\)-designs as a result, of arbitrary order (depending on the circuit length). Given two MUBs, let's call them \(\mathbb{B}\) and \(\mathbb{B}'\), such as the eigenbases of \(\hat{x}, \hat{p}\) for continuous-variable systems or the eigenbases of \(Z,X\) for qubits (does this work for qudits as well?), we want to understand what circuit lengths are sufficient to obtain a \(t\)-design. Using Corollary 3 of \cite{nakata_efficient_2017} we have that if \(d = \Omega (t^{2} (t!)^{2})\) (that is \(d \geq ct^{2} (t!)^{2}\), where \(c\) can be arbitrarily large) and \(\ell \geq \frac{1}{\log d-2 \log (t !)}[t \log d+\log (1 / \epsilon)]\) implies that the random unitary,
\begin{align}
D[\ell]:=D_{\ell}^{E} D_{\ell}^{F} D_{\ell-1}^{E} D_{\ell-1}^{F} \ldots D_{1}^{E} D_{1}^{F} D_{0}^{E},
\end{align}
generates an \(\epsilon\)-approximate unitary \(t\)-design. Here \(D^{E}\) is a random unitary in a fixed basis (say the computational basis) such as SNAP gates and \(D^{F}\) is a random unitary in a MUB. Note that \(d = \Omega (t^{2} (t!)^{2})\) is a technical requirement for the proof and it remains unclear if this can be relaxed. However, the numerics we discuss in [...] shows that this works for small \(d\) as well. One way to generate such a design from native gates is to apply random SNAP gates and utilize the quantum Fourier transform (QFT) to change to a MUB.

Namely, let \(U = \sum_{j=1}^{d} e^{i \phi_{j}} | j \rangle \langle  j |\) be a random diagonal unitary in the Fock basis. Then, if \(U_{Q}\)  is the QFT, we have \(U_{Q} U U^{\dagger}_{Q} = \sum_{j=1}^{d} e^{i \phi_{j}} | \widetilde{j} \rangle \langle  \widetilde{j} |\) is a random unitary, diagonal in the MUB. Therefore one can construct the \(t\)-designs as \(D[\ell]:=D_{\ell}^{E} (U_QD_{\ell}^{E} U^{\dagger}_{Q}) D_{\ell-1}^{E} (U_QD_{\ell-1}^{E} U^{\dagger}_{Q}) \ldots D_{1}^{E} (U_QD_{1}^{E}U^{\dagger}_{Q}) D_{0}^{E}\) instead. As a result, \(2\ell+1\) random unitaries in a fixed basis along with \(2 \ell\) QFTs are sufficient to generate a single qudit \(\epsilon\)-approximate \(t\)-design.

The numerical result of \cite{job_efficient_2023} tells us that \(2 \ell +1\) random SNAP gates along with \(2 \ell \times O(d)\) SNAP + displacement gates suffice for this. As a result, we have that \(O(\ell d)\) SNAP-displacement gates suffice to generate a unitary \(t\)-design. As an example, if we want a \(2\)-design for \(d = O(1)\), this gives us \(\ell \gtrapprox 2 + \log_{d}(1/\epsilon)\). For \(\epsilon = 1/d^k\) we have \(\ell \gtrapprox 2 + k\). Therefore, only \(O(d)\) SNAP-displacement gates suffice to generate a unitary \(2\)-design. Notice that the majority of the gate cost comes from generating the QFT. While these bounds can obviously be improved, we believe they will be instructive for thinking about generating designs from native gate sets.

\subsection{Cavity-QED qudits}
Superconducting radio-frequency (SRF) cavities offer an interesting platform for hosting qudits: the cavity has long coherence times (typically in milliseconds) and can be easily integrated with nonlinear elements for universal control \cite{heeres_cavity_2015,Eickbusch2022,kim2025ultracoherentsuperconductingcavitybasedmultiqudit}. The qudit is typically encoded in the first \(d\)-levels of a bosonic mode and control is implemented via "naturally occurring" interactions such as selective number-dependent arbitrary phase (SNAP) gates \cite{krastanov_universal_2015}, displacements, two-level rotations (Givens rotations), echoed conditional displacement (ECD) gates \cite{Eickbusch2022}, sideband interactions \cite{Liu2021,huang2025fastsidebandcontrolweakly}, among others.

For our discussion below, we focus on the simpler case of SNAP and displacement gates or SNAP and Givens rotations as our universal gate set (both sets are independently universal). Since they are universal gate sets, they can form arbitrary orders of unitary \(t\)-designs in the \(d\)-dimensional computational subspace. Note that this only works for \(d<\infty\), since convergence on the full unitary group is subtle in the infinite-dimensional case; see some work on rigged t-designs \cite{iosue_continuous-variable_2024}. How such random unitaries can be used to assess the performance of noisy qudits was discussed in Ref. \cite{bornman2024benchmarkingalgorithmicreachhighq}.

\section{Character Randomized Benchmarking}
\label{sec:qudit character RB}

We now discuss character randomized benchmarking (RB) \cite{helsen2019new, claes_character_2021, helsen_general_2022}. 
Let $\mathcal{H} = \mathbb{C}^d$ be a state space, with corresponding unitaries $U(d)$. Let $G\subseteq U(d)$ be a finite or compact subgroup. 
We consider the problem of approximating the average fidelity of gates in $U(d)$ using only gates in $G$. 
We will need to consider the representations of $G$ and $U(d)$, acting by conjugation on 
$\mathcal{L}(\mathcal{H})$. 
For convenience, we often use the canonical isomorphism $\mathcal{L}(\mathcal{H})\cong \mathcal{H}\otimes \mathcal{H}^*$, with $U\in U(d)$ acting by $U\otimes U^*$. 

Recall that we have a decomposition
\begin{align}
    \mathcal{L}(\mathcal{H}) = \sum_i V_i^{\oplus m_i},
\end{align}
where the $V_i$ are inequivalent $G$-irreducibles and the $m_i>0$ are their multiplicities. 
(In general one may consider irreps with respect to a subgroup of $G$ \cite{claes_character_2021}; we neglect this for ease of exposition.) 
Earlier works on RB assume that $G$ is a $2$-design, which in particular implies that $\mathcal{L}(\mathcal{H})\cong \mathcal{H}\otimes \mathcal{H}^* \cong \mathfrak{sl}(\mathcal{H})\oplus\mathbb{C}I$ (see the proof of Theorem B.1 in \cite{Saied2024}). 
In particular, when $G$ is a $2$-design there are two irreps, each of multiplicity $1$, with one being the trivial representation. 
In the more general case of character RB, we will need to consider the \emph{isotypic components} $W_i = V_i^{\oplus m_i}$, with projectors $P_i$. 
We need, for each isotypic component $W_i$, a state $\rho_i$ and measurement projector $M_i$ with $|\Tr(M_i P_i \rho_i)|\neq 0$, preferably as large as possible. 
In the examples considered below, we will in fact choose a single state $\rho$ sufficient for all isotypic components, and will similarly measure only a single observable (with the different outcomes potentially relevant to different isotypic components). 
The RB experiment then involves the following for each $i$: preparing the state $\rho_i$, randomly choosing a sequence $U_0, U_1, \dots, U_{N}\in G$ (according to the uniform or Haar measure), applying the gates $U_1 U_0, U_2, U_3, \dots, U_{N}, U_1^\dagger\cdots U_N^\dagger$ in sequence (with the first and last compiled as single gates), and measuring an appropriate observable to project onto $M_i$. 
Interpreting this data appropriately then gives the RB parameters. 
We will not go into detail here, referring the reader to \cite{helsen2019new, claes_character_2021, helsen_general_2022}, but we briefly give some intuition for why these methods should work. 

During the RB process, the relevant channels correspond to gates in $G$ and a fixed gate error channel $\Lambda$ (in addition to state preparation and measurement errors). 
Through averaging over random choices of $U_1, \dots, U_N$, we are able to replace most appearances of $\Lambda$ with its $G$-twirl $\Lambda_G = \frac{1}{|G|}\sum_{U\in G}U \Lambda U^\dagger$, which notably commutes with the action of $G$. 
In particular, both elements of $G$ and $G$-twirls preserve the isotypic components $W_i$. 
Then character RB essentially reduces to studying each isotypic component individually: if one starts with a state $\rho_i\in W_i$, applies only gates in $G$ and $G$-twirls of quantum channels, then projects onto an observable in $W_i$, the entire computation remains within that isotypic component. In practice, SPAM errors may cause us to leave the isotypic component; further, as discussed above, we do not need to choose $\rho_i$ and $M_i$ in $W_i$. 
This is the role of the "extra" random gate $U_0$ discussed above: as shown in \cite{helsen2019new, claes_character_2021}, one may take a cleverly weighted average over $U_0\in G$ (with complex-valued weights coming from the character of $V_i$) to insert the projector $P_i$ into the relevant expressions, forcing the calculation back into $W_i$. 
The results from each irrep can then be combined to approximate the overall average fidelity. 

The situation is particularly simple when all multiplicities $m_i=1$, because Schur's lemma then implies that the $G$-twirls act by a scalar, and we may perform randomized benchmarking by fitting a single exponential decay for each irrep. This is the special case considered in Ref. \cite{helsen2019new}. 
More generally, the case with nontrivial multiplicities is discussed in \cite{claes_character_2021,helsen_general_2022} and involves fitting a sum of multiple exponential decays. 

In the setting of Proposition~\ref{prop:multiplicity 1}, we find all the irreps to have multiplicity $1$, leading to character randomized benchmarking using single exponential decays as in Ref. \cite{helsen2019new}. 
We will consider the special case $G\cong SU(2)$ in Section~\ref{sec:character rb su2}, and yet another multiplicity-free case in Section~\ref{sec:character clifford}. 
In order to enable character RB in these settings, we will (1) give the decomposition of $\mathcal{L}(\mathcal{H})$ into irreps; (2) exhibit a useful initial state $\rho$ overlapping nontrivially with all irreps; and (3) give the appropriate observables with measurement projectors overlapping nontrivially with the various $P_i \rho$. 

\subsection{Character RB for \(SU(2)\)}\label{sec:character rb su2}
We will now discuss the application of character randomized benchmarking to $SU(2)$. We also refer the reader to Ref. \cite{fan2024randomizedbenchmarkingsyntheticquantum} for a systematic discussion of "synthetic RB" for the \(SU(2)\) case. In particular, following the previous section, we will discuss the decomposition of $\mathcal{H}_S\otimes\mathcal{H}_S^*$ into (multiplicity-free) $SU(2)$-irreps, then give the required initial states and measurements. 
In fact, for ease of implementation, we will give a single state and single measurement that suffice to benchmark all irreps. 

Note that we have an equivalence of $SU(2)$ representations $\mathcal{H}_S^*\rightarrow \mathcal{H}_S$, given by $\bra{S,m}\mapsto \ket{S,-m}$. 
Then we have 
\begin{align}
    \mathcal{L}(\mathcal{H}_S) \cong  \mathcal{H}_S\otimes \mathcal{H}_S^*\cong \mathcal{H}_S\otimes \mathcal{H}_S,
\end{align}
with the isomorphism between the latter two given by 
\begin{align}\label{eq:irrep tensor iso}
    \ket{S,m_1}\ket{S,m_2}\mapsto \ket{S,m_1}\bra{S,-m_2}.
\end{align}
We will freely move between these descriptions of $\mathcal{L}(\mathcal{H}_S)$. 

Note that $\mathcal{H}_S$ has weights $2(S-m)$, where $0\leq m\leq 2S$ is an integer. 
(Recall that the \emph{weights} are double the spins, and correspond to the eigenvalues of $Z\in \mathfrak{su}(2)$.) 
Then the tensor product has weights 
\[2(S-m_1) + 2(S-m_2) = 2(2S - (m_1 + m_2)).\]
In particular, only even integer weights arise in $\mathcal{H}_S\otimes \mathcal{H}_S\cong \mathcal{H}_S\otimes \mathcal{H}_S^*$. 

More precisely, from Appendix~\ref{sec:character appendix}, we have
\begin{align}
    \mathcal{H}_S \otimes \mathcal{H}_S^* \cong \bigoplus_{\ell=0}^{2S}V_{(2S-\ell, \ell-2S)} \cong \bigoplus_{\ell=0}^{2S}\mathcal{H}_{2S-\ell}. 
\end{align}
Then the highest weights of the relevant irreps have the form $2J$ for $J$ an integer satisfying $0\leq J\leq 2S$. 
In particular, each irrep has multiplicity $1$. 

Following the above, we now identify a density matrix $\rho$ that overlaps nontrivially with each irrep. It turns out we can simply take $\rho = \ket{S,S}\bra{S,S}$, the density matrix corresponding to the highest weight vector. Recall that under the equivalence \eqref{eq:irrep tensor iso}, we may view $\ket{S,S}\bra{S,S}$ as the tensor $\ket{S,S}\ket{S,-S}$. Then the overlaps of basis vectors in the irreps $V_i$ with $\ket{S,S}\ket{S,-S}$ may be calculated using the Clebsch-Gordan coefficients. 
From the explicit formula of \cite{bohm2013quantum}, given in (2.41) on page 172, we directly see that the relevant Clebsch-Gordan coefficients are nonzero. (In fact, the formula, which is generally a large sum, reduces to only a single nonzero term.) 
We may explicitly calculate that for the irrep $\mathcal{H}_{J}$ with highest weight $2J$ (where $J$ is an integer satisfying $0\leq J\leq 2S$), the unique unit vector with weight $0$ has the following inner product with $\ket{S,S}\ket{S,-S}$: 
\begin{align}
    \left(  \dfrac{(2J+1)(2S)!^2}{(2S+J+1)!(2S-J)!} \right)^{1/2}.
\end{align}
We conclude that $\ket{S,S}\ket{S,-S}$ overlaps nontrivially with each irrep, and therefore the same holds for $\rho = \ket{S,S}\bra{S,S}$. 
We may also take our measurement projector $M = \rho$, so that, for $P_J$ the projector onto the copy of $\mathcal{H}_{J}$, 
\[\Tr(MP_J\rho) = \bra{S,S}P_J\ket{S,S}\neq 0\]
as desired.

A similar argument allows us to take $M = \rho = \ket{S,-S}\bra{S,-S}$, the density matrix corresponding to the \emph{lowest} weight vector. 
In fact, we may instead use $M = \ket{S,S}\bra{S,S} + \ket{S,-S}\bra{S,-S},$ which has nontrivial overlap with each irrep. This observable corresponds to measuring in the standard basis $\ket{S,m}$ and keeping only those terms that project onto $\ket{S,S}$ and $\ket{S,-S}$. 
Then we may perform character RB for $SU(2)$ using only standard basis states and measurements. 

\begin{proposition}
    For $\mathcal{H}_S$, the group of gates corresponding to $SU(2)$ admits multiplicity-free character RB, using a single state $\rho = \ket{S,S}\bra{S,S}$ and a single measurement projector $M = \ket{S,S}\bra{S,S} + \ket{S,-S}\bra{S,-S}$. 
    RB is performed by fitting $2S+1$ single exponential decays, one for each irrep. 
\end{proposition}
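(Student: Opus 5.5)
The plan is to check the three ingredients that multiplicity-free character RB in the sense of Ref.~\cite{helsen2019new} needs. We need (1) a decomposition of $\mathcal{L}(\mathcal{H}_S)$ into $SU(2)$-irreps with all multiplicities equal to $1$. We need (2) for each irrep $P_J$, $\Tr(P_J\rho)\neq 0$, or more precisely $P_J\rho\neq 0$. We need (3) $\Tr(M P_J \rho)\neq 0$ for every $J$. Once these hold, the standard analysis applies. Schur's lemma makes the twirled error channel $\Lambda_G$ act as a scalar $f_J$ on each irrep $\mathcal{H}_J$. The character-weighted average over $U_0$ inserts $P_J$. The survival probability for irrep $J$ is then $A_J f_J^N$, a single exponential in $N$ with $A_J$ absorbing SPAM. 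So $2S+1$ single-exponential fits result.

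For (1), I would invoke the decomposition from Appendix~\ref{sec:character appendix}, namely $\mathcal{H}_S\otimes\mathcal{H}_S^*\cong\bigoplus_{J=0}^{2S}\mathcal{H}_J$. This is the Clebsch--Gordan rule, which can also be checked by counting $S_z$-weights. The weight-$2k$ eigenspace of the tensor product has dimension $2S+1-|k|$, and this count decreases by exactly one each time $|k|$ increases by one. It follows that each integer spin $J\le 2S$ appears exactly once. This gives $2S+1$ inequivalent irreps, each of multiplicity one.

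For (2), I would use the identification \eqref{eq:irrep tensor iso}, under which $\rho=\ket{S,S}\bra{S,S}$ corresponds to $\ket{S,S}\ket{S,-S}$. This vector has total weight $0$, so its component in $\mathcal{H}_J$ is proportional to the unique weight-$0$ vector $\ket{J,0}$. The coefficient is the Clebsch--Gordan coefficient $\langle S,S;S,-S|J,0\rangle$. In Racah's formula the sum collapses to a single term because $m_1=S$ is maximal, and this gives the displayed closed form $\bigl((2J+1)(2S)!^2/((2S+J+1)!(2S-J)!)\bigr)^{1/2}$, which is manifestly positive for $0\le J\le 2S$. This verification is the main computational step. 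If the explicit formula is in doubt, a backup argument is that $\ket{S,S}\ket{S,-S}$ generates the whole representation under lowering and raising operators, so it cannot be orthogonal to any irrep.

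For (3), with $M=\ket{S,S}\bra{S,S}+\ket{S,-S}\bra{S,-S}$, we have $\Tr(MP_J\rho)=\bra{S,S}P_J(\rho)\ket{S,S}+\bra{S,-S}P_J(\rho)\ket{S,-S}$. Here $P_J$ acts on operators, and $\Tr(M P_J\rho)$ is the Hilbert--Schmidt pairing $\langle M,P_J\rho\rangle$. Since $P_J$ is an orthogonal (self-adjoint) projector, the first term is $\langle\rho,P_J\rho\rangle=\|P_J\rho\|_2^2>0$ by (2). The second term is the analogous quantity for the lowest-weight state $\ket{S,-S}\bra{S,-S}$, whose Clebsch--Gordan coefficient differs only by a sign $(-1)^{J}$ under the symmetry $m\to -m$. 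The delicate point is this sign. It is the one place I expect an obstacle: the two terms might cancel for odd $J$.

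I would address this by computing $\langle \ket{S,-S}\bra{S,-S}, P_J\rho\rangle = c_J c'_J$, where $c'_J=(-1)^{J}c_J$ up to a phase convention. If cancellation does occur for odd $J$, I would fall back on the alternative already noted in the text, taking $M=\rho$ alone, which gives $\|P_J\rho\|_2^2>0$ for all $J$. I would then state the measurement as a standard-basis measurement, with the extra outcome usable for even $J$.
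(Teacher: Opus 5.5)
Your steps (1) and (2) are correct and match the paper: your weight count reproduces the decomposition $\bigoplus_{J=0}^{2S}\mathcal{H}_J$, and the one-term Clebsch--Gordan coefficient gives $\|P_J\rho\|_2^2=(2J+1)(2S)!^2/\bigl((2S+J+1)!\,(2S-J)!\bigr)>0$. The gap is step (3), which you leave conditional. It can be settled, and it fails. The sign is not a matter of convention, because both terms are Hilbert--Schmidt pairings with the canonical isotypic projector. Let $R=e^{-i\pi S_y}$, so that $R\rho R^\dagger=\rho':=\ket{S,-S}\bra{S,-S}$. Conjugation by $R$ commutes with $P_J$. Since $[S_z,\rho]=0$, $P_J\rho$ lies in the one-dimensional weight-zero subspace of $\mathcal{H}_J$, and $R$ acts on that subspace by the scalar $(-1)^J$ (for $J=1$ this is $RS_zR^\dagger=-S_z$). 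Hence
\[
P_J\rho'=(-1)^J P_J\rho,\qquad \Tr(MP_J\rho)=\bigl(1+(-1)^J\bigr)\|P_J\rho\|_2^2 ,
\]
so $P_JM=0$ and $\Tr(MP_J\rho)=0$ for every odd $J$. Concretely, $P_1\rho\propto S_z$ and $\Tr(MS_z)=S-S=0$. For $S=1/2$ one has $M=I$, which is orthogonal to the whole traceless irrep. Do not try to read this sign off \eqref{eq:irrep tensor iso}. That map omits the factor $(-1)^{S-m}$ needed for $\bra{S,m}\mapsto\ket{S,-m}$ to be an intertwiner, and computing with it gives $(-1)^{2S-J}$, which is wrong for half-integer $S$.

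So the proposition as stated, with this $M$, is false for every odd $J\le 2S$, and $J=1$ always occurs; no argument can close step (3). The paper's own proof breaks at exactly this point. After correctly treating $M=\rho$, it simply asserts that $M=\ket{S,S}\bra{S,S}+\ket{S,-S}\bra{S,-S}$ ``has nontrivial overlap with each irrep'' without computing it. Your fallback is the correct statement. Taking $M=\rho$ gives $\Tr(MP_J\rho)=\|P_J\rho\|_2^2>0$ for all $0\le J\le 2S$, hence $2S+1$ single-exponential fits. If you want a standard-basis measurement that uses both extreme outcomes, record their frequencies separately. Then use $p_S+p_{-S}$ for even $J$, and $p_S-p_{-S}$ (the non-projective observable $\rho-\rho'$) for odd $J$; both choices give $2\|P_J\rho\|_2^2$.
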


\subsection{Character RB for the single-qudit Clifford group in arbitrary dimension}\label{sec:character clifford}

We now consider character randomized benchmarking for $Cl_d$, the Clifford group of a single $d$-dimensional qudit in arbitrary dimension $d$. 
Recall that this involves decomposing the space $\mathcal{L}(\mathcal{H})$ into irreps, where $\mathcal{H} = \mathbb{C}^d$. 
We do this explicitly: 
\begin{theorem}
    Under the conjugation representation of $Cl_d$, we have
    \begin{align*}
        \mathcal{L}(\mathbb{C}^d) \cong \bigoplus_{1\leq r\leq d,\,\, r|d} W_r,
    \end{align*}
    where 
    \begin{align*}
        W_r = \textnormal{span}_{\mathbb{C}}\{X^a Z^b: \gcd(a,b,d)=r\}
    \end{align*}
    are irreps for $Cl_d$ of multiplicity $1$, and the dimension of $W_r$ is the number of pairs of integers $(a,b)$ satisfying $0\leq a,b< d$ and $\gcd(a,b,d)=r$. 
    In particular, $Cl_d$ is a $2$-design if and only if $d$ is prime. 
\end{theorem}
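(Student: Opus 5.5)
The plan is to work with the Pauli operator basis and use the Pauli subgroup $\mathcal{P}_d\subseteq Cl_d$ as a ``torus'' whose characters separate the basis elements. Conjugation by a Clifford $U$ sends each $X^aZ^b$ to a phase times $X^{a'}Z^{b'}$, where $(a',b')^T=M(a,b)^T$ and $M\in Sp_2(\mathbb{Z}_d)=SL_2(\mathbb{Z}_d)$ is the image of $U$ from the lemma identifying $Cl_d/\mathcal{P}_d$. Hence each $W_r$ is $Cl_d$-invariant, provided $\gcd(a,b,d)$ is preserved by $SL_2(\mathbb{Z}_d)$. This holds because $M$ and $M^{-1}$ have integer entries, so the ideal generated by $a,b,d$ is unchanged. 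Since the $d^2$ Paulis form a basis of $\mathcal{L}(\mathbb{C}^d)$, we get $\mathcal{L}(\mathbb{C}^d)=\bigoplus_{r\mid d}W_r$ as $Cl_d$-representations. The dimension claim is then immediate.

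Next, decompose under the Pauli subgroup alone. By \eqref{eq:qudit pauli relation}, conjugating $X^aZ^b$ by $X^sZ^t$ multiplies it by $\omega^{f((s,t),(a,b))}$. The symplectic form $f$ of \eqref{eq:symplectic form} is nondegenerate on $\mathbb{Z}_d^2$, so distinct $(a,b)$ give distinct characters of $\mathcal{P}_d\cong\mathbb{Z}_d^2$. Therefore $\mathcal{L}(\mathbb{C}^d)$ is a multiplicity-free sum of one-dimensional $\mathcal{P}_d$-weight spaces $\mathbb{C}X^aZ^b$. Consequently, every $\mathcal{P}_d$-invariant subspace, in particular every $Cl_d$-subrepresentation, is spanned by the Paulis it contains.

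Irreducibility of $W_r$ then reduces to an orbit statement. Let $V\subseteq W_r$ be a nonzero $Cl_d$-subrepresentation. Then $V$ contains some $X^aZ^b$ with $\gcd(a,b,d)=r$, and hence every Pauli in the $SL_2(\mathbb{Z}_d)$-orbit of $(a,b)$, since phases do not affect spans. So it suffices to show that $SL_2(\mathbb{Z}_d)$ acts transitively on $\{v\in\mathbb{Z}_d^2:\gcd(v,d)=r\}$.

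I expect this transitivity to be the main obstacle for composite $d$. My approach is as follows. Write $v=r\,u$ with $u$ unimodular modulo $d/r$. Lift $u$ to a primitive integer vector $\tilde u\in\mathbb{Z}^2$; this is possible by the standard Chinese-remainder/Dirichlet argument that $\gcd(\tilde u_1,\tilde u_2)=1$ can be achieved by adjusting a coordinate by a multiple of $d/r$. Then $\tilde u$ is the first column of some matrix in $SL_2(\mathbb{Z})$, and reducing mod $d$ gives $M$ with $M(r,0)^T=v$. So every such $v$ lies in the orbit of $(r,0)$.

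Inequivalence of the $W_r$ (hence multiplicity $1$) follows because their restrictions to $\mathcal{P}_d$ involve disjoint sets of characters. Finally, by Proposition~\ref{prop:designs-equivalent-defns} together with the frame-potential/irrep count, and using $\mathcal{H}\otimes\mathcal{H}^*\cong\mathcal{L}(\mathcal{H})$ as in the discussion preceding the theorem, $Cl_d$ is a $2$-design iff $\mathcal{L}(\mathbb{C}^d)$ has exactly two irreps counted with multiplicity. Here this count is the number of divisors of $d$, which equals $2$ iff $d$ is prime.
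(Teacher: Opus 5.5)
Your argument is correct. Its skeleton matches the paper's: identify the conjugation action with $SL_2(\mathbb{Z}_d)$ on $\mathbb{Z}_d^2$, show the sets $\{(a,b):\gcd(a,b,d)=r\}$ are the orbits by sending $(r,0)$ to $(ra',rb')$, use the Pauli phases for irreducibility, and invoke Proposition~\ref{prop:designs-equivalent-defns} for the $2$-design claim. Two steps, however, take a genuinely different route.

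\textbf{Inequivalence.} The paper uses a diagonal phase gate $L$ and computes its order on each $W_r$. This is $d/r$ for odd $d$; for even $d$ it is $d/r$ or $2d/r$ according to the parity of $r$, which needs an extra parity argument. You instead note that nondegeneracy of $f$ makes the $\mathcal{P}_d$-weights $\omega^{f((s,t),(a,b))}$ pairwise distinct. Hence the restrictions of different $W_r$ to $\mathcal{P}_d$ have disjoint character supports and cannot be equivalent.

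\textbf{Irreducibility.} The same observation shows that every $Cl_d$-invariant subspace is spanned by the Paulis it contains. This is a rigorous version of the paper's sketched step of inductively reducing the number of nonzero terms.

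\textbf{Comparison.} Your route is uniform in $d$, needs no Clifford element beyond the Paulis, and gets both irreducibility and inequivalence from one fact. The paper's route additionally gives the order of a concrete gate on each irrep, which could be useful information in the RB setting.

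\textbf{Transitivity.} Your lift of $u$ to a primitive integer vector also repairs a small imprecision in the paper. A matrix in $SL_2(\mathbb{Z}_d)$ with first column $(a',b')^T$ exists only if $\gcd(a',b',d)=1$. That can fail when only $\gcd(a',b',d/r)=1$ is known: for $d=12$, $r=3$, $(a',b')=(3,0)$, any such matrix would have determinant a multiple of $3$ modulo $12$. So one must first shift $(a',b')$ by multiples of $d/r$, which is exactly what your lifting does.
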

\begin{proof}
    Recall from the introduction to Section~\ref{sec:qudits} that we may identify the conjugation action of $Cl_d$ on $\mathcal{P}_d$ with the multiplication action of $Sp_2(\mathbb{Z}_d)$ on vectors $(a,b)\in \mathbb{Z}_d^2$ (written as ordered pairs rather than column vectors for convenience). We will first establish the orbits $V_r$ of the symplectic action. We then show that the corresponding subrepresentations $W_r$ are irreducible and inequivalent, so they have multiplicity $1$. 
    
    Note that
    \begin{align}
        \mathbb{Z}_d^2 = \bigcup_{\substack{r|d \\ 1\leq r\leq d}}V_r\textnormal{, where }V_r = \{(a,b): 0\leq a,b<d, \gcd(a,b,d)=r\},
    \end{align}
    and the $V_r$ are pairwise disjoint. 
    Factoring out the greatest common divisor, we may write
    \begin{align}\label{eq:orbit characterization}
        V_r = \{(ra',rb'): 0\leq a',b'<d/r, \gcd(a',b',d/r)=1\}.
    \end{align}
    By linearity and invertibility, the $V_r$ are clearly preserved by the action of $Cl_d$; further, we may show the action on each $V_r$ is \emph{transitive}, since $(r,0)$ may be taken to arbitrary $(ra',rb')$ by some symplectic matrix with first column $(a',b')^T$. 
    Then the $V_r$ are the full orbits of the $Cl_d$ action. 

    Of course, the elements of $V_r$ index the Pauli basis for $W_r$ (defined in the theorem statement), with $(a,b)\mapsto X^a Z^b$. 
    We have thus shown that, ignoring phases, $Cl_d$ transitively permutes the Pauli basis of $W_r$. 
    Further, the Pauli subgroup $\mathcal{P}_d\subset Cl_d$ acts by the appropriate phases (powers of a primitive $d$th root of unity $\omega$). 
    We may use these two facts to explicitly show that the $W_r$ are irreducible. 
    One may start with an arbitrary linear combination $v=\sum_{(a,b)\in V_r} c_{a,b} X^a Z^b$ and inductively apply Clifford group elements to reduce the number of nonzero terms, eventually taking arbitrary $v\in V_r$ to a fixed basis element, say $X^r$. 

    We now show the $W_r$ are inequivalent, separately considering the cases in which $d$ is odd and even. 
    The idea of the proof is to exhibit a Clifford group element $L$ that can be used to tell the various $W_r$ apart. 
    Specifically, we will use the fact that an equivalence of irreps must preserve the order of $L$. 

    First, if $d$ is odd, we consider the diagonal unitary operator $L=(\delta_{ij}\omega^{j(j+1)/2})_{ij}$. Clearly $L$ commutes with $Z$, and further one can check $LXL^\dagger = \omega XZ$, so $L\in Cl_d$. 
    (This is an analogue of the $S=\sqrt{Z}$ gate for qubits.) 
    This conjugation relation implies that    
    $L$ has order $d/r$ when acting by conjugation on the irrep $W_r$. 
    This prevents any possible equivalences between the irreps. 

    If $d$ is even, we let $\zeta=\exp(\pi i/d)$, so that $\zeta^2 = \omega$, and let $L=(\delta_{ij}\zeta^{j^2})_{ij}$. 
    We similarly have that $L\in Cl_d$, with $LXL^\dagger = \zeta XZ$. 
    In this case, the phases are more complicated. We may calculate that on $V_r$, 
    $L$ has order $d/r$ if $r$ is even and $2d/r$ if $r$ is odd. 
    Like above, this prevents all possible equivalences. 
    To make this clear, suppose for the sake of contradiction that we have an equivalence $V_{r'}\rightarrow V_r$, where $r'<r$. 
    Since $d/r < d/r'$, and equivalence must preserve the order of $L$, this would require $L$ to have order $2d/r$ on $V_r$ (forcing $r$ to be odd), order $d/r'$ on $V_{r'}$ (forcing $r'$ to be even), and further $2d/r=d/r'$. This equation may be rewritten as $r=2r'$. But this implies that the odd number $r$ is divisible by $2$, a contradiction. 
    We conclude that all irreps $W_r$ are inequivalent. 
    
    The statement about $2$-designs follows from Proposition~\ref{prop:designs-equivalent-defns}. 
\end{proof}

Given this result, we see that single-qudit character RB is very tractable: the irreps are multiplicity-free, so there are only single exponential decays. Further, the number of decays we need to fit is equal to the number of divisors of $d$, which is trivially $O(\sqrt{d})$. (With much tighter bounds known, see \cite{hardy1979introduction}.) 

To fully enable character RB, we must also specify what initial states and measurements to perform. 
Fortunately, for any $s$, we may always take $M = \rho = \ket{s}\bra{s}$, since $Z^r\in W_r$ and 
\begin{align}
    \Tr(MZ^r\rho) = \bra{s}Z^r \ket{s} = \omega^{rs}.
\end{align}
In particular, we can prepare a fixed computational basis state, say $\rho = \ket{0}\bra{0}$, perform computational basis measurements, and use the projector $M = \ket{0}\bra{0}$. 

\begin{proposition}
    The single-qudit Clifford group in arbitrary dimension $d$ admits multiplicity-free character RB, using a single state $\rho=\ket{0}\bra{0}$ and a single measurement projector $M = \ket{0}\bra{0}$. 
    RB is performed by fitting $\phi(d)$ single exponential decays, one for each irrep, where $\phi(d)$ is the number of divisors of $d$. 
\end{proposition}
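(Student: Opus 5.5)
The plan is to assemble the proposition from three ingredients: the preceding theorem, the character RB framework of \cite{helsen2019new}, and a direct overlap check. I would carry them out in that order.

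First, I invoke the preceding theorem. It says that under conjugation, $\mathcal{L}(\mathbb{C}^d)$ decomposes into the pairwise inequivalent irreps $W_r$ of $Cl_d$, one for each divisor $r$ of $d$, each with multiplicity $1$. So every isotypic component is a single irrep $W_r$, with projector $P_r$. By Schur's lemma, the $Cl_d$-twirl $\Lambda_G$ of any error channel acts on each $W_r$ as a scalar $f_r$. The character RB analysis of \cite{helsen2019new} then applies directly. Averaging over $U_0$, with weights given by the character $\chi_{W_r}$, inserts $P_r$. The resulting survival probability has the form $A_r f_r^{N}$, where $A_r$ absorbs SPAM. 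Hence we get one single-exponential decay per irrep, and the number of decays equals the number of divisors of $d$. This count is the quantity the proposition denotes $\phi(d)$.

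Second, I verify the overlap condition $\Tr(M P_r \rho)\neq 0$ for $\rho=M=\ket{0}\bra{0}$ and every $r$. The $d^2$ Paulis form an orthogonal basis with $\Tr(P^\dagger Q)=d\,\delta_{PQ}$, and the $W_r$ are spanned by disjoint subsets of this basis. So $P_r$ is the Hilbert–Schmidt orthogonal projection
\[
P_r(\rho)=\frac{1}{d}\sum_{(a,b)\in V_r}\Tr\bigl((X^aZ^b)^\dagger\rho\bigr)\,X^aZ^b .
\]
For $\rho=\ket{0}\bra{0}$, only the diagonal Paulis $Z^b$ have nonzero trace against $\rho$, and each such trace equals $1$. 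Thus $P_r(\rho)=\frac1d\sum_{b:\gcd(b,d)=r} Z^b$, with $b$ ranging over $0\le b<d$ and the convention $\gcd(0,d)=d$. Pairing with $M$ gives
\[
\Tr(M P_r\rho)=\frac{1}{d}\,\#\{0\le b<d:\gcd(b,d)=r\}=\frac{\varphi(d/r)}{d}>0,
\]
where $\varphi$ here is Euler's totient. This strengthens the heuristic $\bra{s}Z^r\ket{s}=\omega^{rs}$ given before the proposition, since it computes the actual projected quantity rather than the overlap with a single element of $W_r$.

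The main obstacle is minor. One must interpret ``a single measurement projector'' correctly: as in \cite{helsen2019new}, the same data, namely computational basis measurements postselected on outcome $0$, is reused for every $r$, and only the character weighting over $U_0$ changes with $r$. I would also note that the trivial irrep $W_d=\mathbb{C}I$ gives the constant offset and the decay $f_d=1$, which is consistent with trace preservation. Beyond this, the argument just combines the theorem's multiplicity-freeness with the explicit, strictly positive overlaps computed above.
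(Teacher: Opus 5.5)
Your proposal is correct and takes the same route as the paper. Both arguments invoke the multiplicity-free decomposition $\mathcal{L}(\mathbb{C}^d)\cong\bigoplus_{r|d}W_r$, use Schur's lemma to get one single-exponential decay per divisor of $d$, and check that $\rho=M=\ket{0}\bra{0}$ overlaps every irrep. The paper's check is weaker: it only verifies $\Tr(MZ^r\rho)=\bra{s}Z^r\ket{s}=\omega^{rs}$ for the single element $Z^r\in W_r$, whereas your evaluation $\Tr(MP_r\rho)=\varphi(d/r)/d>0$ computes the projected quantity that character RB actually requires.
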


\section{Fractional designs}
\label{sec:fractional designs section}

In this section we introduce a notion of "fractional \(t\)-designs," that provide a natural generalization of \(t\)-designs beyond integer order. These are inspired from the theory of \textit{fractional calculus} \cite{jones_riemann-liouville_1991}. Recall that a set of \(N\) pure states form a (uniform) \(t\)-design if and only if \cite{scott_optimizing_2008} (see also \cref{eq:welch-test-weighted-design}),
\begin{align}
\frac{1}{N^{2}} \sum\limits_{j,k=1}^{N} \left| \left\langle \psi_{j} | \psi_{k} \right\rangle \right|^{2t} = \frac{1}{\binom{d+t-1}{t}}.
\end{align}
The RHS is obtained by plugging in Haar random states and using the symmetric projector on \(\mathcal{H}^{\otimes t}\) (along with Schur's lemma). Our goal is to generalize this formula to a non-integer \(t\). Notice that such a generalization does not make sense (in any straightforward way) from the point of view of the tensor product, i.e., we are \textit{not} going to think about the quantity, 
\begin{align}
\mathbb{E}_{\psi \sim \mathrm{Haar}} \quad | \psi \rangle \langle  \psi | ^{\otimes t}.
\end{align}
Instead, we notice that the inner product on the LHS can be evaluated for any \(t \in \mathbb{R}_{\geq 0}\). This will be way in which we generalize the definition of a \(t\)-design. Notice that the RHS can be generalized via the use of the Gamma functions,
\begin{align}
\Gamma(z) = \int\limits_{0}^{\infty} t^{z-1} e^{-t} dt , \quad \mathfrak{Re}(z) > 0.
\end{align}
Moreover, for a positive integer $n$ we have \(\Gamma(n) = (n-1)!
\). Importantly, the Gamma function allows us to generalize the notion of a factorial in the RHS to any \(t \in \mathbb{N}\)
\begin{align}
\frac{1}{\binom{d+t-1}{t}} = \frac{\Gamma(t+1) \Gamma(d)}{\Gamma(t+d)}.
\end{align}
This will be the starting point of the notion of a fractional t-design for us.

\begin{definition}[Fractional state designs]
A set of quantum states in \(\mathcal{H} \cong \mathbb{C}^{d}\) form a fractional \(t\)-design if and only if,
\begin{align}
\frac{1}{N^{2}} \sum\limits_{j,k=1}^{N} \left| \left\langle \psi_{j} | \psi_{k} \right\rangle \right|^{2t} = \frac{\Gamma(t+1) \Gamma(d)}{\Gamma(t+d)} = t B(t,d).
\label{eq:fractional-welch}
\end{align}
\end{definition}

To argue that this is a meaningful generalization, we first show that the RHS holds true for \(t \in \mathbb{R}_{\geq 0}\) for Haar random states (the \(t=0\) case is trivial so we only focus on \(t \in \mathbb{R}_{>0}\)). Namely, we want to compute,
\begin{align}
\int_{\psi, \phi \sim \mathrm{Haar}} \left| \left\langle \psi | \phi \right\rangle \right|^{2t} = \int \left| \left\langle \psi | \phi \right\rangle \right|^{2t} d\mu(\psi) d\mu(\phi).
\end{align}
Here, \(d\mu(\psi)\) denotes the unique, unitarily-invariant measure on the complex projective Hilbert space. Using the unitary invariance allows us to translate one of the states, \(| \phi \rangle \mapsto | 0 \rangle\) without changing the result of the integral above. Therefore, we only need to estimate,
\begin{align}
\int \left| \left\langle 0 | \psi \right\rangle \right|^{2t} d\mu(\psi).
\end{align}
As an aside, note that for \(t=1\) this is akin to estimating the probability of observing the bitstring corresponding to \(| 0 \rangle\) when measuring a Haar random state. To perform the integral, we will parametrize pure quantum states as \(| \psi \rangle = \sum\limits_{j=1}^{d} \psi_{j} | j \rangle\) where, \(\psi_{j} = x_{j} + i y_{j}\). Then, the Haar measure over normalized pure states can be parametrized in spherical coordinates as \cite{singh_average_2016}
\begin{align}
d\mu(\psi) = \frac{\Gamma(d)}{(2 \pi)^{d}} \; \delta\left(1-\sum_{j=1}^{d} r_{j}\right) \prod_{j=1}^{d} d r_{j} d \theta_{j},
\end{align}
where \(x_{j} = \sqrt{r_{j}} \sin \left( \theta_{j} \right)\) and \(y_{j} = \sqrt{r_{j}} \cos \left( \theta_{j} \right)\) with \(r_{j} \in \left[ 0,1 \right] \) and \(\theta_{j} \in [0,2 \pi]\). The \(\delta\)-function ensures the normalization of these pure states. Therefore,
\begin{align}
\int \left| \left\langle 0 | \psi \right\rangle \right|^{2t} d\mu(\psi) = \int \left| r_{1} \right|^{t} d\mu(\psi),
\end{align}
where we let \(| 0 \rangle\) be the first element of the basis in which we are expanding above, hence we obtain the term \(r_{1}\). Plugging in the measure from above and using the fact that \(\left| r_{1} \right| = r_{1}\) we have,
\begin{align}
\mathrm{LHS} = \Gamma(d) \int \left( r_{1} \right)^{t} \delta\left(1-\sum_{j=1}^{d} r_{j}\right) \prod_{j=1}^{d} d r_{j},
\end{align}
where we have already integrated over the \(\theta_{j}\)'s to obtain a factor of \(\left( 2 \pi \right)^{d}\) cancelling the prefactor in the measure above. Then,
\begin{align}
\mathrm{LHS} &= \Gamma(d) \int\limits_{0}^{1} dr_{1} \left( r_{1} \right)^{t} \int\limits_{0}^{\infty} \delta \left( \left( 1-r_{1} \right)  - \sum\limits_{j=2}^{d} r_{j} \right) \prod_{j=2}^{d} d r_{j} \\
&= \frac{\Gamma(d)}{\Gamma(d-1)} \int\limits_{0}^{1} dr_{1} \left( r_{1} \right)^{t} \left( 1-r_{1} \right)^{d-2}.
\end{align}
Now we just need to recall the Beta function,
\begin{align}
B(\alpha, \beta):=\int_{0}^{1} r^{\alpha-1}(1-r)^{\beta-1} d r=\frac{\Gamma(\alpha) \Gamma(\beta)}{\Gamma(\alpha+\beta)}.
\end{align}
Therefore, the integral above is equal to \(B(t+1,d-1)\), which in terms of the \(\Gamma\) function gives us,
\begin{align}
B(t+1,d-1) = \frac{\Gamma(t+1) \Gamma(d-1)}{\Gamma(t+d)}.
\end{align}
Adding the prefactors from above we have (after cancelling the \(\Gamma(d-1)\)),
\begin{align}
\mathrm{LHS} = \frac{\Gamma(t+1) \Gamma(d)}{\Gamma(t+d)}.
\end{align}
Therefore, we have shown that Haar random states form fractional \(t\)-designs for any \(t >0\). Recall that the definition of the Gamma function requires \(\mathfrak{Re}(z) >0\) which amounts to \(t \in \mathbb{R}_{>0}\). The result can be generalized from states to unitaries as follows.

\begin{definition}[Fractional unitary designs]
A finite subgroup $\{U_i\}_i$ of the unitary group forms a fractional \(t\)-design if and only if the \emph{generalized frame potential} satisfies
\begin{align}
\frac{1}{N^{2}} \sum\limits_{j,k=1}^{N} \left| \mathrm{tr}(U_j^\dag U_k) \right|^{2t} = 
    \int |\mathrm{tr}(U)|^{2t}d\mu_U. 
\label{eq:fractional-frame-potential}
\end{align}
\end{definition}

\begin{figure}[!t]
\centering
\includegraphics[width=0.6\linewidth]{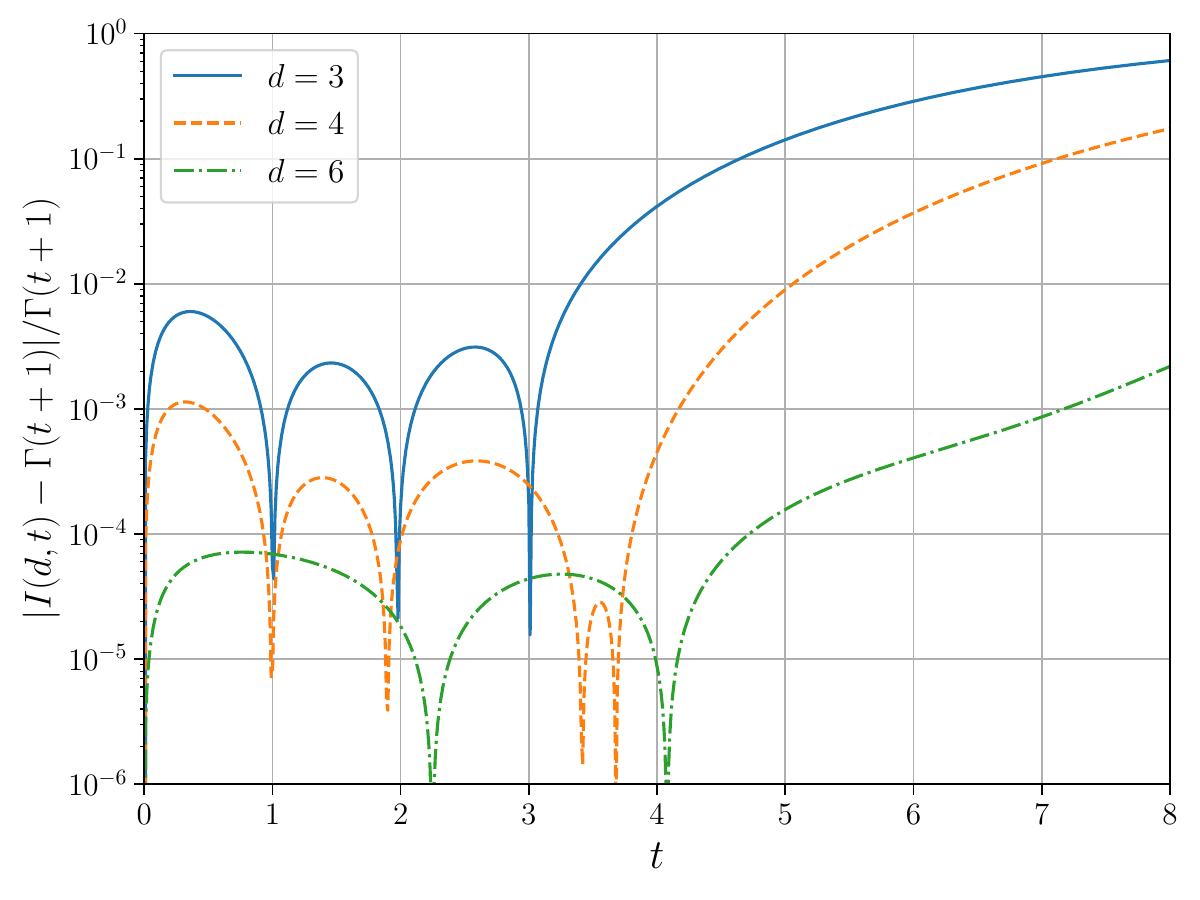}
\caption{Numerical evaluation of the relative error between the Haar integral $I(d,t):=\int_{\mathcal{U}(d)} |\mathrm{tr}(U)|^{2t} d\mu_U$ and the Gamma function $\Gamma(t+1)$, for $d=3,4,6$. This is computed by taking the trace of $10^8$ Haar random unitary samples (using \texttt{scipy}), and numerically estimating the integral. The random looking form of these curves is likely explained by finite sampling effects, however, the broad features are fairly transparent: We see in the regime $t \le d$, the approximation is fairly accurate (relative error less than 1\% in all cases). Also notice the error generally reduces increasing $d$, consistent with the expectation that in the asymptotic limit, $I$ does indeed converge to $\Gamma(t+1)$. We observe the error \emph{does not} appear to go to zero with the number of samples, suggesting, as per the main text, that the integral value is not exactly $\Gamma(t+1)$, even for $t\le d$.}
\label{fig:frame-error}
\end{figure}

If $d=2$, we have
\begin{align}
    \int |\mathrm{tr}(U)|^{2t}d\mu_U = \frac{\Gamma(2t+1)}{\Gamma(t+1)\Gamma(t+2)}.
\end{align}
To enable approximations for $d>2$, we also note that
\begin{align}
    \lim_{d\rightarrow\infty}\int |\mathrm{tr}(U)|^{2t}d\mu_U = \Gamma(t+1).
\end{align}

For proof of the $d=2$ case, we first note that the eigenvalue spacing statistics for Haar random unitaries are given by density $p(\Delta) = \frac{1}{\pi}(1-\cos \Delta)(1-\Delta/2\pi)$, where $\Delta\in[0,2\pi)$ is the angular difference of the two eigenvalues (this is shown in App.~\ref{sect:eigen-density}). The integral of interest can then be written as:
\begin{align}
    \int |\mathrm{tr}(U)|^{2t} d\mu_U = \int_0^{2\pi} |1+e^{i\Delta}|^{2t}p(\Delta)d\Delta = \frac{2^t}{\pi} \int_0^{2\pi} (1+\cos \Delta)^t (1-\cos \Delta)\left(1 - \frac{\Delta}{2\pi}\right) d\Delta.
    \label{eq:d_2_integral}
\end{align}
In App.~\ref{sect:integral_proof} we show this can be evaluated to give the desired result, valid for $\mathrm{Re}(t) > -1/2$.

The $d>2$ case is more complicated as discussed below, and we do not have a general form for the integral evaluation in \eqref{eq:fractional-frame-potential}. The limiting case can however be evaluated, noting that in the asymptotic limit  $\mathrm{tr}(U)$ is Gaussian distributed with mean zero and variance $1/2$ in the real and imaginary axes (covariances are 0) \cite{diaconis_eigenvalues_1994}:
\begin{align}
    \int |\mathrm{tr}(U)|^{2t} d\mu_U = \frac{1}{\pi} \int |z|^{2t} e^{-|z|^2}dz = 2 \int_0^\infty r^{2t} e^{-r^2} rdr = \int_0^\infty q^{t} e^{-q}dq = \Gamma(t+1).
    \label{eq:gamma_proof}
\end{align}
Numerically we observe $\Gamma(t+1)$ to be a reasonable approximation for the integral in the regime $t \le d$ (see Fig.~\ref{fig:frame-error}), and in numerics below we use this value in lieu of the exact expression.

We make an important comment here for the $d>2$ case. In the integer case, the frame potential is the number of permutations of $\{1, \dots, t\}$ with no increasing subsequence of length larger than $d$ \cite{zhu_clifford_2016}. For $t\le d$ that is clearly $t!$, and for $t>d$, it is strictly less than $t!$. The naive expectation for the generalization to non-integer $t$ would perhaps be that for $d\ge t$, the generalized frame potential takes on value $\Gamma(t+1)$. However, by the identity theorem of complex analysis, two holomorphic functions that are equal on a subdomain, are equal on the full domain of analyticity (in this case, the region of interest being $\mathrm{Re}(t)>0$). This would imply the Haar integral would be $\Gamma(t+1)$ for all $t>0$, which is a contradiction of the integer case for $t>d$. Therefore, the functions cannot be equal. We expect the correct generalization is based directly on the definition of permutations of $\{1, \dots, t\}$ using all integer values of $t$, which is highly non-trivial for $t > d$ (e.g. see p.~281 in \cite{GESSEL1990257} for the $d=3$ case). As mentioned above however, the approximation of $\Gamma(t+1)$ for $t\le d$ is fairly accurate based on numerical observations, and increasingly so for larger $d$.

\subsection{Numerics/discussion for fractional $t$-designs}
\begin{figure}[!ht]
    \centering
    \includegraphics[width=0.5\linewidth]{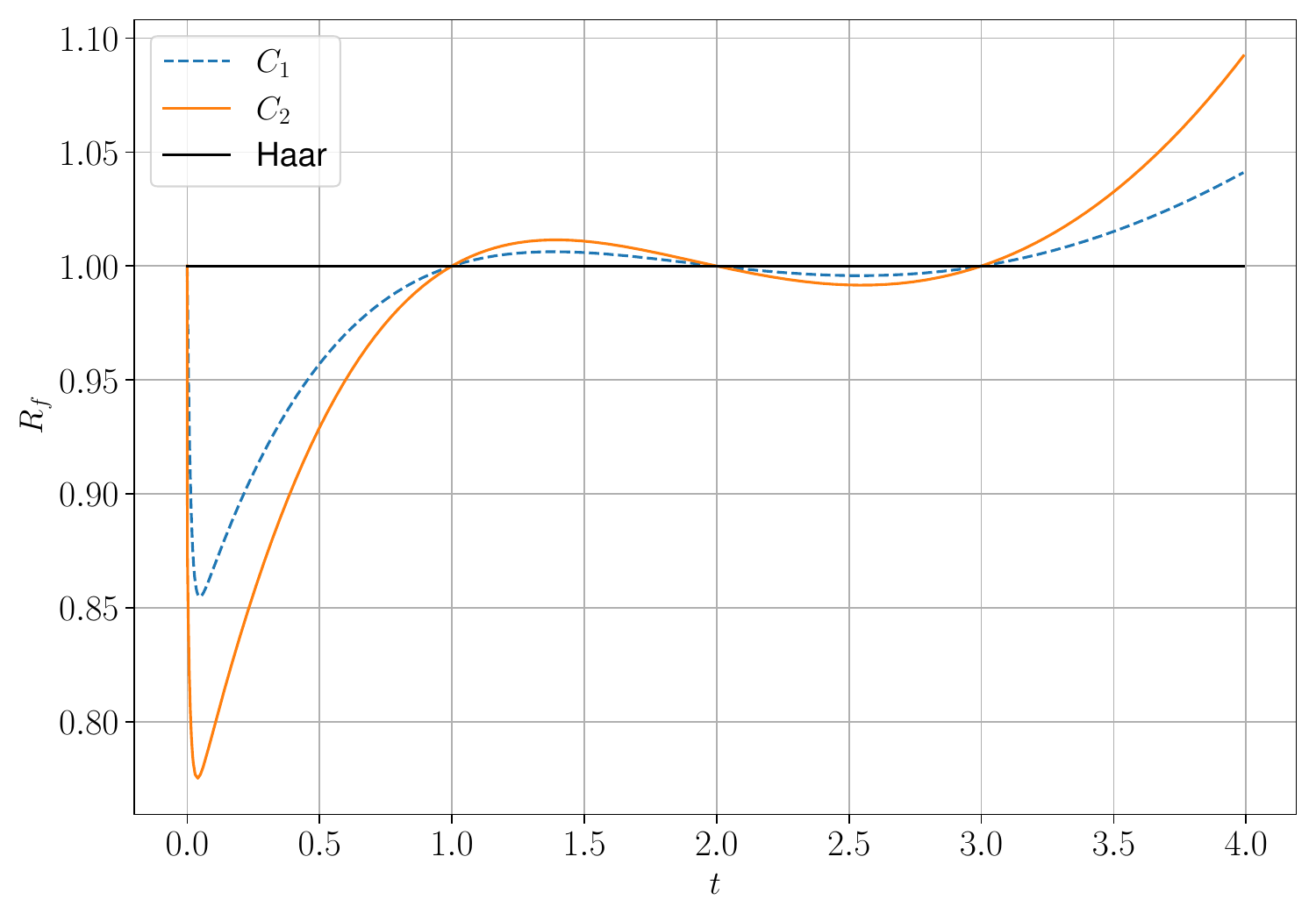}
    \caption{Fractional frame potential ratio for one and two qubit Clifford group.}
    \label{fig:clifford-fractional}
\end{figure}

Whilst we have been unable to find an operational group theoretic meaning for non-integer $t$-designs, they do exist, as we have observed numerically. First, to gain some intuition, in Fig.~\ref{fig:clifford-fractional} we plot the frame potential ratio $R_f$, defined as the frame potential at (non-integer) value $t$, divided by the Haar random result Eq.~\eqref{eq:fractional-frame-potential}, for the Clifford group on 1 and 2 qubits. We see it is only an integer design, for $t=0,1,2,3$.
Generalizing from integers to the reals can also provide us with more information about a group. Notice in Fig.~\ref{fig:clifford-fractional} that at $t=2$, the frame potential has a negative gradient. Since we know for the integers $R_f \ge 1$, by continuity, it implies the Clifford group must be a design for some (potentially non-integer) $t > 2$. 

Another group of interest is a qubit representation of $SL(2,\mathbb{F}_5)$, which is a 5-design. $SL(2,\mathbb{F}_5)$ is the group of unitary $2\times 2$ matrices of determinant 1 over $\mathbb{F}_5$. The group is isomorphic to the \emph{binary icosahedral} group, with generators $\langle r,s,t\rangle$, where $r^2=s^3=t^5=rst=-I$, and has order 120. A unitary two-dimensional representation can easily be constructed from these constraints directly. For example, the following are unitary generators:
\begin{align}
    r=\frac{1}{\sqrt{2}}\left(\begin{array}{cc}
        i x &  -y\\
        y & -i x
    \end{array}\right),\;\; s=\frac{1}{\sqrt{2}}\left(\begin{array}{cc}
       i x\, \omega^{-1} &  -y\,\omega\\
        y\, \omega^{-1} & -ix\, \omega
    \end{array}\right),\;\;t=\left(\begin{array}{cc}
        \omega  &  0\\
        0 & \omega^{-1}
    \end{array}\right),
\end{align}
where $x = \sqrt{1 + 1/\sqrt{5}}$, $y=\sqrt{1 - 1/\sqrt{5}}$, and $\omega=e^{2\pi i/10}$.  
We plot the generalized frame potential for this representation in Fig.~\ref{fig:5-design}.
This representation was also described in \cite{gross_evenly_2007, ketterer_entanglement_2020}; however, we believe there to be typographic mistakes in the provided generators for each reference, which we have corrected in App.~\ref{sec:5-design-construction}.

We note that $G=SL(2,\mathbb{F}_5)$ in fact has two inequivalent $2$-dimensional irreps. 
We will show that \emph{both} of these inequivalent irreps can be described using the above generators. 
Let $\rho_1: G\rightarrow U(2)$ be the irrep of $G$ constructed above. 
We note that for any automorphism $\phi$ of $G$, 
$\rho_1\circ\phi: G\rightarrow U(2)$ is also a $2$-dimensional irrep of $G$ with the same image. 
If $\phi$ is an \emph{outer} automorphism of $G$, meaning it does not preserve conjugacy classes, then $\rho_1\circ\phi$ may be inequivalent to $\rho_1$. 
We construct such an outer automorphism: for $A\in SL(2,\mathbb{F}_5)$ (now viewed literally as a $2\times 2$ matrix over $\mathbb{F}_5$, not in any representation), consider
\begin{align}
    M = \left(\begin{array}{cc}
        1 &  0\\
        0 & 2
    \end{array}\right), \,\,\, \phi(A)=MAM^{-1}.
\end{align}
The automorphism $\phi$ swaps the conjugacy classes 
\begin{align}
\left(\begin{array}{cc}
        1 &  1\\
        0 & 1
    \end{array}\right)  \leftrightarrow \left(\begin{array}{cc}
        1 &  2\\
        0 & 1
    \end{array}\right),   \,\,\, \left(\begin{array}{cc}
        -1 &  1\\
        0 & -1
    \end{array}\right)  \leftrightarrow \left(\begin{array}{cc}
        -1 &  2\\
        0 & -1
    \end{array}\right),
\end{align}
while fixing the others, 
making it an outer automorphism. 
One can verify using GAP \cite{GAP4} that the character of $\rho_1$ has $9$ distinct values on the $9$ conjugacy classes of $G$. Since $\phi$ nontrivially permutes the conjugacy classes, the characters of the irreps $\rho_1$ and $\rho_2 = \rho_1\circ\phi$ must be distinct. Then $\rho_1$ and $\rho_2$ are inequivalent $2$-dimensional irreps. 

\begin{figure}[!ht]
    \centering
    \includegraphics[width=0.5\linewidth]{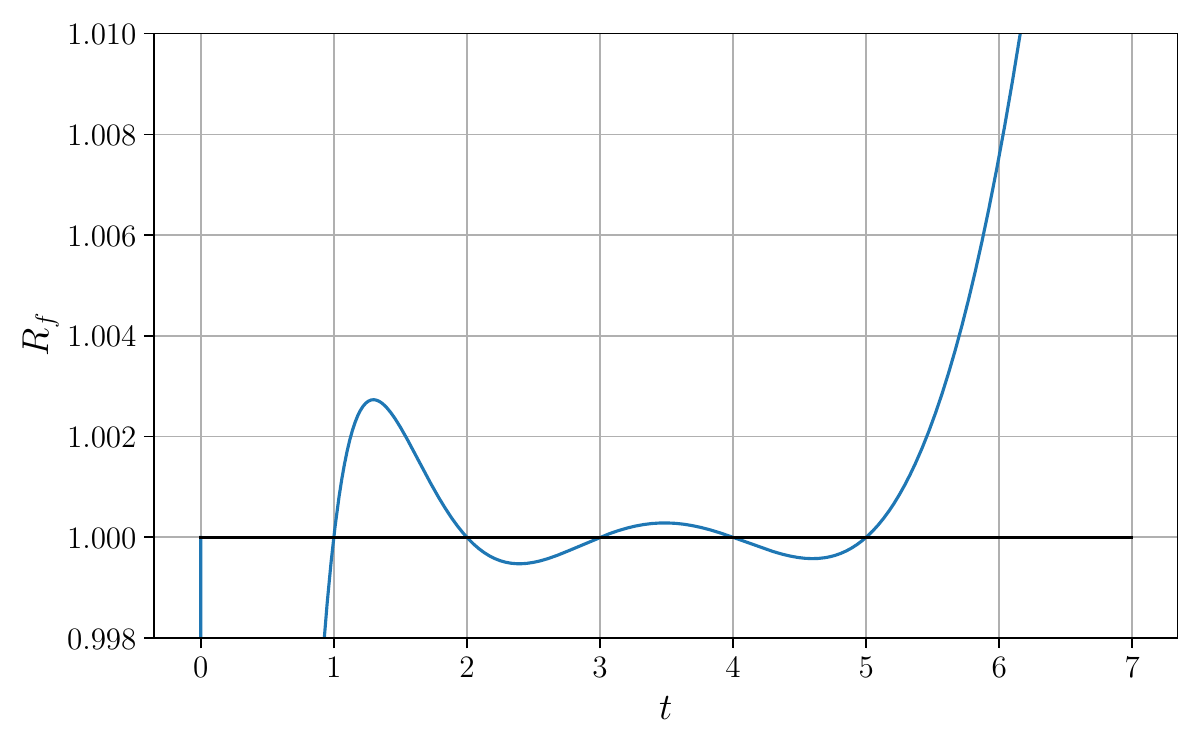}
    \caption{Frame potential ratio for a qubit (two-dimensional) representation of $SL(2,\mathbb{F}_5)$, which is a 5-design.}
    \label{fig:5-design}
\end{figure}

\subsubsection{Cyclic group frame potential}
In contrast, in \cref{fig:cyclic-fractional} we plot the frame potential ratio for small size cyclic groups $C_n$, and we see that these form a $t<0.5$ design with respect to $U(n)$. Here the cyclic group is generated by the generalized Pauli-X matrices (shift matrices), which permute basis elements $\ket{i}\rightarrow \ket{(i+s)\mod d}$.
It's easy to verify that the fractional frame potential for this group is: $F(t,d)=d^{2t-1}$. Noting that $\Gamma(1/2+1)=\sqrt{\pi}/2<1 = F(1/2,d)$, the value $t^*$ for which $\Gamma(t^*+1) = F(t^*,d)$ accumulates to $1/2$ as $d\rightarrow \infty$.

Another way to think about this result is that the generator of the cyclic group is unitarily equivalent to the qudit Pauli operator \(Z_d\). The group generated by this is an abelian group with elements \(\langle \mathbb{I}, Z_d, Z_d^2, \cdots, Z_d^{d-1} \rangle\) (modulo phases). It is easy to see that the only element with non-zero character is the identity, namely, \(\chi(\mathbb{I})=d, \chi(g)=0 ~\forall g \neq \mathbb{I}\). The character version of the fractional \(t\)-design test then yields,
\begin{align}
\frac{1}{|G|} \sum_{g \in G}|\chi(g)|^{2t} = d^{2t-1},
\end{align}
in any dimension \(d \in \mathbb{N}\) and \(t \in \mathbb{R}\).

\begin{figure}[!ht]
    \centering
    \includegraphics[width=0.5\linewidth]{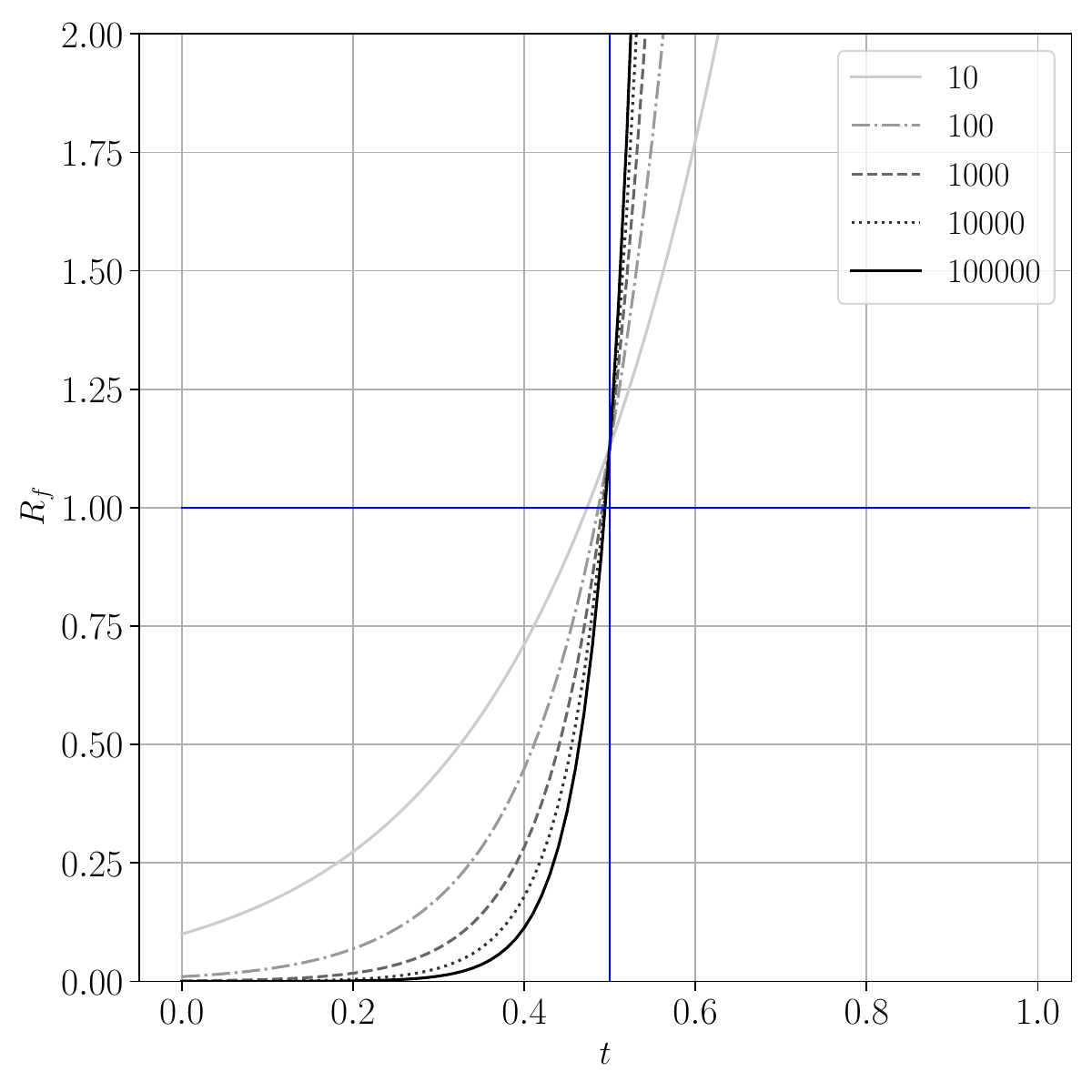}
    \caption{Fractional frame potential ratio for cyclic groups $C_d$ of order $d$ (see legend), with respect to $U(d)$. One can compute $R_f(d,t) = d^{2t-1}/\Gamma(t+1)$. For $d\rightarrow \infty$, the convergence point $t^*\rightarrow 1/2$ (see main text).}
    \label{fig:cyclic-fractional}
\end{figure}

\section{Closing}
\label{sec:closing}

\subsection{Applications}
\prlsection{Classical shadow tomography for qudits} Learning quantum systems is a computationally nontrivial problem. In the worst case, this requires exponential resources, for e.g., via quantum state and process tomography. However, in many physical scenarios, one is only interested not in learning the full quantum state, rather, the expectation value of this state over a set of observables (k-local operators, order parameters, fidelity, etc.). This has motivated the idea of classical shadow tomography \cite{Aaronson2020,huang_predicting_2020}. The key insight is that if we only want to learn a subclass of observables then one can circumvent the exponential bottleneck of full tomography. In particular, two types of shadow tomography protocols are particularly useful, (i) Pauli shadows and (ii) Clifford shadows. The two are so-called since they require averaging over the Pauli and the Clifford group of unitaries, respectively. Pauli shadows are useful for predicting low-weight observables (operators with limited support) while Clifford shadows are useful for predicting low-rank observables such as fidelity with a pure quantum state. In the multiqubit case, these protocols are very well-motivated since the Pauli and the Clifford group form a 1- and a 3-design, respectively. However, we need to reexamine this for the qudit case.

Note that if we want to do classical shadow tomography for qudits then we only need state designs, where we can utilize the weighted 2- and 3-designs introduced in this work. The key observation here is that weighted complex projective \(2\)-designs form a informationally complete POVM and are optimal for classical shadow tomography \cite{Innocenti2023ShadowPOVM}. This still leaves open the problem of efficient classical simulation of the snapshot states for qudit systems. This will either require the use of magic in the case of prime dimensional qudits, see e.g., some recent results in Refs. \cite{chen_nonstabilizerness_2024, mao_magic_2024}, or the use of random circuit approaches, see the discussion in \cite{akhtar_dual-unitary_2025}.

\prlsection{Information scrambling in qudits} Strongly-interacting quantum systems can spread local information quickly through the nonlocal degrees of freedom of a system. This property has been termed information scrambling, and is often quantified via the out-of-time-ordered correlator (OTOC) \cite{ovchinnikov_quasiclassical_nodate,kitaev_simple_2015,maldacena_bound_2016,lashkari_towards_2013,roberts_diagnosing_2015,polchinski_spectrum_2016,mezei_entanglement_2017,roberts_chaos_2017,hayden_black_2007,shenker_black_2014,swingle_unscrambling_2018,xu_swingle_tutorial_2022}. A common method to quantify the degree of chaoticity of an ensemble of unitaries is to ask if it can imitate the four-point OTOC for Haar-random unitaries. It turns out that unitary \(2\)-designs are sufficient for spoofing Haar-random unitaries via OTOCs. However, in the qudit case, the absence of unitary 2-groups means that one cannot spoof such dynamics. Of course using random quantum circuits, one can generate approximate unitary designs of any order (given sufficient depth of the circuit). However, here we are interested in studying exact group designs.

In the qubit case, OTOCs cannot tell apart unitary 2-designs generated via say the Clifford group from truly Haar-random unitaries. This has inspired the use of the variance of the OTOC to detect magic. The variance is sensitive to a unitary 4-design, which necessary requires magic. In fact, in qubit systems, any group of unitaries that forms a unitary 4-design must be universal. However, in the non-prime-power-dimensional system, OTOCs already can distinguish Haar-random unitaries from 2-designs. It will be interesting to understand the utility of OTOCs to probe the design structure of subgroups of the unitary group.

\prlsection{Randomized benchmarking for qudits} The standard Clifford RB scheme can no longer generate a unitary \(2\)-design if the dimension is not a power of a prime. As a result, the RB decay signal will be a linear combination of multiple exponentials, which is not ideal from the point of view of reliably fitting the data. However, the framework of character RB \cite{claes_character_2021} and filtered RB \cite{Helsen2022} are still useful to benchmark groups of unitaries. In particular, the results in Ref. \cite{Helsen2022} tell us that for filtered RB, if the measurement POVM utilized in the postprocessing of the RB data forms a complex projective 3-design, then the sample complexity of the entire RB method (collecting experimental data and its postprocessing to extract the exponentials) can be achieved with a number of samples that is independent of the dimension of the underlying Hilbert space. This result holds for arbitrary benchmarking groups \(G\) and is a sufficient condition (but not necessary). Therefore, even for non-prime-power dimensional qudit RB, our weighted state \(3\)-design guarantees scalability of the RB experiments, which is crucial to its applicability.

Moreover, given the considerable limitations that group-designs have for qudit systems, perhaps the random circuit based designs that are introduced in Ref. \cite{Helsen2022} are perhaps most useful from this point of view. It will be interesting to contrast this with the Clifford character RB that we have introduced.

\prlsection{Quantum machine learning for qudits} A prominent result in quantum machine learning states that if the parametrized quantum circuit used in training is sampled from a unitary \(2\)-design then the cost function necessarily runs into barren plateaus \cite{mcclean_barren_2018}. This follows from a simple application of the \textit{concentration of measure} ideas, along with the observation that the variance of the cost function is a polynomial of degree two and hence a unitary 2-design is sufficient to imitate this behavior. However, for qudit systems that are not a prime-power, the Clifford group does not form a unitary 2-design. Can we use this to say something about the learning/trainability of quantum circuits?

\subsection{Open Questions}
There are many open problems here that we list for the reader.
\begin{enumerate}
\item Can we generalize the weighted state t-design construction in an efficient way to multiqudit systems?

\item Can we use weighted unitary designs for quantum error correction, such as those originating from twisted unitary \(t\)-designs \cite{kubischta_quantum_2024}? In general, finding relationships between the twisted unitary t-group constructions and weighted t-designs would be useful.

\item Can we formalize the connection between fractional designs and fractional calculus, see e.g., Ref. \cite{jones_riemann-liouville_1991}?

\item Can we use the projection trick to define weighted unitary designs (as opposed to state designs)? Naively, this results in isometries that generate a symmetric projector. Since isometries have a quantum channel completion, perhaps this can be used to make a connection with quantum channel designs?

\item It will be interesting to see if fractional t-designs with \(1<t<2\) have any advantage over state 2-designs, perhaps in sampling?

\item Can we formalize the elements of a minimal state 2-design from the perspective of spin systems, namely \(SU(2)\) covariant rotations and spin squeezing? Do these suffice to generate state 2-designs or does it necessitate higher-order spin interactions?
\end{enumerate}

\subsection{Discussion}
Unitary designs have emerged as one of the most versatile tools in quantum information theory in the last two decades. They play a crucial role in a variety of tasks such as twirling of noise, which allows us to easily estimate quantum error correction thresholds; randomized benchmarking, that allows us to estimate average gate fidelities in a low-cost and scalable way; as well as classical shadow tomography, that allows us to learn many physical properties of exponentially large quantum states from only a few measurements. In a completely different context, they also find utility as minimal models of information scrambling as well as proofs of exponential separations in oracular complexity between quantum and classical computations. A large part of this effort has been focused towards efficient construction of approximate unitary designs. Instead, here we chose to focus on exact unitary designs in single qudit systems to highlight the immediate challenges one faces as we go beyond the qubit picture. 

To circumvent the lack of unitary \(t\)-groups beyond prime-power dimensions, we introduced a systematic method to construct weighted qudit \(2\)- and \(3\)-designs. This allows us to generalize standard qubit-based primitives to qudit architectures with little-to-no overhead. This also highlights that the "extra room" provided by qudit devices is not without its own challenges. With the rise of qudit platforms such as high-spin nuclei, cavity-QED systems, as well as photonics-based, this is a timely problem.

Our work also highlights the necessity of \textit{spin squeezing} for state \(2\)-designs in spin qudits, in analogy to the necessity of entanglement for multiqubit state \(2\)-designs. This strengthens the connections between the physics of high-dimensional spins and continuous-variable optical systems. In particular, while it is well-known result that (infinite-dimensional) optical coherent states cannot form a \(2\)-design; we show that so is true for spin-coherent states. However, the nature of these failures are completely different: for spin-coherent states this is not due to the "vanishing of the centroid" or divergence of integrals, rather, simply due to the representation theory of \(SU(2)\). The analogy between spin and optical coherent states has been particularly useful for quantum error correction, such as the development of spin-GKP codes, and we believe our work will only push this further. In particular, the proof that spin-GKP codewords form a state 2-design, just like optical GKP states is another addition to this framework.

We also proved bounds on the length of circuits needed to generate \(t\)-designs for both spin-qudits and cavity-QED systems, using random sequences of SNAPs and displacements, the native gate set in these architectures. And finally, we introduced a notion of "fractional \(t\)-designs," inspired from fractional calculus, which provides an alternative quantification of \(\epsilon\)-approximate designs. One of our key contributions is the introduction of a Clifford character randomized benchmarking scheme that can be utilized in any qudit dimension. This is one of the few schemes where a quantum information primitive works for a qudit dimension that is not a prime-power. It will be interesting to utilize these ideas to inspire novel QEC and practical applications for qudit systems, as well as unveiling their rich mathematical structure.

\section*{Acknowledgments}
N.A. is a KBR employee working under the Prime Contract No. 80ARC020D0010 with the NASA Ames Research Center. J.M. is thankful for support from NASA Academic Mission Services, Contract No. NNA16BD14C. This work was partly supported by the U.S. Department of Energy, Office of Science, National Quantum Information Science Research Centers, Superconducting Quantum Materials and Systems Center (SQMS) under contract number DE-AC02-07CH11359 through NASA-DOE interagency agreement SAA2-403602. A.M. acknowledges support of an Australian Research Council Laureate Fellowship (project no. FL240100181). The authors are grateful for the collaborative agreement between NASA and CQC2T. The United States Government retains, and by accepting the article for publication, the publisher acknowledges that the United States Government retains, a nonexclusive, paid-up, irrevocable, worldwide license to publish or reproduce the published form of this work, or allow others to do so, for United States Government purposes.

\bibliography{zotero,refs}

\begin{center}
\noindent\rule[0.5ex]{0.1\linewidth}{0.3pt}\rule[0.5ex]{0.1\linewidth}{0.8pt}\rule[0.5ex]{0.3\linewidth}{1.5pt}\rule[0.5ex]{0.1\linewidth}{0.8pt}\rule[0.5ex]{0.1\linewidth}{0.3pt}
\end{center}

\clearpage
\appendix

\renewcommand{\thepage}{A\arabic{page}}
\setcounter{page}{1}
\renewcommand{\thesection}{A\arabic{section}}
\setcounter{section}{0}
\renewcommand{\thetable}{A\arabic{table}}
\setcounter{table}{0}
\renewcommand{\thefigure}{A\arabic{figure}}
\setcounter{figure}{0}
\renewcommand{\theequation}{A\arabic{equation}}
\setcounter{equation}{0}

\section{Probability density for $d=2$ Haar random unitary eigenvalues \label{sect:eigen-density}}
It is known that the eigenvalues ($e^{i\phi_0}, e^{i\phi_1}$) of a Haar random matrix in dimension 2 are distributed according to density $p(\phi_0, \phi_1) = \frac{1}{4\pi^2} (1-\cos(\phi_1 - \phi_0))$ \cite{tracy_introduction_1992}. We are interested in the density for the difference $\Delta=|\phi_1-\phi_1|$ however. This can be computed by integrating $p(\phi_0, \phi_1)$ over lines of constant $\Delta$, an example of which is shown in Fig.~\ref{fig:grid}. Since the density function $p(\phi_1, \phi_0) \propto (1-\cos \Delta)$ is constant on such lines, the problem is purely geometric. We can see in Fig.~\ref{fig:grid} the length each line is $\sqrt{2}(2\pi - \Delta)$, and so the probability density for the difference is: $p(\Delta) \propto (2\pi-\Delta)(1-\cos\Delta)$. Normalization gives: 
\begin{align}
    p(\Delta) = \frac{1}{\pi}\left(1 - \frac{\Delta}{2\pi}\right)(1-\cos \Delta).
\end{align}

\begin{figure}[hb]
    \centering
    \includegraphics[width=0.4\linewidth]{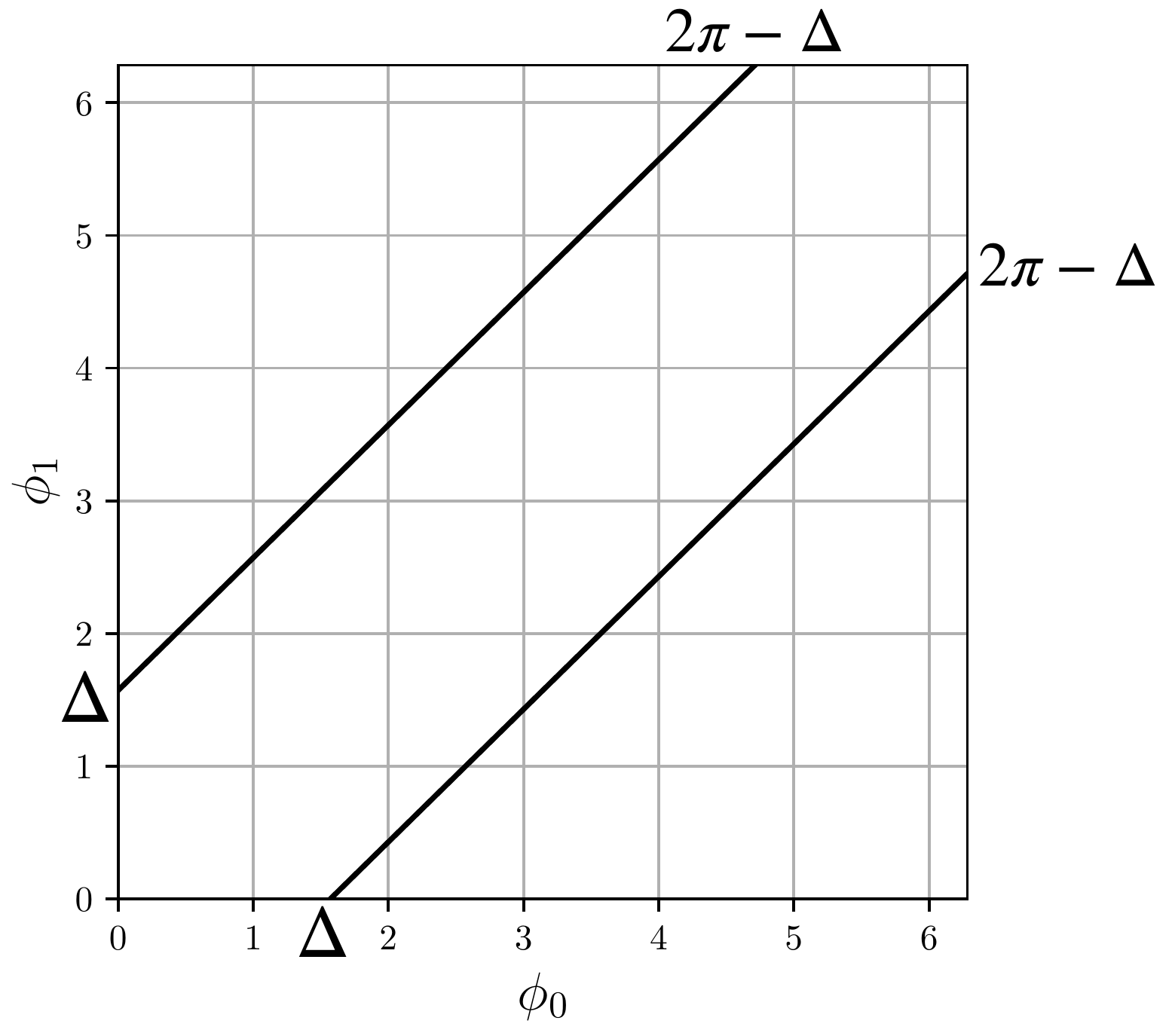}
    \caption{Lines of constant $|\phi_1-\phi_0|=\Delta$.}
    \label{fig:grid}
\end{figure}

\section{Evaluation of Eq.~\eqref{eq:d_2_integral}\label{sect:integral_proof}}
We wish to evaluate
\begin{align}
    I:=\frac{2^t}{\pi} \int_0^{2\pi} (1+\cos \Delta)^t (1-\cos \Delta)\left(1 - \frac{\Delta}{2\pi}\right) d\Delta.
\end{align}
In order to do so, we split it into two pieces:
\begin{align}
    I_0:= \pi \int_0^{2\pi} (1+\cos \Delta)^t (1-\cos \Delta) d\Delta, \;\; I_1:= \int_0^{2\pi} (1+\cos \Delta)^t (1-\cos \Delta)\Delta\, d\Delta,
\end{align}
where $I = \frac{2^t}{\pi^2}(I_0 - I_1/2)$.

We first observe that $I_0=I_1$, which we can see via:
\begin{align}
    I_1-I_0 = \int_0^{2\pi}(\Delta-\pi)(1+\cos \Delta)^t (1-\cos \Delta) d\Delta = \int_{-\pi}^\pi u (1 - \cos u)^t (1+ \cos u) du = 0.
\end{align}
The first step substitutes $u=\Delta-\pi$, and the last step follows by noting the integrand is odd, integrated over $[-\pi,\pi]$. Therefore, the integral of interest is $I = \frac{2^t}{2\pi^2}I_0$. 

We now focus on $I_0$, which we can rewrite using $1+\cos \Delta =2 \cos^2 (\Delta/2)$, $1-\cos \Delta = 2 \sin^2(\Delta/2)$:
\begin{align}
    I_0 = 2^{t+1}\pi \int_0^{2\pi} \left(\cos \frac{\Delta}{2}\right)^{2t}\left(\sin \frac{\Delta}{2}\right)^2 d\Delta = 2^{t+1}\pi \int_0^\pi (\cos u)^{2t} (\sin u)^2 du.
\end{align}
Noting that the integrand is symmetric about $\pi/2$, we get
\begin{align}
    I_0 = 2^{t+2}\pi \int_0^{\pi/2} (\cos u)^{2t} (\sin u)^2 du = 2^{t+1}\pi B\left(\frac{3}{2}, t + \frac{1}{2} \right),
\end{align}
where $B$ is the Beta function (valid for $\mathrm{Re}(t)>-1/2$).

Using standard identities for the Beta and Gamma functions, we get:
\begin{align}
    I_0 = 2^{t+1}\pi^{3/2} \frac{\Gamma(t+1/2)}{\Gamma(t+2)}.
\end{align}

This can be simplified to the final result using the \emph{Legendre duplication formula}:
\begin{align}
    I = \frac{2^t}{2\pi^2}2^{t+1}\pi^{3/2} \frac{\Gamma(t+1/2)}{\Gamma(t+2)}  = \frac{\Gamma(2t+1)}{\Gamma(t+1)\Gamma(t+2)}.
\end{align}

\section{Qubit unitary 5-design\label{sec:5-design-construction}}
In \cite{gross_evenly_2007} (Table III) and \cite{ketterer_entanglement_2020} (Appendix A) the details for a qubit unitary 5-design are given. However, in both references we believe there to be a mistake in the provided generators, and so, for completeness, we provide corrected versions here.

The group $G:=SL(2, \mathbb{F}_5)$ is the unitary group of $2\times 2$ determinant 1 matrices over $\mathbb{F}_5$. The group order is $|G|=120$, and it is isomorphic to the binary icosahedral ($2I$) group. First we start by providing corrected generators from the above papers (apparently a simple typo in each). We define $\omega = e^{2\pi i / 15}$.

First, it is apparent there is a mistake in the generators provided in \cite{gross_evenly_2007} (Table III), since one of the matrices does not have a determinant of absolute value one. A correct set of generators is:
\begin{align}
        & \left(\begin{array}{cc}
            -1 & 0 \\
            0 & -1
        \end{array} \right),\;\;
        \left(\begin{array}{cc}
            -\omega^{11}-\omega^{14} & -\omega^{11}-\omega^{14} \\
            \omega^{10} & -\omega - \omega^4
        \end{array} \right),\;\;\\
        & \left(\begin{array}{cc}
            -\omega-\omega^2-\omega^4-\omega^8 - 2(\omega^{11} + \omega^{14}) & \omega^{6} + \omega^{9} \\
            \omega^{11}-\omega^{14} & -\omega^{5}
        \end{array} \right).
\end{align}
These are identical to those in \cite{gross_evenly_2007}, apart from the third matrix, in the top left entry, where the sign in front of $\omega^{14}$ is flipped. These all have determinant one, and it can be checked they generate a group of size 120. 
We also note these can be used to define a set of $2I$ generators. Label the final two matrices above $g_1, g_2$ (the first matrix listed is redundant). Then $s=g_1 g_2^3 g_1$, and $t=g_1^2g_2^2$ satisfy $(st)^2 = s^3 = t^5$, which are known $2I$ generator relations.

Similarly for \cite{ketterer_entanglement_2020} (Appendix A), a corrected set of generators is:
\begin{align}
        & \left(\begin{array}{cc}
            -1 & 0 \\
            0 & -1
        \end{array} \right),\;\;
        \left(\begin{array}{cc}
            -\omega^{11}-\omega^{14} & \omega^6 + \omega^9 \\
            -\omega-\omega^2-\omega^4-\omega^7-\omega^8-\omega^{13} & \omega^{11}+\omega^{14}
        \end{array} \right),\;\;\\
        & \left(\begin{array}{cc}
            \omega^{10} & \omega^{11} + \omega^{14} \\
            -\omega^{2}-\omega^8 & -\omega^{10}
        \end{array} \right),
        \left(\begin{array}{cc}
            0 & \omega^5 \\
            -\omega^{10} & -\omega^3-\omega^{12}
        \end{array} \right).
\end{align}
Compared to \cite{ketterer_entanglement_2020} the last generator differs in the final entry. Also note that all four of these are not required, and e.g. the final three suffice to generate the group.
One can also define a set of $2I$ generators based on these. Label the final two generators above $k_2, k_3$. Then $s=k_2^3 k_3^2, t=k_3^3$ satisfies $(st)^2 = s^3 = t^5$, as required.

The above generators can be used to construct the full group of size 120, $\{g_k\}_{k=1}^{120}$, by matrix multiplication. Whilst these matrices are not unitary, the \emph{unitarian trick} can be employed to find a similarity transform to obtain a unitary representation of $G$ \cite{ketterer_entanglement_2020}:
\begin{align}
    U_k = \sqrt{P} g_k \sqrt{P}^{-1}, \,\, P=\frac{1}{|G|}\sum_{k=1}^{|G|}g_k^\dag g_k = 3 \left(\begin{array}{cc}
       1  &  \frac{1}{2}(1 + i \sqrt{5/3})\\
       \frac{1}{2}(1 - i \sqrt{5/3})  & 1
    \end{array}\right).
\end{align}

The $P$ matrix obtained from the Ref.~\cite{gross_evenly_2007} generators is:
\begin{align}
    P = \frac{1}{2}\left(\begin{array}{cc}
        9 + 3\sqrt{5} & 6 + 3\sqrt{5} + i\sqrt{27+12\sqrt{5}} \\
        6 + 3\sqrt{5} - i\sqrt{27+12\sqrt{5}} & 9 + 3\sqrt{5}
    \end{array}\right).
\end{align}

\section{Character theory and Schur polynomials}\label{sec:character appendix}

In this section, we will give some tools for decomposing representations of $SU(m)$ and related groups into irreps. We will do this by heavily using the Lie group-Lie algebra correspondence. 
The condition we will need is that the complexified Lie algebra of our group $G$ is isomorphic to $\mathfrak{sl}(m, \mathbb{C})$ or $\mathfrak{gl}(m, \mathbb{C})$, where $m\leq d$. 
This covers various cases, including $G=SU(2)$ as in the case of spin-coherent states, $G=SU(1,1)$ as in $SU(1,1)$ interferometry, and $G=U(m)$ as in the case of the linear optical unitary transformations of $n$ photons in $m$ modes. 

We begin by recalling some representation theory for the Lie algebra $\mathfrak{g} = \mathfrak{gl}(m,\mathbb{C})$. This is a special case of the representation theory of reductive Lie algebras over $\mathbb{C}$, and all of this can be generalized to that setting. 
We will later discuss the $m=2$ case, corresponding to $SU(2)$, in more detail. 
We omit a detailed introduction to the representation theory of Lie algebras, referring the reader to the classic exposition \cite{humphreys2012introduction}. The notions of equivalence, irreducibility, and so on are similar to the group case discussed in Section~\ref{sec:rep theory}. 

The (finite-dimensional) irreps of $\mathfrak{g}$ (up to equivalence) are in bijective correspondence with weakly decreasing integer tuples $\mu\in \mathbb{Z}^m$, called their \emph{highest weight}. We call the irrep corresponding to $\mu$ by $V_\mu$. 
Finite-dimensional representations $W$ of $\mathfrak{g}$ may be assigned a \emph{formal character} $\chi(W)\in\mathbb{Z}[x_1^{\pm 1}, \dots, x_m^{\pm 1}]$, a polynomial in $x_1^{\pm 1}, \dots, x_m^{\pm 1}$ with integer coefficients, satisfying the following properties: 
\begin{align}\label{eq:character relations}
    \chi(W_1\oplus W_2) = \chi(W_1) + \chi(W_2), \,\,\, \chi(W_1\otimes W_2) = \chi(W_1)\chi(W_2).
\end{align}
We have $\chi(V_\mu) = s_\mu(x_1, \dots, x_m)$, the Schur polynomial in $m$ variables. 
Notably, $\{s_\mu(x_1, \dots, x_m): \mu_1\geq \mu_2\geq\dots\geq\mu_m, \mu_i\in\mathbb{Z}\}$ is a $\mathbb{Z}$-basis for the subspace of symmetric polynomials in $\mathbb{Z}[x_1^{\pm 1}, \dots, x_m^{\pm 1}]$. 
By \eqref{eq:character relations}, we see that decomposing a representation $W$ of $\mathfrak{g}$ into a direct sum of irreducibles is equivalent to decomposing $\chi(W)$ into a nonnegative integer linear combination of Schur polynomials $s_\mu(x_1, \dots, x_m)$. The coefficients of the $s_\mu$ are precisely the multiplicities of the irreps $V_\mu$. 

Note that for any $c\in\mathbb{Z}$ and weakly decreasing $\mu\in\mathbb{Z}^r$, Schur polynomials satisfy
\begin{align}
    s_{(\mu_1 + c, \dots, \mu_m + c)}(x_1, \dots, x_m) = (x_1\cdots x_m)^c s_\mu(x_1, \dots, x_m).
\end{align}
Then, up to adjusting by powers of $x_1\cdots x_m$, it suffices to consider Schur polynomials $s_\mu$ where $\mu$ is a weakly decreasing $m$-tuple of \emph{nonnegative} integers. In this case, we have $s_\mu(x_1, \dots, x_m)$ a genuine polynomial in the $x_i$ (no inverses required). 
This is the setting in which many results about Schur polynomials are stated, such as those of \cite{macdonald1998symmetric}. 
We note that the factors of $x_1\cdots x_m$ do not seriously affect any of the results. 
In fact, the representation theory of $\mathfrak{sl}(m,\mathbb{C})$ is determined only by the differences $\mu_i - \mu_{i+1}$; the remaining degree of freedom only affects the scalar by which the identity matrix $I\in\mathfrak{g}$ acts, which has little effect on the representation theory. 

Now let $G\subseteq U(d)$ be a compact subgroup, and assume that the natural representation of $G$ on $\mathcal{V} = \mathbb{C}^d$ is irreducible. Further assume that the complexified Lie algebra of $G$ is isomorphic to $\mathfrak{sl}(m, \mathbb{C})$ or $\mathfrak{gl}(m, \mathbb{C})$, where $m\leq d$. 
In particular, assigning the central element $I\in\mathfrak{gl}(m, \mathbb{C})$ to act by an arbitrary integer if necessary, we obtain an irrep $(\rho,\mathcal{V})$ for $\mathfrak{g}=\mathfrak{gl}(m, \mathbb{C})$. 
Let $\lambda$ be the corresponding highest weight, so that we may identify $\mathcal{V} = V_\lambda$. 
For applications to randomized benchmarking (see Section~\ref{sec:character rb su2}), we will want to decompose the representation $V_\lambda\otimes V_\lambda^*$ into irreps. 
For this purpose, we note that $V_\lambda^*\cong V_{-w_0\lambda}$, where $w_0$ is the permutation that reverses the indices of $\lambda$ (so that $w_0 \lambda= (\lambda_m, \dots, \lambda_1)$). 
From \eqref{eq:character relations}, we may determine the relevant irreps and multiplicities by decomposing
\begin{align}
    s_\lambda(x_1, \dots, x_m) s_{-w_0\lambda}(x_1, \dots, x_m) = \sum_\nu c_\nu s_\nu(x_1, \dots, x_m).
\end{align}
These coefficients $c_\nu$ may be exactly calculated by the \emph{Littlewood-Richardson rule}. 
We will consider a special case: $\lambda = (k, 0, \dots, 0)$, where $k$ is a nonnegative integer. 
We have $s_\lambda = h_k$, the \emph{complete symmetric function}, and the Littlewood-Richardson rule simplifies to the \emph{Pieri rule}. (See \cite{macdonald1998symmetric}, Chapter I, (5.14).)
In particular, we have $-w_0\lambda = (0, \dots, 0, -k)$, and 
\begin{align}
    \chi(V_\lambda\otimes V_\lambda^*) &= \chi(V_\lambda)\chi(V_{-w_0\lambda})
    \\&=s_\lambda(x_1, \dots, x_m) s_{-w_0\lambda}(x_1, \dots, x_m) 
    \\&= (x_1\cdots x_m)^{-k} s_{(k,0, \dots, 0)}(x_1, \dots, x_m)s_{(k,\dots, k, 0)}(x_1, \dots, x_m) 
    \\&= (x_1\cdots x_m)^{-k} \sum_{\ell=0}^{k} s_{(2k-\ell, k, \dots, k, \ell)}(x_1, \dots, x_m)
    \\&= \sum_{\ell=0}^{k} s_{(k-\ell, 0, \dots, 0, \ell-k)}(x_1, \dots, x_m)
    \\&= \chi\left(\bigoplus_{\ell=0}^k V_{(k-\ell, 0, \dots, 0, \ell-k)} \right).
\end{align}
In particular, we have: 
\begin{proposition}\label{prop:multiplicity 1}
    Let $G\subseteq U(d)$ be a compact subgroup with complexified Lie algebra isomorphic to $\mathfrak{sl}(m, \mathbb{C})$ or $\mathfrak{gl}(m, \mathbb{C})$, where $m\leq d$. 
    Assume $\mathcal{V} = \mathbb{C}^d$ is an irrep of $G$ with highest weight $\lambda = (k, 0, \dots, 0)$. 
    Then $\textnormal{End}(\mathcal{V}) = V\otimes V^*$ decomposes into $k+1$ inequivalent irreps for $G$, each of multiplicity $1$. 
\end{proposition}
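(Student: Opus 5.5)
The plan is to pass from the group $G$ to its complexified Lie algebra and then read off the decomposition from the character computation just carried out. Since $G$ is compact, every finite-dimensional representation is completely reducible. Restricting to the identity component $G_0$, the $G_0$-subrepresentations of a representation are exactly the subrepresentations of its complexified Lie algebra $\mathfrak{g}_\mathbb{C}\cong\mathfrak{sl}(m,\mathbb{C})$ or $\mathfrak{gl}(m,\mathbb{C})$. So we first decompose $\mathcal{V}\otimes\mathcal{V}^*$ as a $\mathfrak{gl}(m,\mathbb{C})$-module. In the $\mathfrak{sl}$ case, we let the central element act by an integer, as the paper allows. This changes no decomposition, because the center acts by opposite scalars on $\mathcal{V}$ and $\mathcal{V}^*$, hence trivially on the tensor product.

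\textbf{Step 1: the character identity.} By \eqref{eq:character relations}, $\chi(\mathcal{V}\otimes\mathcal{V}^*)=s_\lambda\, s_{-w_0\lambda}$ with $\lambda=(k,0,\dots,0)$. Factor out $(x_1\cdots x_m)^{-k}$ to rewrite $s_{-w_0\lambda}$ as $(x_1\cdots x_m)^{-k}s_{(k,\dots,k,0)}$. Then apply the Pieri rule to $h_k\cdot s_{(k^{m-1},0)}$: the allowed horizontal strips of size $k$ add $k-\ell$ boxes to the first row and $\ell$ boxes to the last row, for $0\le\ell\le k$. This gives the displayed expansion $\sum_{\ell=0}^k s_{(k-\ell,0,\dots,0,\ell-k)}$.

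\textbf{Step 2: irreducibility and inequivalence.} The highest weights $(k-\ell,0,\dots,0,\ell-k)$ are pairwise distinct dominant weights. Since the Schur polynomials form a basis, $\mathcal{V}\otimes\mathcal{V}^*$ is a direct sum of $k+1$ pairwise inequivalent $\mathfrak{g}$-irreps, each of multiplicity $1$. They remain inequivalent after restricting to $\mathfrak{sl}$, because their differences $\mu_i-\mu_{i+1}$ differ. (In the degenerate case $m=1$ the statement is handled directly.)

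\textbf{Step 3: passing back to $G$.} A $G$-invariant subspace is $G_0$-invariant, so each $G$-irrep is a sum of $\mathfrak{g}$-isotypic pieces. Since the $\mathfrak{g}$-decomposition is multiplicity free, each $\mathfrak{g}$-irrep $U_\ell$ is a canonical subspace: it is the image of the $\mathfrak{g}$-equivariant projection. I expect this step to be the main obstacle when $G$ is disconnected, since elements outside $G_0$ could permute the $U_\ell$. The plan is to show that each $U_\ell$ is in fact $G$-stable, using the paper's standing assumption that $G$ acts on $\mathcal{V}$ through a representation of the $\mathfrak{gl}(m)$ type. For the intended examples $G=SU(2)$ and $G=U(m)$ the group is connected, and I would state this reduction explicitly. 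Alternatively, one can note that conjugation by any $g\in G$ preserves $\mathcal{V}\otimes\mathcal{V}^*$ and maps $U_\ell$ to a $\mathfrak{g}$-irrep of the same dimension. The dimensions of the $U_\ell$ strictly increase with $k-\ell$ by the Weyl dimension formula, so $gU_\ell=U_\ell$. This dimension argument is what I would try first, since it avoids any connectedness hypothesis.

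With each $U_\ell$ shown to be $G$-stable, $G$-irreducible, and pairwise $G$-inequivalent (they are already inequivalent for $G_0$), the proposition follows.
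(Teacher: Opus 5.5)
Your proposal is correct and follows the paper's route. Like the paper, you expand $s_\lambda s_{-w_0\lambda}=(x_1\cdots x_m)^{-k}h_k\,s_{(k,\dots,k,0)}=\sum_{\ell=0}^{k}s_{(k-\ell,0,\dots,0,\ell-k)}$ by the Pieri rule and read off $k+1$ distinct highest weights.

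Your Step 3 adds something the paper leaves implicit: the passage from $\mathfrak{g}$ back to a possibly disconnected $G$. The dimension argument you favour works. Because $G_0$ is normal, each $gU_\ell$ is again a $\mathfrak{g}$-irreducible subspace, hence one of the canonical summands $U_j$. The Weyl dimensions of $V_{(a,0,\dots,0,-a)}$ strictly increase in $a$, which forces $gU_\ell=U_\ell$. Each $U_\ell$ is then $G$-irreducible, and the $U_\ell$ are pairwise $G$-inequivalent because they are already inequivalent for $G_0$.

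One small correction concerns $m=1$. That case is not ``handled directly''; it must be excluded. There the character product is $x_1^{k}x_1^{-k}=1$, so $\textnormal{End}(\mathcal{V})$ is a single trivial irrep, and the count $k+1$ fails for $k\geq 1$. You should assume $m\geq 2$, which the paper's formulas implicitly require as well.
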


We now consider the special case of spin-coherent states: take $S$ a nonnegative half-integer and recall 
\begin{align}
    \mathcal{H}_S = \mathrm{span} \{ | S,k \rangle ~|~ k = -S, \cdots, S \}.
\end{align}
As discussed in Section~\ref{subsec:spin-coherent}, $\mathcal{H}_S$ is an irrep of $G=SU(2)$ with spin $S$. 
There is a correspondence between spin and highest weight; namely, an irrep of $\mathfrak{g} = \mathfrak{gl}(2, \mathbb{C})$ with highest weight $(a,b)$ has spin $S=(a-b)/2$ as an irrep of $G$. 
The irreps $V_{(a+c,b+c)}$ for $c\in\mathbb{Z}$ are all equivalent as $G$-representations, so we may choose the value of $c$ freely without affecting the results. 
Then we will view $\mathcal{H}_S$ as an irrep of $\mathfrak{g}$ with highest weight $(2S,0)$. 
From the above calculations, we see that
\begin{align}
     \mathcal{H}_S \otimes \mathcal{H}_S^* \cong \bigoplus_{\ell=0}^{2S}V_{(2S-\ell, \ell-2S)} \cong \bigoplus_{\ell=0}^{2S}\mathcal{H}_{2S-\ell}. 
\end{align}
In particular, there are $2S+1$ inequivalent irreps, all of multiplicity $1$. 

Note also that $\mathcal{H}_S$ and $\mathcal{H}_S^*$ are equivalent as representations of $SU(2)$, with isomorphism given by $\ket{S,k}\mapsto\bra{S,-k}$. 
Thus the above may be equivalently viewed as giving a decomposition of $\mathcal{H}_S\otimes \mathcal{H}_S$ into irreps. In the general case, $m>2$, $\mathcal{H}_S\otimes \mathcal{H}_S$ and $\mathcal{H}_S\otimes \mathcal{H}_S^*$ are not necessarily isomorphic, but they will always decompose into the same number of irreps, counting multiplicity (see the proof of Theorem B.1 in \cite{Saied2024}).

\begin{center}
\noindent\rule[0.5ex]{0.1\linewidth}{0.3pt}\rule[0.5ex]{0.1\linewidth}{0.8pt}\rule[0.5ex]{0.3\linewidth}{1.5pt}\rule[0.5ex]{0.1\linewidth}{0.8pt}\rule[0.5ex]{0.1\linewidth}{0.3pt}
\end{center}

\end{document}